\newcommand{\remove}[1]{}
\newtheorem{definition}{Definition}
\newtheorem{theorem}{Theorem}
\newtheorem{corollary}{Corollary}
\newtheorem{lemma}{Lemma}
\newtheorem{observation}{Observation}
\newtheorem{proposition}{Proposition}
\newif\ifrobocza
\newif\ifbibtex
\newcommand{\tj}[1]{{\color{blue}{#1}}}
\newcommand{\todo}[1]{}
\newcommand{\tj}[1]{#1}
\newcommand{\prob}{\text{Prob}}
\newcommand{\cL}{{\mathcal L}}
\DeclarePairedDelimiter\ceil{\lceil}{\rceil}
\newcommand{\level}{\textit{level}}
\newcommand{\Lev}{\textit{Lev}}
\newcommand{\executora}{{Executor Algorithm\ }}
\newcommand{\executoraNOS}{{Executor Algorithm}}
\newcommand{\levelled}{levelled\ }
\newcommand{\Levelled}{Levelled\ }
\newcommand{\parent}{\textit{p}}
\newcommand{\gt}{\textit{$2$-height respecting tree}}
\newcommand{\thrt}{{2-HRT }}
\newcommand{\thrtNOS}{{2-HRT}}
\newcommand{\fasta}{Fast Broadcast Algorithm\ }
\newcommand{\fastaNOS}{Fast Broadcast Algorithm}
\newcommand{\fastesta}{Express Broadcast Algorithm\ }
\newcommand{\fastestaNOS}{Express Broadcast Algorithm}
\newcommand{\DOM}{\mathbf{DOM}}
\newcommand{\INF}{\mathbf{INF}}
\newcommand{\UNINF}{\mathbf{UNINF}}
\newcommand{\commentt}[1]{}
\begin{document}

\def\thefootnote{\fnsymbol{footnote}}

\title{{\bf 

Optimal-Length Labeling Schemes for Fast Deterministic Communication in Radio Networks}}

\author{Adam Ga\'{n}czorz\footnotemark[1]
	\and Tomasz Jurdzi\'{n}ski\footnotemark[1] 
	\and  Andrzej Pelc\footnotemark[2]
}

\footnotetext[1]{Institute of Computer Science, University of Wroc{\l}aw, Poland. 
	Emails: {\tt \{adam.ganczorz,tju\}@cs.uni.wroc.pl}. 
	Supported by the Polish National Science Centre 
	grant 2020/39/B/ST6/03288.}

\footnotetext[2]{
	D\'epartement d'informatique, Universit\'e du Qu\'ebec en Outaouais, Gatineau,
	Qu\'ebec J8X 3X7, Canada. {\tt pelc@uqo.ca}. Partially supported by NSERC discovery grant RGPIN-2024-03767
	and by the Research Chair in Distributed Computing at the
	Universit\'e du Qu\'ebec en Outaouais.}

\maketitle

\thispagestyle{empty}

\begin{abstract}
	We consider two fundamental communication tasks in arbitrary radio networks: broadcasting (information from one source has to reach all nodes) and gossiping (every node has a message and all messages have to reach all nodes). Nodes are assigned labels that are (not necessarily different) binary strings. Each node knows its own label and can use it as a parameter in the same deterministic algorithm. The length of a labeling scheme is the largest length of a label. The goal is to find labeling schemes of asymptotically optimal length for the above tasks, and to design fast deterministic distributed algorithms for each of them, using labels of optimal length.

 Our main result concerns broadcasting.
 We show the existence of a labeling scheme of constant length that supports broadcasting in time $O(D+\log^2 n)$, where $D$ is the diameter of the network and $n$ is the number of nodes. This broadcasting time is an improvement over the best currently known $O(D\log n + \log^2 n)$ time of broadcasting with constant-length labels, due to  Ellen and Gilbert (SPAA 2020). It also matches the optimal broadcasting time in radio networks of known topology. Hence, we show that appropriately chosen node labels of constant length permit to achieve, in a distributed way, the optimal centralized broadcasting time. This is, perhaps, the most surprising finding of this paper. We are able to obtain our result thanks to a novel methodological tool of propagating information in radio networks, that we call a 2-height respecting tree.
 
	Next, we apply our broadcasting algorithm to solve the gossiping problem. 
	We get a gossiping algorithm working in time $O(D + \Delta\log n + \log^2 n)$, using a labeling scheme of optimal length $O(\log \Delta)$, where $\Delta$ is the maximum degree. Our time is the same as the best known gossiping time in radio networks of known topology.

	\vspace*{0.5cm}
	
	\noindent
{\bf keywords: } radio network, distributed algorithms, algorithms with advice, labeling scheme, broadcasting, gossiping
	
\end{abstract}

\pagebreak

\renewcommand{\thefootnote}{\arabic{footnote}}

\section{Introduction}
\label{s:intro}

We consider two fundamental communication tasks often occurring in networks. In {\em broadcasting}, one node, called the {\em source}, has a message that must reach all other nodes. In gossiping, every node has a messasge and all messages have to reach all nodes.

We consider the above tasks in radio networks, modeled as undirected connected graphs. It is well known that, in the absence of labels, these communication tasks are infeasible in many networks, due to interferences. Hence we assume that nodes are assigned labels that are (not necessarily different) binary strings. Each node knows its own label and can use it as a parameter in the same deterministic algorithm. The length of a labeling scheme is the largest length of a label. The goal is to find labeling schemes of asymptotically optimal length for the above communication tasks, and to design fast deterministic distributed algorithms for each of them, using labels of optimal length.

\subsection{The model and the problem}

We consider radio networks modeled as simple undirected connected graphs. 
Throughout this paper, $G=(V,E)$ denotes the graph modeling the network, $n$ denotes the number of its nodes, $D$ its diameter, and $\Delta$ its maximum degree. 
At the cost of a small abuse of notation, we sometimes use $D$ to denote the height of a BFS spanning tree of a graph with a fixed root node. Note however that the height of a BFS tree is not larger than the diameter $D$ and not smaller than $D/2$, so the orders of magnitude are the same. In our probabilistic considerations concerning graphs with $n$ nodes, we use the term ``with high probability''
to mean ``with probability at least $1-1/n$''.

We use square brackets to indicate sets of consecutive integers: $[i,j] = \{i, \dots, j\}$ and $[i] = [1, i]$. 
All logarithms are to the base 2.
For simplicity of presentation, we assume
throughout the paper that the number of nodes of a graph $n$ is a power of 2, in order to
avoid rounding of logarithms. One can easily generalize all the results for arbitrary $n$, preserving asymptotic efficiency measures.

As usually assumed in the algorithmic literature on radio networks, nodes communicate in synchronous rounds (also called steps). All nodes start executing an algorithm in the same round. 
In each round, a node can either transmit a message to all its neighbors, or stay silent and listen. At the receiving end, a node $v$ hears the message from a neighbor $w$ in a given round, if $v$ listens in this round, and if $w$ is its only neighbor that transmits in this round. If more than one neighbor of a node $v$ transmits in a given round, there is a {\em collision} at $v$. Two scenarios concerning collisions were considered in the literature. The availability of {\em collision detection} means that node $v$ can distinguish collision from silence which occurs when no neighbor transmits. If collision detection is not available, node $v$ does not hear anything in case of a collision
(except the background noise that it also hears when no neighbor transmits). 
We do not assume collision detection. The time of a deterministic algorithm for a given task is the worst-case number of rounds it takes to solve it, expressed as a function of various network parameters. 

If nodes are indistinguishable (anonymous), i.e., in the absence of any labels, none of our communication problems can be solved, for example, in the four-cycle. Indeed, the node $w$ antipodal to $v$ cannot get the message of $v$ because, in any round, either both its neighbors transmit or both are silent. 
Hence we consider labeled networks, i.e., we assign binary strings, called {\em labels}, to nodes. A {\em labeling scheme} for a given network represented by a graph $G=(V,E)$ is any function $\cL$ from the set $V$ of nodes to the set $S$ of finite binary strings. The string $\cL(v)$ is called the label of the node $v$.
Note that labels assigned by a labeling scheme are not necessarily distinct. The {\em length} of a labeling scheme $\cL$ is the maximum length of any label assigned by it. Every node knows {a priori} only its label, and can use it as a parameter in the same deterministic algorithm



Solving distributed network problems with short labels can be seen in the framework of
algorithms with \emph{advice}.  In this paradigm that has recently got growing attention, an oracle knowing the network gives {advice} to nodes not knowing it,  in the form of binary strings, provided to nodes before the beginning of a computation. A distributed algorithm uses this advice to solve the problem efficiently.
The required size of advice (maximum length of the strings) can be considered a measure of the difficulty of the problem.
Two variations are studied in the literature: either the binary string given to nodes is the same for all of them \cite{DBLP:journals/talg/GlacetMP17} or different strings may be given to different nodes
\cite{DBLP:conf/spaa/EllenGMP19,EllenG20,DBLP:journals/mst/FraigniaudKL10,DBLP:journals/iandc/FuscoPP16}, as in the case
of the present paper. If strings may be different, they can be considered as labels assigned to nodes by a labeling scheme.
Such labeling schemes permitting to solve a given network task efficiently are also called {\em informative labeling schemes}.
One of the famous examples of using informative labeling schemes is to answer adjacency queries in graphs \cite{DBLP:journals/siamdm/AlstrupKTZ19}.

Several authors have studied the minimum amount of advice (i.e., label length) required to solve certain
network problems (see the subsection Related work). The framework of advice or of informative labeling schemes permits us to quantify the minimum amount of information used to solve a given network problem, regardless of the type of information that is provided.  It should be noticed that the scenario of the same advice given to all (otherwise anonymous) nodes would be useless in the case of radio networks: no deterministic communication could occur.

We now define formally our two communication tasks in a radio network $G=(V,E)$.\\
$\bullet$ {\bf broadcasting}\\
One node of the graph, called the {\em source}, has a broadcast message that has to reach all nodes $v\in V$.
A node which already knows the broadcast message is called an \emph{informed} node, otherwise the node is \emph{uninformed}. If a node $v$ receives the broadcast message for the first time in round $r$, from some neighbor $u$, we say that $u$ \emph{informed} $v$ in round $r$. An uninformed node $v$ is a \emph{frontier node} in a given round, if it is a neighbor of an informed node.
In our broadcasting algorithms, only informed nodes send messages.\\
$\bullet$ {\bf gossiping}\\
Each node $v\in V$ has a message, and all messages have to reach all nodes in $V$.

As it is customary in algorithmic literature concerning radio networks, we assume that when a node sends a message, this message can be of arbitrary size. In particular, a node could send its entire history (however, in our algorithms, messages will be usually shorter:
in broadcasting, some control messages will be appended to the source message, and in gossiping, all messages already known to a node will be combined in a single message).

Now our goal can be succinctly formulated as follows:
\begin{quotation}
For each of the above tasks, find an optimal-length labeling scheme permitting to accomplish this task, and design an optimal-time algorithm for this task, using a scheme of optimal length.\footnote{For the task of broadcasting, constant-length labeling schemes are known, so in this case the goal is to find a scheme of constant length supporting an optimal-time broadcasting algorithm.}
\end{quotation}

\subsection{Our results}
Our main result concerns broadcasting.
 We improve the best currently known time of deterministic broadcasting using labeling schemes of constant length, due to 
 Ellen and Gilbert (SPAA 2020) \cite{EllenG20}. As in \cite{EllenG20}, our results are of two types: {\em constructive}, where the labeling scheme used by the algorithm is explicitly constructed using an algorithm polynomial in $n$, and {\em non-constructive}, where we only prove the existence of the labeling scheme used by the algorithm, via the probabilistic method. The broadcasting algorithm from \cite{EllenG20} using a constructive constant-length labeling scheme runs in time $O(D\log^2n)$. We improve it to time $O\left(D+\min(D,\log n)\cdot \log^2n\right)$. The broadcasting algorithm from \cite{EllenG20} using a non-constructive constant-length labeling scheme runs in time $O(D\log n +\log^2n)$. We improve it to time $O(D+\log^2 n)$. This latter time is, in fact, the optimal deterministic broadcasting time in radio networks of known topology.\footnote{This means that every node has an isomorphic copy of the graph, with nodes labeled in the same way by unique identifiers, and a node knows its identifier. Deterministic algorithms using such knowledge are called {\em centralized}.}.
  Hence, we show that appropriately chosen node labels of constant length permit us to achieve, in a deterministic distributed way, the optimal centralized broadcasting time. This is, perhaps, the most surprising finding of this paper.
  We are able to obtain our result thanks to a novel methodological tool of propagating information in radio networks, that we call a 2-height respecting tree.

  Next, we apply our broadcasting algorithm to solve the gossiping problem. 
 Using the non-constructive version of our result for broadcasting, we get an algorithm for the gossiping problem, working in time $O(D + \Delta\log n + \log^2 n)$, that uses a (non-constructive) labeling scheme of optimal length $O(\log \Delta)$.\footnote{Using only constructive labeling schemes, the polylogarithmic summand in our complexity of gossiping changes from $\log^2n$ to $\min(D,\log n)\log^2n$.} 
 Our time is the same as the best {\em known} gossiping time for radio networks of known topology (without any extra assumptions on parameters), that follows from \cite{DBLP:journals/dc/GasieniecPX07}.
	

We summarize our results and compare them with previous most relevant results in Table~\ref{tab:results:whole:paper}.
\begin{table}[h]
\begin{center}
\begin{tabular}{|c|c|c|c|}
\hline
Ref. & Time & Length of labeling scheme & Constructive \\
\hline
\hline
\multicolumn{4}{|c|}{Broadcasting: centralized optimal\ time $O(D+\log^2n)$,  \cite{DBLP:journals/dc/GasieniecPX07,KP-DC-07}}\\
\hline
\hline
\cite{DBLP:conf/spaa/EllenGMP19} & $O(n)$ &  $2$ bits &  Yes \\
\cite{EllenG20} & $O(D\log n +\log^2n)$ & $3$ bits & No \\ 
\cite{EllenG20} & $O(D\log^2 n)$ & $3$ bits & Yes \\ 
\hline
here & $O(D+\log^2n)$ & $7$ bits & No \\
here & $O\left(D+\min(D,\log n)\cdot \log^2n\right)$ & $7$ bits & Yes \\
\hline
\hline
\multicolumn{4}{|c|}{Gossiping: centralized best time known $O(D+\Delta\log n+\log^2n)$, follows from \cite{DBLP:journals/dc/GasieniecPX07}}\\
\hline
\hline
here & $O(D+\Delta\log n+\log^2n)$ & $\Theta(\log \Delta)$ & No\\
here & {$O(D+\Delta\log n+\min(D,\log n)\log^2n)$} & $\Theta(\log \Delta)$ & Yes\\
\hline
\hline

\end{tabular}
\end{center}
\caption{Previous and our results.}
\label{tab:results:whole:paper}
\end{table}

\subsection{Related work}

The tasks of broadcasting and gossiping in radio networks were extensively investigated in algorithmic literature.
For deterministic algorithms, two important scenarios were studied. The first concerns centralized algorithms, in which each node knows the topology of the network and its location in it. Here, an optimal-time broadcasting algorithm was given in \cite{DBLP:journals/dc/GasieniecPX07,KP-DC-07} and the best known gossiping time (without any extra assumptions on parameters) follows from \cite{DBLP:journals/dc/GasieniecPX07}. For large values of $\Delta$, this was later improved in \cite{CMX}.  The second scenario
concerns distributed algorithms, where nodes have distinct labels, and every node knows its own label and an upper bound on the size of the network but does not know its topology. Here the best known broadcasting time that depends only on $n$ is $O(n\log n\log\log n)$ \cite{DeMarco}, later improved in \cite{CD} for some values of parameters $D$ and $\Delta$. For gossiping, the best known time  in arbitrary directed (strongly connected) graphs was given in \cite{GL,GRX} and the best known time for undirected graphs follows from \cite{Vaya}. Randomized distributed broadcasting was studied in \cite{KP,CR}, where optimal-time algorithms were obtained independently. For gossiping, optimal randomized time was given in \cite{GH}.

The advice paradigm has been applied to many different distributed network tasks: finding a minimum spanning tree  \cite{DBLP:journals/mst/FraigniaudKL10}, finding the topology of the network \cite{DBLP:journals/iandc/FuscoPP16}, and leader election \cite{DBLP:journals/talg/GlacetMP17}. In \cite{DBLP:conf/spaa/EllenGMP19} and \cite{EllenG20}, the task was broadcasting in radio networks, as in the present paper. In the above papers, advice was given to nodes of the network. Other authors considered the framework of advice for tasks executed by mobile agents navigating in networks, such as exploration \cite{GP} or rendezvous \cite{MP}. In this case, advice is given to mobile agents.

\subsection{Organization of the paper}
We present a high-level description of our results in Section~\ref{sec:highlevel}.
Section~\ref{sec:2high} contains basic notations and definitions concerning spanning trees. It also introduces the notion of a \gt\ (\thrtNOS) and the proof that one can built a BFS tree of each graph which is a \thrtNOS, the result essential for efficiency of our broadcasting algorithms.
In Section~\ref{sec:broadcasting}, we focus on broadcasting. We start with a modified variant of the \executora from \cite{EllenG20} and then gradually present components of our solutions, both non-constructive and constructive ones.
Section~\ref{sec:k:gathering} is devoted to the task of gossiping.
We introduce the auxiliary task of gathering, present a gathering algorithm using a labeling scheme of optimal length $O(\Delta)$ and
show that, by combining our algorithms for gathering and for broadcasting, one obtains a gossiping algorithm with optimal length of labels and with the best known time, even compared to algorithms for networks of known topology. 
Finally, in Section~\ref{sec:summary}, we conclude the paper and present some open problems.

\section{High-level Description of our Results}\label{sec:highlevel}
\subsection{High-level description of broadcasting}
\commentt{
Our final broadcasting algorithm with asymptotically optimal time complexity will use a levelled variant
of \executora from \cite{EllenG20}. As stated in \cite{EllenG20}, Theorem~12, their \executora accomplishes broadcasting in $O(D\log n+\log^2n)$ using non-construcive labeling scheme and in $O(D\log n+\log^2n)$ rounds with constructive labeling scheme. 
They also introduce the notion of a \levelled broadcast algorithm, although they do not use a \levelled variant of \executora in their paper. However,  their algorithm can be transformed into its ``\levelled'' variant preserving its asymptotic time complexity and constant size of labels. 

}
Our algorithms combine three mechanisms:
\begin{enumerate}
    \item The domination mechanism from \cite{DBLP:conf/spaa/EllenGMP19}.
    
    Computation is split into blocks of some constant number of rounds. At the beginning of  block $r$, a fixed  set $\DOM_r$ of nodes is active which is a minimal set of informed nodes
    with respect to inclusion that covers all frontier nodes. All elements of $\DOM_r$ simultaneously transmit in the first round of the block called the \textit{Broadcast} step. Minimality of $\DOM_r$ guarantees that each $v\in \DOM_r$ informs at least one  uninformed node. For each $v\in \DOM_r$, the labeling algorithm chooses exactly one such node $v'$ informed by $v$ in block $r$ as the feedback node of $v$ in that block.

Importantly, all feedback nodes can transmit simultaneously messages received by the nodes which serve as their witnesses.    These feedback nodes transmit in the second round of the block, called the \textit{Feedback} step. Their messages contain some information stored in their labels which instruct the corresponding nodes from $\DOM_r$ whether they
    should stay in $\DOM_{r+1}$ and instruct them about their behaviour in the remaining steps of the current block $r$.

    Nodes informed until block $r$ which are outside of $\DOM_r$ remain inactive to the end of an execution of the algorithm. The intuition regarding this 
    property is the fact that a node $v$ outside of $\DOM_r$ does not have its feedback node to instruct $v$ about its actions. On the other hand, $v$ cannot store this information in its own label for many blocks of computation, because it would require non-constant size of labels.

    As each block extends the set of informed nodes, the domination mechanism guarantees broadcasting in $O(n)$ time.
    
    \item The propagation mechanism from \cite{EllenG20}.

    In order to accelerate propagation of the broadcast message in the case when the diameter $D$ of the input graph is $o(n)$,
    ideas from a randomized seminal distributed algorithm of Bar-Yehuda et al.\ \cite{Bar-YehudaGI92} are applied.
    Namely, for appropriate random choices of informed nodes whether to transmit in a particular round, one can assure
    that the broadcast message is passed to the consecutive level of a BFS tree rooted at the source node $s$ in $O(\log n)$ rounds in expectation.
    This in turn gives randomized broadcasting in $O(D\log n+\log^2n)$ rounds with high probability.

    These random choices of nodes are mimicked in the labels of nodes. More precisely, the labels store some 0/1 random choices whether to transmit in a given block, assuring a given time bound. In particular, the feedback node of a node $v\in\DOM_r$ stores, in the bit \textit{Go} of its label, information whether $v$ should transmit. 
    Then, the nodes from $\DOM_r$ which received \textit{Go}=1 transmit the broadcast message in the separate \textit{Go} step of the block $r$. The labeling scheme obtained in this way is non-constructive.  Using 
    ideas from \cite{ChlamtacK85} regarding centralized broadcasting in arbitrary bipartite graphs, one can obtain a constructive labeling scheme. However, the time of the broadcasting algorithm such a scheme would support becomes
    $O(D\log^2n)$ instead of time $O(D\log n+\log^2n)$ supported by the non-constructive scheme.

    \item The fast tracks mechanism.

    This mechanism is the main novelty of our solution and permits us to improve the broadcasting time from \cite{EllenG20}. The goal here is to implement ideas of a fast {\em centralized} algorithm into constant-size advice such that a {\em distributed} algorithm can somehow simulate the centralized one. The key ingredient of our approach is illustrated by the notion of 2-height respecting trees (\thrtNOS) and the fact that there exists a BFS tree which is also
    \thrtNOS, for each graph. The 2-height of a node $v$ in a tree intuitively denotes the maximum number of ``critical branches'' (causing large congestion) on a path from $v$ to a leaf.
    The maximum 2-height is always at most $\log n$.
    Each time the 2-height of a node $v$ and of some child $w$ of $v$ are the same, transmission of a message from $v$ to $w$ can be made in parallel
    with other similar transmissions from the level of $v$ dedicated to the particular value of $2$-height. Therefore, such an edge connecting  $v$ and $w$  with equal $2$-heights is called a \emph{fast edge}. As all but 
    $\log n$ 
    edges on each path from the root to a leaf are fast, the centralized algorithm from \cite{DBLP:journals/dc/GasieniecPX07} accomplishes broadcast in almost optimal time $O(D+\log^3 n)$.
    To this aim, the authors of \cite{DBLP:journals/dc/GasieniecPX07} make use of the notion of gathering trees which somehow minimize collisions between fast edges.
    Our notion of a \thrt imposes stronger requirements than gathering trees, making fast transmissions even more parallelizable.
    Then, the key challenge is an implementation of the idea of a centralized algorithm by constant-size labels instructing nodes of a distributed algorithm how to simulate the centralized algorithm. The main obstacle here comes from the domination mechanism which switches off some nodes irreversibly, preventing them from transmitting any message starting from the block $r$ in which they are outside of the minimal dominating set $\DOM_r$. We show that, for each such node, one can determine its ``rescue node'' still present in the dominating set, such that its transmission on behalf of a switched off node does not cause additional collisions.
    Here, the properties of a \thrt are essential.

    Our final solution using this mechanism gives a non-constructive labeling scheme of constant length, supporting broadcasting in time $O(D+\log^2n)$, which is optimal, even for centralized algorithms.
    Using the technique from \cite{ChlamtacK85} we can build labels constructively at the cost of increasing time complexity of broadcasting to $O\left(D+\min(D,\log n)\cdot \log^2n\right)$.

\end{enumerate}

\subsection{High-level description of gossiping}

In order to solve the gossiping problem, we introduce the auxiliary task of {\em gathering}: each node of the graph has a message, and all messages have to reach a designated node called the {\em sink}. We provide a gathering algorithm working in time $O(D+\Delta\log n+\log^2 n)$ and using a labeling scheme of length $O(\log\Delta)$.

To this aim, we again make use of properties of a \thrt to implement the centralized algorithm from \cite{DBLP:journals/dc/GasieniecPX07} in a distributed way, using short labels. 
Let $T$ be a BFS tree of the input graph which is also a \thrtNOS.
The centralized algorithm from \cite{DBLP:journals/dc/GasieniecPX07} determines the unique round $t(v)$ in which each node $v$ transmits all messages from its subtree of $T$\footnote{The authors of \cite{DBLP:journals/dc/GasieniecPX07} use the notion of gathering trees in their paper, but \thrt satisfy all properties of gathering trees as well.} to the parent of $v$. The value of $t(v)$ is chosen in such a way that the message is successfully received by the parent of $v$, i.e., there are no collisions at the parent of $v$.
These collision-free transmissions are assured by the properties of \emph{gathering trees} from \cite{DBLP:journals/dc/GasieniecPX07} which are also satisfied by \thrtNOS. The value of $t(v)$ depends on parameters $D$, $\level(v)$, $h_2(v)\in [0,\log n]$, $\Delta$ and on some auxiliary label $s(v)\in[0,\Delta-1]$. Thus, while $\Delta$ and $s(v)$ can be encoded in the label of $v$ using $O(\log \Delta)$ bits, we cannot store $D$, $h_2(v)$ and $\level(v)$ in the label if we want to get a labeling scheme of length $O(\log\Delta)$. To this aim we use the appropriately modified Size Learning Algorithm from \cite{DBLP:conf/wdag/GanczorzJLP21} followed by an acknowledged broadcasting algorithm to share information about the value of $D$ among all nodes and assure that nodes learn their levels by adding one to the values of levels of nodes from the preceding level, during an execution of the broadcasting algorithm. Finally, each leaf is marked as such by an appropriate bit of its label.
The fact that a node $v$ is a leaf implies also that $h_2(v)=0$. Other nodes learn their values of $h_2$ by modifying the maximal values of $h_2$ of their children when they receive all messages from them.
\commentt{
In order to perform $k$-gathering quickly, let us think that we try to transmit all messages to the sink node simultaneously, hoping that no collisions occur. Then, we could deliver all messages in at most $D$ rounds. 
In order to deal with prospective collisions, we will try to trade necessity of a slowdown 
in order to avoid collisions
with ability to combine more messages in one node. A rough idea is as follows. If transmissions from $u$ to $u'$ and from $v$ to $v'$ collide, i.e., there is an edge $(u,v')$ or $(v,u')$, we perform transmissions of $u$ and $v$ in separate rounds. W.l.o.g.\ assume that there is the edge $(u,v')$ in the communication graph.
In order to compensate for this slowdown, we combine messages collected up to the current round in $u$ and $v$ together, in the node $v'$. Then they are transmitted from $v'$ together as one packet on a single path from $v'$ to the root. This idea helps to build a $O(D+k)$ rounds $k$-gathering with $O(\log k)$ bits labels. However, in the case when $k\gg \Delta$, this size of labels is not the optimal $O\left(\min(\log k,\log \Delta)\right)$. In this case, we cannot encode the value of $k$ in labels while our algorithm requires knowledge of this value in order to coordinate distributed executions of the algorithm. In order to deal with this case, we switch between the pattern designed for the case $k\ge \Delta$ and a $2$-hop $(\Delta^2+1)$-coloring allowing for collision-free transmission in radio networks.
}

Our solution of the gossiping problem roughly works as follows. First, we gather all messages in an arbitrary node $s$ of the input graph, executing our gathering algorithm. Then, all messages collected at $s$ are distributed using
our broadcasting algorithm. An obstacle which arises in implementing this idea is caused by the fact that all nodes have to be coordinated so that they know when the consecutive subroutines of the final algorithm start. We overcome this difficulty by using an \emph{acknowledged} broadcasting algorithm.

\section{$2$-height Respecting Trees}\label{sec:2high}

For a rooted tree $T$ with the root node $r$, we denote
the parent of a node $v\neq r$ as $\parent(v)$.
The \emph{level} of a node $v$ in the tree $T$ is equal to its distance to the root $r$.
The level of $v$ is denoted as $\level(v)$. Thus, in particular, $\level(r)=0$.
For a fixed graph $G=(V,E)$ and a node $r\in V$, the set of nodes at distance $l\ge 0$ from $r$ will be called the \emph{level $l$} and denoted as $L_l$. Thus, in particular, $L_0=\{r\}$, $L_1$ is the set of neighbors of $r$ and $L_l=$ $\{v\,|\, \level(v)=l\}$.

Now, we define the notion of the 2--height respecting tree (\thrtNOS), 
resembling gathering trees introduced in \cite{DBLP:journals/dc/GasieniecPX07}.
However, it is important to note that 2--height respecting trees must satisfy stronger properties than gathering trees.
That is, each \thrt is a gathering tree while a gathering tree might not be a \thrtNOS.

\begin{definition}[2-height, fast edge, slow edge]\label{def:2h}
	The \textit{2-height} of a node $v$ of a rooted tree $T$, denoted as $h_2(v)$, is defined as follows:
	\begin{itemize}
		\item If $v$ is a leaf then $h_2(v) = 0$.
		\item If $v$ is not a leaf, it has the children $u_1, u_2, \dots, u_k$ for $k\ge 1$ and there is exactly one node $u_i$ such that $h_2(u_i) = \max_j \{h_2(u_j)\}_{j\in[k]} $ then $h_2(v) = \max_j \{h_2(u_j)\}_{j\in[k]}$.
		\item If $v$ is not a leaf, it has the children $u_1, u_2, \dots, u_k$ for $k\ge 1$ and there are two or more nodes $u_i$ such that $h_2(u_i) = \max_j \{h_2(u_j)\}_{j\in[k]} $ then $h_2(v) = \max_j \{h_2(u_j)\}_{j\in[k]} +1$.
	\end{itemize}
If $h_2(v)=h_2(\parent(v))$ in a rooted tree $T$ then the edge $(p(v),v)$ in $T$ is called a \emph{fast edge}. 
Otherwise, the edge $(p(v),v)$ is a \emph{slow edge}.
\end{definition}

\commentt{\begin{definition}[2-height]
	The 2-height of a node $v$ is defined as follows
	\begin{itemize}
		\item If $v$ is a leaf node then $h_2(v)=0$
		\item If $v$ is not a leaf node and there is exactly one child $u$ of $v$ such that $h_2(u) = max\{h_2(a) | a \text{ is a child of } v\}$ then $h_2(v) = h_2(u)$
		\item If $v$ is not a leaf node and there is more than one such child then \\$h_2(v) = max\{h_2(a) | a \text{ is a child of } v\} + 1$	
	\end{itemize}
\end{definition}}

\begin{definition}[\gt]\label{def:2hrt} 
A rooted tree $T$ is a \gt\ (\thrt\ for short) of a graph $G$ if it is a BFS Tree of $G$ satisfying the following property:
\begin{enumerate} 
\item[($\star$)]
\commentt{
For each two nodes $u,u'$ such that $\level(u)=\level(u')$, $\parent(u)\neq\parent(u')$ and
    $$h_2(u) = h_2(u') = h_2(\parent(u)) = h_2(\parent(u')),$$
there is no common neighbor $v$ of $u$ and $u'$ in $G$ such that $\level(v) = \level(u) -1 = \level(u')-1$.
}
For each two nodes $u,u'$ such that $\level(u)=\level(u')$ and
$$h_2(u) = h_2(u') = h_2(\parent(u)) = h_2(\parent(u')),$$
there is no common neighbor $v$ of $u$ and $u'$ in $G$ such that $\level(v) = \level(u) -1 = \level(u')-1$.
\end{enumerate}
\end{definition}
The key difference between gathering trees from \cite{DBLP:journals/dc/GasieniecPX07} 
and \gt \textit{s} is that the nodes $u$ and $u'$ on the same level:
\begin{itemize}
    \item cannot have the same parent, i.e., $\parent(u)\neq \parent(u')$ in the case of gathering trees,
    \item cannot even have a common neighbor on the level $\level(u)-1$ in the case of \thrtNOS.
\end{itemize}

The following key lemma  shows that, for each graph $G=(V,E)$ and each of its nodes $r\in V$, there exists a \thrtNOS\ of $G$ rooted in $r$. Moreover, such a tree can be constructed in polynomial time.
\begin{lemma}\label{lem:2hrt}
	For every graph $G=(V,E)$ and each $r\in V$, one can construct a BFS  
 spanning tree $T$ of $G$ rooted at $r$ such that $T$ is a \thrtNOS. Moreover, the tree $T$ can be constructed in time $\text{poly}(n)$. 
\end{lemma}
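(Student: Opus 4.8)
The BFS levels $L_0=\{r\},L_1,L_2,\dots$ are fixed by $G$ and $r$, and the only freedom in building a BFS tree is, for each $u\in L_l$ with $l\ge 1$, the choice of a parent inside $N_G(u)\cap L_{l-1}$. The plan is to make these choices \emph{level by level, starting from the deepest level}: with $d$ the largest index for which $L_d\neq\emptyset$, process $l=d,d-1,\dots,1$, fixing at stage $l$ the parents of all nodes of $L_l$. Two facts drive the argument. First, once the parents of all nodes in levels $\ge l+1$ have been chosen, the subtree of $T$ rooted at any $u\in L_l$ is completely determined, and hence so is $h_2(u)$; thus when stage $l$ begins the values $h_2(u)$, $u\in L_l$, are already known. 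Second, unwinding Definition~\ref{def:2h}, for a node $u$ with parent $v$ the edge $(v,u)$ is fast (that is, $h_2(u)=h_2(v)$) if and only if $u$ is the \emph{unique} child of $v$ attaining $\max_w h_2(w)$ over the children $w$ of $v$; call such a $u$ \emph{special}. Since $(\star)$ constrains only pairs of distinct nodes $u,u'$ lying on one level, with $h_2(u)=h_2(u')$ and with both $u$ and $u'$ joined to their parents by fast edges, property $(\star)$ is equivalent to the statement: \emph{for every level $l$ and every value $j$, the special nodes of $L_l$ having $h_2=j$ have pairwise disjoint neighbourhoods in $L_{l-1}$.}

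Within stage $l$ the $h_2$-values $j$ occurring on $L_l$ are processed in \emph{decreasing} order. When $j$ is treated, put $A_j=\{u\in L_l:h_2(u)=j\}$ and call a node of $L_{l-1}$ \emph{committed} if some child of $h_2$-value larger than $j$ has already been attached to it, and \emph{fresh} otherwise. Form the auxiliary graph $\Gamma_j$ on vertex set $A_j$, joining $u$ to $u'$ iff $u$ and $u'$ have a common neighbour in $L_{l-1}$; pick any \emph{maximal} matching $M$ of $\Gamma_j$ (one to which no edge can be added); for each edge of $M$ assign its two endpoints to a common $L_{l-1}$-neighbour (which exists because the pair is a $\Gamma_j$-edge); and assign every $M$-unmatched node of $A_j$ to an arbitrary neighbour in $L_{l-1}$ (which exists because it lies at distance $l$ from $r$). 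A matched node is never special: it shares its parent $v$ with its partner, so if $v$ was fresh then $v$ now has at least two children of $h_2$-value $j$ (whence $h_2(v)=j+1$), and if $v$ was committed then $v$ already had a child of $h_2$-value larger than $j$; in either case the matched node is not the unique maximum-$h_2$ child of $v$. Hence the special nodes of $L_l$ with $h_2=j$ form a subset of the $M$-unmatched vertices of $\Gamma_j$, which is an independent set of $\Gamma_j$ by maximality of $M$; thus these special nodes have pairwise disjoint $L_{l-1}$-neighbourhoods, as $(\star)$ demands. Ranging over all $j$ and all $l$ produces a BFS spanning tree $T$ of $G$ rooted at $r$ that satisfies $(\star)$, i.e.\ a \thrtNOS; and since each step merely reads off $h_2$-values and computes a greedy maximal matching, the construction runs in $\text{poly}(n)$ time.

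The delicate point — and, I expect, the main obstacle to a clean write-up — is \emph{consistency} of this incremental construction, which is exactly what the processing order is designed to guarantee: one must check that (i) $h_2(u)$ is available before $u$'s parent is chosen, and (ii) no later parent assignment ever alters the $h_2$-value of an already-processed node or the special/non-special status of any child already attached to it. Both follow from one elementary observation: in the chosen order the children of any node $v$ are attached in non-increasing order of $h_2$-value, so $h_2(v)$, the set of children of $v$ attaining the maximum $h_2$, and therefore the special/non-special status of every child of $v$, all become final as soon as the children of $v$ of the largest occurring $h_2$-value have been attached — and these precede every step treating a smaller value and every later stage. The remaining checks — that $T$ is spanning, that it is a genuine BFS tree (each node's parent lying one level nearer to $r$), and the reformulation of $(\star)$ via special nodes — are routine.
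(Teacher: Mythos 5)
Your proof is correct, but it is organized quite differently from the paper's. The paper starts from an \emph{arbitrary} BFS tree and repeatedly repairs violations of $(\star)$ by re-parenting the two offending nodes onto a common neighbour (distinguishing the case where that neighbour already is a parent of one of them from the case where it is not); termination in polynomially many steps is then established via a potential function $S_G(T,l)=\sum h_2(u)$ summed over the fast-edge children at level $l$, which is shown to strictly decrease with every repair. You instead build the tree directly, bottom-up by level and, within a level, by decreasing $h_2$-value, and you replace the repair-plus-potential argument with a single combinatorial step: a maximal matching in the conflict graph $\Gamma_j$ forces every matched node to lose fastness (it shares its parent with a sibling of equal or its parent already has a child of larger $h_2$), so the nodes that can still be ``special'' form an independent set of $\Gamma_j$, which is precisely $(\star)$. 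The underlying move is the same in both proofs --- kill the fastness of two conflicting would-be-fast children by giving them a common parent --- but your proactive, invariant-based construction avoids the termination analysis entirely and makes explicit the characterization of fast edges as ``unique maximum-$h_2$ child,'' which the paper uses only implicitly; the paper's local-search formulation, in exchange, shows that any given BFS tree can be \emph{modified} into a \thrtNOS\ rather than built afresh. Your consistency discussion (children of each node attached in non-increasing $h_2$-order, so $h_2$ of a parent and the special status of its children are finalized before they are ever consulted) is the right thing to check and is handled adequately.
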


\begin{proof}
	Consider any BFS Tree $T$ of $G$ with the root equal to $r$. We will present the algorithm $A$ that modifies $T$ so that the resulting tree is a \thrtNOS\ of $G$ with the root $r$ and it is still a BFS tree. 
	The algorithm $A$ makes changes 
	``level by level'', starting from the largest level. $A$ first changes some edges between the level
	$l_{\max}=\max_{v\in V}\level{(v)}$ and the level $l_{\max}-1$, then it changes some edges connecting nodes from the level $l_{max}-1$ with the nodes from the level $l_{max}-2$ and so on, until the level $1$.
	
	Formally, the proof goes by induction on the decreasing order of levels. 
	Assume that, for some $l>0$, the property $(\star)$ is satisfied for all nodes on levels $l'>l$.
	
We distinguish two types of violations of the ($\star$) property from Definition~\ref{def:2hrt}. Type 
 (a) violation holds if the common neighbor $v$ of $u$ and $u'$ violating the property is the parent of either $u$ or $v$ (cf. Fig. \ref{fig:2h:a}).
 Otherwise, we have type (b) violation (cf. Fig. \ref{fig:2h:b}).
	
If 
	there is no violation of ($\star$) for all pairs $u, u'\in V$ from the level $l$,
	then the algorithm can go to the level $l-1$. 
	Otherwise, the algorithm proceeds in the following way.
	
	First, we check if type (a) violation appears for any two nodes $u,u'$ from level $l$. 
	If this is the case for some $u\neq u'$ then $\parent(u) \neq \parent(u')$ and we change the parent of $u$ to be equal to $\parent(u')$. 
	This change will cause the increase of the value $h_2(\parent(u'))$ by one, since the maximal value of 2-heights of its children $u'$ becomes the 2-height of two of its children $u$ and $u'$ after the change.
	Thus, eventually, $$h_2(u)=h_2(u')=h_2(\parent(u))-1=h_2(\parent(u'))-1,$$
	since the parent of $u$ and the parent of $u'$ are equal after our change.
	This relationship implies that $u$ and $u'$ no longer cause type (a) violation of the ($\star$) property -- see Figure~\ref{fig:2h:a}.
        \begin{figure}
        \centering
        \includegraphics[width=0.85\linewidth]{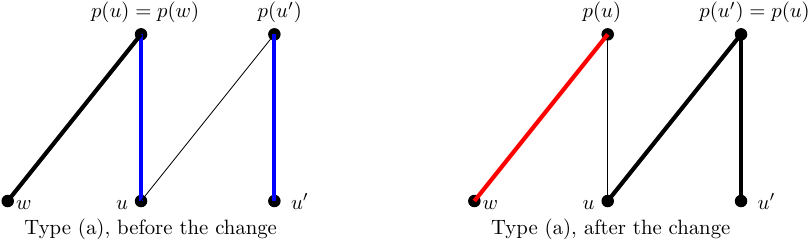}
        \caption{Illustration for type (a) violation. Fat edges connect nodes with their parents. Blue edges are fast, i.e., connect children with their parents such that the value of $h_2$ of the child and the parent are equal. Red edges might be fast, but it is not determined. Non-fat edges are not fast.}
        \label{fig:2h:a}
        \end{figure}        
 
	Now, assume that there is no type (a) violation $(\star)$ on the level $l$.
	Then, we check whether there are two nodes $u, u'$ on the level $l$ witnessing type (b) violation. That is, $u$ and $u'$ have a common neighbor $v$ on the level $\level(u)-1$ such that $v$ 
	is not the parent of $u$ or $u'$. 
	Then, we set $\parent(u) = \parent(u') = v$ and 
  thus the new parent $v$ of $u$, $u'$ has the new value of 2-height $h_2(v) > h_2(u)=h_2(u')$ and consequently $u$ and $u'$ no longer violate the property ($\star$)  -- see Fig. \ref{fig:2h:b}.
 After such elimination of type (b) violation, algorithm $A$ starts again checking whether there is type (a) violation and so on.
	
	Crucially, we can bound from above the number of such eliminations
 by proving monotonicity of the change of some specific measure of breach of the rule ($\star$) in the current tree $T$ with respect to the subgraph of $G$ limited to the edges connecting the nodes from the level $l$ with the nodes from the level $l-1$. 
	To this aim we define the function
	$$ S_{G}(T,l) = \sum_{\{u\,|\,\level(u)=l\text{ and }h_2(u)=h_2(\parent(u))\}} h_2(u).$$
	Below, we show that each change of edges of $T$ made by the algorithm $A$ eliminating some violation of the ($\star$) property decreases also the value of $S_{G}(T,l)$. 
	
	First analyse an occurrence of type (a) violation. Let $u,u'$ be the ``violators'' and let $w$ be a child of $\parent(u)$ with the highest 2-height apart from $u$, if such $w$ exists -- see Figure~\ref{fig:2h:a}. As $h_2(u) = h_2(\parent(u))$, the value of $h_2(w)$ must be smaller then $h_2(u)$.  Then after changing $\parent(u)$ to the node $\parent(u')$, $u$ and $u'$ have now different 2-heights than their parent, since $h_2(u)=h_2(u')$ and $u,u'$ are siblings -- see the definition of 2-height.
	Thus, $2$-heights of $u$ and $u'$ no longer contribute to $S_{G}(,T,l)$. 
 On the other hand, it might be the case that $w$ has not contributed to $S_{G}(T,l)$ before the change since then $h_2(w)<h_2(\parent(w))=h_2(\parent(u))$. However it might happen that $h_2(w)=h_2(\parent(w))$ after the change. 
 Let $S$ be the value of $S_G(T,l)$ before the change and $S'$ the value of $S_G(T,l)$ after the change.
 Then,
 the values of $S$ and $S'$ 
 satisfy the inequality
$$S'\leq S - h_2(u) - h_2(u') + h_2(w) \leq S - 1 - h_2(u') < S - 1,$$
since $h_2(w)<h_2(u)$.
	
	A similar reasoning can be applied to the case when the algorithm $A$ handles a type (b) violation. Let $u,u'$ be our ``violators'', and let $w,w'$ be theirs respective siblings in the tree $T$, with the greatest 2-height. We have $h_2(w) < h_2(u)$ and $h_2(w') < h_2(u')$. After replacing the parents of $u, u'$ with $v$, the nodes $u$ and $u'$ no longer contribute to $S_{G}(T,l)$, and the only possible new ``contributors'' to $S_{G}(T,l)$ are $w, w'$ -- see Figure~\ref{fig:2h:b}. 
         \begin{figure}
        \centering
        \includegraphics[width=0.95\linewidth]{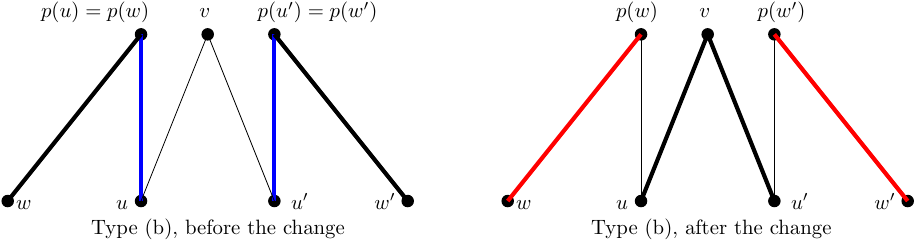}
        \caption{Illustration for type (b) violation. Fat edges connect nodes with their parents. Blue edges are fast, i.e., connect children with their parents such that the value of $h_2$ of the child and the parent are equal. Red edges might be fast, but it is not determined. Non-fat edges are not fast.}
        \label{fig:2h:b}
        \end{figure}        

	As before, let $S'$ be the value of $S_{G}(T,l)$, after replacement of edges fixing the type (b) violation, and let $S$ be the value of $S_G(T,l)$ before this replacement.
	Then the value of $S'$ can be at most
	$$S' \leq S - h_2(u) - h_2(u') + h_2(w) + h_2(w') \leq S - 1 -1 < S -2,$$
	since $h_2(w)<h_2(u)$ and $h_2(w')<h_2(u')$.
	
	The above inequalities show that each change of edges by the algorithm $A$ decreases the value of $S_{G}(T,l)$. As $S_{G}(T,l)$ cannot be negative and 2-height of every $v\in V$ is at most $\log n$ (Lemma~\ref{l:2h:max}, \cite{DBLP:journals/iandc/ChrobakCGK18}), we see that there are no violations of the ($\star$) property after
	at most $n \log n$ such changes made by $A$
    removing violations of the ($\star$) property. 

	As changes of edges connecting nodes of level $l$ with nodes of level $l-1$ do not affect 2-heights and edges between nodes on larger levels, the algorithm $A$ can iteratively change levels starting from the greatest one. This shows that $A$ produces a \gt. As the number of changes at each level is polynomial and the largest level $l_{\max}$ is at most $n$, the algorithm $A$ works in time polynomial in $n$.
\end{proof}

\section{Broadcasting Algorithms}\label{sec:broadcasting}
This section is devoted to the broadcasting problem. 
As our solutions to this problem are build in the framework of the \executora introduced in \cite{EllenG20}, we start with the description of a modified variant of this algorithm. In Section~\ref{sec:lev:alg} we present and analyze the \Levelled \executoraNOS, which extends the original algorithm so that only nodes from specific levels can transmit and listen in particular rounds. Subsequently, this algorithm will be improved to get our final result.
\commentt{
Then, in Section~\ref{sec:alg:broadcast:our:constr} we add essential extensions to \levelled \executora which lead to non-constructive broadcasting algorithm with constant size labels and asymptotically optimal time. 
Then, in Section~\ref{?}, we define a class of so-called monotone domination resistant centralized broadcasting algorithms. Then we proved that such algorithms can be transformed into distributed algorithms without loss it asymptotic time complexity and constant-size labels. Finally, we prove that the optimal broadcasting algorithm from \cite{DBLP:journals/jcss/KowalskiP13} is monotone domination resistant which gives a constructive algorithm constant size labels and optimal time.
}
Table~\ref{tab:broadcast} compares algorithms from \cite{DBLP:conf/spaa/EllenGMP19,EllenG20} which inspired our result, to our algorithms that improve them, presented in this section.

\begin{table}[h]
\begin{center}
\begin{tabular}{|c|c|c|c|c|}
\hline
Algorithm & Section/Theorem & Constructive &  Time & Paper \\
\hline
\hline
$\mathcal{B}$ & -- & Yes & $O(n)$ & \cite{DBLP:conf/spaa/EllenGMP19} \\
\executora & -- & No & $O(D\log n +\log^2n)$ & \cite{EllenG20}\\ 
\executora & -- & Yes & $O(D\log^2 n)$ & \cite{EllenG20}\\
\hline
\Levelled \executora & S.~\ref{sec:lev:alg}/Th.~\ref{th:alg:lev:exec:non} & No & $O(D\log n +\log^2n)$ & here \\
\Levelled \executora & S.~\ref{sec:alg:lev:constr}/Cor.~\ref{broadcastCorollary} & Yes & $O(D\log^2 n)$ & here \\
\fasta & S.~\ref{sec:alg:broadcast:our:constr}/Th.~\ref{th:fast:alg} & Yes & $O(D+\min(D,\log n)\cdot$ & here \\
& & & {$\log^2 n)$} & \\
\fastesta & S.~\ref{sec:fastest:alg}/Th.~\ref{th:fastest:alg} & No & $O(D+\log^2 n)$ & here \\
\hline
\end{tabular}
\end{center}
\caption{Overview of broadcasting algorithms in Section~\ref{sec:broadcasting}.}
\label{tab:broadcast}
\end{table}

\subsection{\Levelled \executoraNOS}\label{sec:lev:alg}

We will say that a broadcasting algorithm is a \emph{levelled algorithm}, if each node on level $l>0$ {\em accepts} the broadcast message only if it is received from a node on level $l-1$. In other words, nodes from level $l$ accept the broadcast message from nodes on level $l-1$ and they ignore the broadcast message if it is received from a node on level $l$ or $l+1$. Observe that, in order to incorporate such requirement, it is necessary to provide to nodes information which permits them to deduce the level of a node from which they receive the broadcast message.

We present the \Levelled \executora below and state its properties.
%
First, we modify the \executora so that it becomes a levelled algorithm, while preserving time complexity and various other properties of the \executoraNOS. 

The \executora combines the \emph{dominating set mechanism} introduced in \cite{DBLP:conf/spaa/EllenGMP19} with the \emph{propagation mechanism} from \cite{EllenG20}.
An execution of the algorithm is divided into blocks, each consisting of 3 steps: \textit{Broadcast} step, \textit{Feedback} step and \textit{Go} step. 
Let $r$ be the number of the current 
block and let
%
$\mathbf{INF}_r$ and $\mathbf{UNINF}_r$ be respectively the sets of informed nodes and uniformed nodes 
at the beginning of the block
$r$. In 
the
block
$r$ we try to keep some minimal set  $\mathbf{DOM}_r$ of informed nodes which dominates frontier nodes, i.e., uninformed neighbors of informed nodes.
Formally, $\DOM_r$ is a minimal subset of $\INF_r$ with respect to set inclusion, such that, for each node $v \in \UNINF_r$, if there exists a neighbor  $u \in \INF_r$ of $v$, then there is some neighbor $u'$ of $v$ in the dominating set $\DOM_r$.


The labels of nodes for the \executora are assigned gradually during a simulation of the algorithm on a given input graph. As it might be seen as a simulation of a randomized distributed algorithm based on ideas from \cite{Bar-YehudaGI92}, we assign the values of some bits of the labels randomly.
However, as the simulated distributed algorithm accomplishes broadcast in the claimed number of rounds with high probability, one can assign those bits by simulating all possible random choices of the randomized algorithm and fixing the choices assuring the given round complexity.
At the beginning we set $\DOM_0 = \{s\}$, where $s$ is the source node for the given instance of the broadcast problem. In the \textit{Broadcast} step of the block $r$, all nodes in $\DOM_r$ transmit theirs message. 
We preserve the invariant that each node is always aware if it belongs to $\DOM_r$ in the current block $r$.
As $\DOM_r$ is minimal, there will be at least one node newly informed only by $v$ for each node $v$ from $\DOM_r$. One of the nodes informed by $v$ in block $r$ is chosen to be the \textit{Feedback Node} of $v$ in that block.  

All newly informed \textit{Feedback Nodes}	send their values of \textit{Stay} and  \textit{Go} bits in the \textit{Feedback} step of the block, provided that at least one of them (\textit{Stay} or \textit{Go}) is equal to $1$.\footnote{It is essential that a node $v$ with $\mathit{Stay}=$ $\mathit{Go}=0$ remain silent in the \textit{Feedback} step of the block $r$ in which $v$ becomes informed. In this way one can assure that at most one feedback node of $w$ in $\DOM_r$ sends a message to $w$ collision-free in the \textit{Feedback} step.} If the \textit{Feedback Node} of $v$ sends the value $\mathit{Go}=1$ then $v$ will also broadcast in the ``bonus'' \textit{Go} step of the block. The value of \textit{Go} bit is chosen at random by the oracle for all \textit{Feedback Nodes}. Finally, $v$ will stay in the dominating set $\DOM_{r+1}$ if and only if the $\mathit{Stay}$ bit of its feedback node is equal to 1.\footnote{Recall that, the bits of labels ``instruct'' nodes how to execute the algorithm in the distributed setting. Thus, no node $v$ has the centralized view of the progress of the algorithm allowing it to determine whether in belongs to the set $\DOM_r$ in the current block $r$. Therefore, we provide this information to nodes through labels.} Each newly informed node joins the dominating set  $\DOM_{r+1}$ if its \textit{Join} bit is equal to $1$.

The \Levelled \executora is a slight modification of the \executora described above. 
In order to transform the \executora into a levelled algorithm,
we  associate a new variable $\Lev(v)$ with each node $v$ 
such that $\Lev(v)$ is equal the value of the distance from the source node $s$ to $v$ modulo $3$. (That is, $\Lev(v)=\level(v)\mod 3$.)
The value of $\Lev(v)$ can be stored as two extra bits of the label of $v$.
Assume that the current block is $r$ such that $r \mod 3 = x$.
Then, only nodes from the current dominating set with $\mathit{Level}$ value equal to $x$ transmit in \textit{Broadcast} and \textit{Go} Steps of the block $r$, and listen in the \textit{Feedback} step.
Moreover, only nodes with $\mathit{Level}$ value equal to $(x+1) \mod 3$ listen in the \textit{Broadcast}, \textit{Go} steps and transmit in the \textit{Feedback} step. 
In order to adjust this change to the standard \executora preserving the general meaning of the sets $\DOM_r$, $\INF_r$ and $\UNINF_r$ as described above,
we must also redefine the sets $\DOM_r$, $\INF_r$ and $\UNINF_r$ so that, for $r\mod 3=x$, one is looking only at edges where informed nodes are at such levels $l$ that $l\mod 3=x$ and, for an informed node at level $l$, only its uninformed neighbors at level $l+1$ are considered. 

The final label for each node $v$ is the tuple consisting of four elements: 
\begin{itemize}
\item \textit{Join} is the bit indicating whether $v$ should join the dominating set when it receives the broadcast message for the first time from a node at the level preceding the level of $v$ and remain in the dominating set for the next $3$ blocks (recall that nodes are active as informed ones only in one of each three blocks in the \Levelled \executoraNOS).

\item $\Lev$ is the distance from the root to $v$ modulo $3$.
\end{itemize}
Assume that $v$ is informed in the block $r$ such
that $\Lev(v)=(r+1)\mod 3$. If $v$ is chosen as the feedback node of some node $w$ which informed $v$ then
\begin{itemize}
\item \textit{Stay} is the bit indicating whether $w$ should stay in the dominating set $\DOM_{r+3}$.
\item \textit{Go} is the bit indicating whether $w$ should broadcast in the \textit{Go}$_{r+3}$ step of the block $r$.
\end{itemize}
If $v$ is not a feedback node of any other node then the bits \textit{Stay} and \textit{Go} are set to $0$.
Finally, $v$ sends the bits \textit{Stay} and \textit{Go} in the \textit{Feedback} step of block $r$ if \textit{Stay}=1 or \textit{Go}=1.

At the beginning we assign the label $(1,0,0,0)$ to the source node $s$. 
The labels of all other nodes are initiated to zeroes and they will be assigned gradually during a simulation of the final
distributed algorithm using those labels, as described above.

By a slight modification of proofs from \cite{EllenG20} (see Theorem~12 in \cite{EllenG20}), one can show the following non-constructive result.

\begin{theorem}\label{th:alg:lev:exec:non}\cite{EllenG20}
There exists 
a labeling scheme of constant length for which the \Levelled \executora finishes broadcast in $O(D\log n+\log^2n)$ rounds. 
\end{theorem}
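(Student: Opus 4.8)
The plan is to derive this statement as a corollary of the corresponding result in \cite{EllenG20} rather than reprove broadcasting correctness from scratch. Concretely, I would first recall the structure of the original \executora's analysis in \cite{EllenG20}: each block has a \textit{Broadcast}, \textit{Feedback}, and \textit{Go} step; the set $\DOM_r$ is a minimal dominating set of the frontier; the \textit{Feedback} step lets exactly one feedback node reach each $v\in\DOM_r$ collision-free (this is why a feedback node stays silent when \textit{Stay}$=$\textit{Go}$=0$); and the \textit{Go} step mimics the Bar-Yehuda--Goldreich--Itai random decay so that information advances to the next BFS level in $O(\log n)$ blocks in expectation, giving $O(D\log n + \log^2 n)$ rounds with high probability over the oracle's random bits, hence a fixed good choice exists. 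The main task is then to argue that the ``levelled'' restriction — partitioning blocks into three residue classes mod $3$ by $\Lev(v)=\level(v)\bmod 3$, and restricting in each block which levels may transmit/listen — does not break any of these invariants and only inflates the round count by a constant factor.

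The key steps I would carry out, in order: (1) Observe that because a BFS tree has edges only between consecutive levels, and an uninformed node first receives the message from the level immediately below it, restricting attention in block $r$ with $r\bmod 3=x$ to informed nodes at levels $\equiv x\pmod 3$ and their uninformed neighbors one level up is consistent — every frontier edge of the ``true'' execution falls into exactly one residue class, so the three interleaved sub-executions together cover all progress that the original algorithm would make. (2) Check that the redefined $\DOM_r,\INF_r,\UNINF_r$ (restricted to the relevant level pairs) still satisfy minimality and domination, so the \textit{Feedback} step still delivers a collision-free message from one feedback node to each dominator — here the level restriction actually \emph{helps}, since it removes potential interference from informed nodes at the same or the next level. (3) Verify that the \textit{Go}-step decay argument of \cite{EllenG20} goes through verbatim within each residue class, because it is a purely local argument about a frontier node and its informed neighbors, all now guaranteed to lie on two consecutive levels. (4) Conclude that each residue class finishes its share of the propagation in $O(D\log n+\log^2 n)$ of its own blocks, and since the three classes are interleaved round-robin, the total is still $O(D\log n+\log^2 n)$ rounds. (5) Note that $\Lev(v)$ costs only two extra bits, so the labeling scheme remains of constant length; then invoke the probabilistic method exactly as in \cite{EllenG20} to fix the \textit{Go}/\textit{Stay}/\textit{Join} bits non-constructively.

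The main obstacle is step (2) together with a subtlety in the bookkeeping: one must make sure that when a node $v$ at level $l$ is informed in block $r$ (necessarily with $\Lev(v)\equiv(r+1)\bmod 3$), the instructions its feedback node carries refer to the \emph{next} block in which level $l$ is active, namely block $r+3$ — this is why the label specifies \textit{Stay} as ``stay in $\DOM_{r+3}$'' and \textit{Go} as ``broadcast in the \textit{Go}$_{r+3}$ step''. I would need to confirm that the dominating-set update $\DOM_{r+3}$ depends only on information available by the end of block $r$ (the \textit{Stay}/\textit{Join} bits), so that the round-robin scheduling does not create a circular dependency, and that a node switched out of the dominating set is never needed again — exactly as in the original argument, since once $v\notin\DOM_{r+3}$ it has no feedback node to re-instruct it. Everything else is a routine transcription of the proof of Theorem~12 in \cite{EllenG20} with ``block'' replaced by ``block in a fixed residue class mod $3$''.
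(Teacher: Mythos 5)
Your proposal is correct and takes essentially the same route as the paper, which itself offers no detailed proof but simply asserts that Theorem~12 of Ellen and Gilbert carries over to the levelled variant "by a slight modification," with the key observation being exactly yours: the original analysis only tracks progress along consecutive BFS levels, so discarding same-level and backward messages and interleaving the three residue classes mod $3$ costs only a constant factor. Your write-up actually supplies more of the bookkeeping (the $\DOM_{r+3}$ scheduling, the two extra $\Lev$ bits) than the paper does.
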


\subsubsection{Constructive Variant of \Levelled \executora} \label{sec:alg:lev:constr}

As in \cite{EllenG20}\ (Lemma~13), we will use the following lemma which is a slight modification of the result proved by Chlamtac and Weinstein \cite{DBLP:journals/tcom/Chlamtac91}.

\begin{lemma}\label{DetBroadcastLemmaShort}\cite{DBLP:journals/tcom/Chlamtac91,EllenG20}
	Let $G$ be a bipartite graph with bipartition sets $A$ and $B$, where the degree of each node $v\in B$ is at least one. 
 Then, there exists a polynomial time 
 deterministic algorithm which finds the sets $A'\subseteq A$ and $B' \subseteq B$ such that $|B'| \geq |B|/\left(15\log |A|\right)$ and each node from $B'$ has exactly one neighbor in $A'$.
\end{lemma}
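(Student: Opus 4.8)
The statement to prove is Lemma~\ref{DetBroadcastLemmaShort}, a derandomized selection lemma for bipartite graphs, attributed to Chlamtac and Weinstein. Let me sketch a proof plan.

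\textbf{Proof plan.} The plan is to first establish the existential claim via the probabilistic method and then derandomize it using the method of conditional expectations. For the existential part, I would partition $B$ according to degree into logarithmically many classes $B_i = \{v \in B : 2^{i-1} \le \deg(v) < 2^i\}$ for $i \in [1, \lceil \log |A| \rceil]$, since every degree lies between $1$ and $|A|$. Some class $B_i$ contains at least $|B|/\lceil \log|A|\rceil$ elements. Fix that class, write $d = 2^{i-1}$ for the lower degree bound in it, and include each vertex of $A$ independently in a random set $A'$ with probability $p = 1/(2d)$ (roughly the reciprocal of the typical degree). For a fixed $v \in B_i$, the probability that $v$ has exactly one neighbor in $A'$ is $\deg(v) \cdot p (1-p)^{\deg(v)-1}$, which is bounded below by a positive constant (using $d \le \deg(v) < 2d$ and the choice of $p$). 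Hence the expected number of vertices of $B_i$ with exactly one neighbor in $A'$ is at least $c |B_i| \ge c|B|/\lceil\log|A|\rceil$ for an absolute constant $c$; fixing the constants so that $c \ge 1/15$ (adjusting $p$ if needed) gives the bound $|B'| \ge |B|/(15\log|A|)$. Let $B'$ be the set of such vertices and discard from $A'$ any vertex with no neighbor in $B'$; this only helps.

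\textbf{Derandomization.} To get the polynomial-time deterministic algorithm, I would derandomize the choice of $A'$ by the method of conditional expectations. Order the vertices of $A$ as $a_1, \dots, a_m$. The random variable $X = \sum_{v \in B_i} \mathbf{1}[v \text{ has exactly one neighbor in } A']$ is a sum over $v \in B_i$ of functions each depending on the indicator choices for the (at most $2d$) neighbors of $v$; crucially, conditioned on any partial assignment of membership decisions for $a_1, \dots, a_j$, the conditional expectation $\mathbb{E}[X \mid \text{decisions for } a_1,\dots,a_j]$ can be computed exactly in polynomial time, because for each $v \in B_i$ the conditional probability that $v$ has exactly one neighbor in $A'$ depends only on how many of $v$'s neighbors have already been placed in $A'$, how many have been excluded, and how many remain undecided --- a closed-form expression. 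We then decide each $a_j$ greedily, choosing ``in'' or ``out'' to keep the conditional expectation at least as large as before; at the end all decisions are fixed and $X \ge \mathbb{E}[X] \ge |B|/(15\log|A|)$, so the realized $B'$ has the required size, and by construction every vertex of $B'$ has exactly one neighbor in the realized $A'$. Restricting $A'$ to neighbors of $B'$ and outputting $(A', B')$ finishes the algorithm.

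\textbf{Main obstacle.} The routine calculations --- choosing the probability $p$ and verifying the constant $1/15$ --- are the kind of thing one defers; the genuine content is (i) the degree-bucketing that reduces the general bipartite graph to a near-regular one, and (ii) arguing that the conditional expectation is efficiently computable at each step so that the method of conditional expectations actually yields a polynomial-time algorithm rather than merely an existence proof. The second point is the main thing to get right: one must observe that $X$ decomposes as a sum of local indicator functions, each touching $O(d)$ vertices of $A$, and that each term's conditional expectation under a partial assignment has an explicit formula in terms of the counts of already-selected, already-rejected, and undecided neighbors. Since I may invoke results stated earlier, I note that this lemma is cited essentially verbatim from \cite{DBLP:journals/tcom/Chlamtac91,EllenG20}, so in the paper itself it would suffice to reference those sources; the sketch above records the argument for completeness.
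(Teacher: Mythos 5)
Your proposal is correct and follows essentially the same route as the paper, which only sketches the argument: choose the inclusion probability $p$ as a power of $1/2$ matched to the dominant degree class of $B$ (the paper phrases this as trying each $p\in\{1/2,1/4,\dots\}$, which is equivalent to your degree-bucketing), lower-bound the expected number of vertices of $B$ with a unique neighbor in $A'$ by $|B|/(15\log|A|)$, and then derandomize by the method of conditional expectations. Your additional observation that the conditional expectations admit a closed-form evaluation is the right justification for polynomial time, and matches what the cited sources do.
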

The above result is obtained as follows. One can show that a random choice of $A'$ 
guarantees that the expected size of $B'$ is
at least $|B|/\left(15\log |A|\right)$.
Let $A'$ be chosen randomly such that,
for some fixed
$p\in\{1/2^1, 1/2^2,\ldots,1/2^{\ceil{\log n}}\}$, $v$ is chosen to belong to $A'$ with probability $p$, independently of random choices for other elements of $A$, for each $v\in A$.
Then, using the technique of derandomization  with maximization of expectation one can determine $A'$ satisfying Lemma~\ref{DetBroadcastLemmaShort} in polynomial time. For more details, refer e.g.,\ to \cite{EllenG20}.

We now describe a deterministic assignment of the bits \textit{Go}, using Lemma~\ref{DetBroadcastLemmaShort}.
Divide an execution of the algorithm into consecutive \emph{stages} 
consisting of $T = 15 \log^2 n$ of our $3-step$ blocks. If a node $v$ gets informed 
during the block $t_v$ then \textit{Go} bits in its feedback nodes in the 
blocks $t'>t_v$ will be set to $0$ until the block $t'_v$ equal to the smallest number $t'$
such that 
$t'=0 \mod T$. (Note that it is sufficient that $\log n$ is known to the oracle assigning labels
and the distributed broadcast algorithm working at nodes using these labels can work without knowledge of $\log n$.)
%
In other words, \textit{Go} bits of $v$ are set to $0$ until the end of the stage in which
$v$ is informed. 

Now, we are ready to describe an efficient algorithm which assigns the bits \textit{Go} to all nodes, together with other bits of labels described above.
Let $k \geq 0$ be an integer, let $A$ be the subset of the set of nodes $\DOM_{kT}$ located on some level $l$ and let all labels be already fixed for nodes informed before the block $kT$.
%
That is, for a fixed $l$, $A=\DOM_{kT}\cap L_l$.
%
Moreover, for $r\ge 0$, let $A_r \subseteq A$ $\subseteq L_l$ be the set of nodes from  $A$ still being in the dominating set of
the block
$kT + r$. 
That is, $A_r$ is the subset of $A$ containing the nodes which are in $\DOM_{kT+r}$.
Let $B_r$ be the set of uninformed neighbors of $A_r$ on the level $l+1$. 
Then, let $A'_r\subseteq A_r$ and $B'_r\subseteq B_r$ be the sets provided by the the algorithm from Lemma~\ref{DetBroadcastLemmaShort} applied to the bipartite graph induced by the sets $A_r$ and $B_r$. 
%
For each $v \in A_r$, the bit \textit{Go} of the feedback node of $v$  in block $kT+r$ is set to $1$ iff $v \in A'_r$.

\begin{lemma} \label{l:aux:bipartite}
	The set $B_{r'}$, for $r'=15\log^2 n $, is empty.
\end{lemma}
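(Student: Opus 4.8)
The plan is to track how the sets $B_r$ of uninformed neighbors shrink across the $15\log^2 n$ blocks of a stage. First I would observe that, by the construction, the \textit{Go} bits assigned to feedback nodes in blocks $kT,\ldots,kT+15\log^2n-1$ are precisely the ones that are ``live'': a node informed during this stage has its own \textit{Go} bits suppressed until the end of the stage, so the only nodes performing \textit{Go}-step transmissions on level $l$ during this stage are nodes of $A=\DOM_{kT}\cap L_l$, acting through their feedback nodes. In block $kT+r$, the set $A'_r\subseteq A_r$ chosen by Lemma~\ref{DetBroadcastLemmaShort} has the property that every node in $B'_r$ has exactly one neighbor in $A'_r$; hence in the \textit{Go} step of that block, each node of $B'_r$ hears the broadcast message collision-free and becomes informed. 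Consequently $B'_r\cap B_{r+1}=\emptyset$, i.e.\ $B_{r+1}\subseteq B_r\setminus B'_r$, and since $|B'_r|\ge |B_r|/(15\log|A|)\ge |B_r|/(15\log n)$ we get
\[
|B_{r+1}|\le |B_r|\left(1-\frac{1}{15\log n}\right).
\]

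Next I would iterate this contraction. Starting from $|B_0|\le n$ and applying the bound $15\log n$ times, the size drops by a constant factor:
\[
|B_{15\log n}|\le n\left(1-\frac{1}{15\log n}\right)^{15\log n}\le \frac{n}{e}<\frac{n}{2}.
\]
Repeating this over $\log n$ such rounds of $15\log n$ blocks each — a total of $15\log^2 n$ blocks — the size of $B_r$ is multiplied by at most $1/2$ each round, so after all $15\log^2 n$ blocks we have $|B_{15\log^2 n}|\le n\cdot 2^{-\log n}=1$; pushing one more factor (or simply noting $|B_r|$ is a nonnegative integer that is strictly less than $1$ once it drops below $1$) forces $B_{15\log^2n}=\emptyset$. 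A slightly cleaner bookkeeping is to say: after $j$ blocks, $|B_j|\le n(1-1/(15\log n))^{j}$, and for $j=15\log^2 n$ this is at most $n\cdot e^{-\log n}=1$, so one extra block of the same kind (or a marginal constant increase in $T$) makes it $0$; since the statement fixes $T=15\log^2 n$ I would instead just verify that $n(1-1/(15\log n))^{15\log^2 n}<1$ already holds for $n$ a power of $2$, $n\ge 2$, by using the sharper estimate $(1-x)\le e^{-x}$ together with $e^{-\log n}=1/n$, so the product is strictly below $1$ and hence equals $0$.

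The one point needing care — and the main obstacle — is justifying that $B_{r+1}\subseteq B_r\setminus B'_r$, i.e.\ that every node of $B'_r$ is genuinely removed from the ``still uninformed'' pool before block $kT+r+1$. This requires checking that the \textit{Go}-step transmission in block $kT+r$ is not spoiled by transmissions from other levels or from nodes informed earlier in the stage: here the \levelled structure of the algorithm (only nodes whose $\Lev$ value matches $r\bmod 3$ transmit, and a node at level $l$ accepts the message only from level $l-1$) and the suppression of \textit{Go} bits for mid-stage arrivals are exactly what guarantees that within level $l\to l+1$ the only transmitters are those in $A'_r$, so Lemma~\ref{DetBroadcastLemmaShort}'s ``exactly one neighbor'' conclusion really does yield a collision-free reception at each node of $B'_r$. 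Once that is in place, the geometric-decay computation above is routine.
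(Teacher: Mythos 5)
Your proposal is correct and follows essentially the same route as the paper: establish $B_{r+1}\subseteq B_r\setminus B'_r$ with $|B'_r|\ge |B_r|/(15\log n)$ via Lemma~\ref{DetBroadcastLemmaShort}, then iterate the contraction $15\log^2 n$ times to drive $|B_{r'}|$ strictly below $1$. Your extra paragraph justifying that the \textit{Go}-step receptions at $B'_r$ are not spoiled by other transmitters (using the levelled structure and the suppression of \textit{Go} bits for mid-stage arrivals) is a point the paper's proof leaves implicit, but it does not change the argument.
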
 
\begin{proof}
	Fix any 
	block
	$r < r'$. By the construction of the algorithm, any node $v \in A_r$ in the $(kT + r)$th 
	block
	gets the \textit{Go} bit from its feedback node. 
    If $v \in A'_r$ then its bit \textit{Go} is set to $1$ and therefore it broadcasts in the \textit{Go} round. 
    By Lemma \ref{DetBroadcastLemmaShort}, if all nodes from $A'_r$ transmit simultaneously then all nodes from $B'_r$ receive the broadcast message and therefore they are informed after the block $kT+r$. As $B_{r+1} \subseteq B_r \setminus B'_r$, Lemma~\ref{DetBroadcastLemmaShort} implies that the size of $B_{r'}$ can be bounded as follows:
		$$ |B_{r'}| \leq |B_0| \left(1-\frac{1}{15\log n}\right)^{15\log^2 n} \leq |B| \exp(-\log_2 n) =\frac{|B|}{n} < 1 .$$
\end{proof}
After becoming informed in some block $t$, a node $v$ waits for the block $r$ with the smallest number $r\ge t$ equal to the multiple of $15\log^2 n$ and then $v$  becomes a member of the ``source part'' of such a bipartite graph containing all its uninformed neighbors.  Broadcasting in a bipartite graph can be done done in $15\log^2 n$ blocks by Lemma~\ref{l:aux:bipartite}. This implies that,
if node $v$ gets first informed in 
	block
	$r$ then all 
    its 
    neighbors are informed by the end of
	block
	$r+30\log^2 n$.
%
Hence we get the following corollary.
\begin{corollary}\label{broadcastCorollary}
The \Levelled \executora accomplishes broadcast in $O(D\log^2n)$ rounds, and uses a constructive labeling scheme.
\end{corollary}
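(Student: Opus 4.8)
The plan is to put together the two ingredients already established above: the invariant that the domination mechanism extends the set of informed nodes block by block (inherited from the \executora of \cite{DBLP:conf/spaa/EllenGMP19,EllenG20}, which guarantees that broadcast terminates and runs in $O(n)$ blocks), and the quantitative statement of Lemma~\ref{l:aux:bipartite}, which bounds the time needed to pass the message from one BFS level to the next. First I would fix the BFS spanning tree rooted at the source $s$ used implicitly by the \Levelled \executoraNOS, so that the notions of \level, $L_l$, and of ``a node accepts the message only from the preceding level'' are all well-defined; recall the diameter $D$ (up to a factor $2$) bounds the number of levels.

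Next I would argue, level by level, that the message reaches every node. The base case is $L_0 = \{s\}$, which is informed from the start with label $(1,0,0,0)$, so $s\in\DOM_0$. For the inductive step, suppose that by some block $r_l$ every node of $L_l$ is informed; in particular every node of $L_l\cap\DOM_{r_l}$ is in the dominating set, and these are exactly the nodes at level $l$ that still have uninformed frontier neighbours, which by the levelled definition all lie at level $l+1$. Now the discussion preceding the corollary applies: a node $v\in L_l$ that becomes informed in block $t$ waits until the next multiple of $15\log^2 n$ and then its uninformed level-$(l+1)$ neighbours form the set $B$ in the bipartite graph of Lemma~\ref{l:aux:bipartite} with source part $A\subseteq \DOM_{\cdot}\cap L_l$; by that lemma, after $15\log^2 n$ further blocks $B$ is empty, i.e.\ all of $v$'s level-$(l+1)$ neighbours are informed. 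Since every node of $L_{l+1}$ has at least one neighbour in $L_l$, taking the union over all $v\in L_l$ shows that every node of $L_{l+1}$ is informed by block $r_l + 30\log^2 n$ (the extra factor of $2$ absorbs the waiting time until the start of a stage). Hence $r_{l+1} \le r_l + 30\log^2 n$.

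Unrolling the recursion over the at most $2D$ levels gives that broadcast completes within $O(D\log^2 n)$ blocks, and since each block consists of $3$ rounds, within $O(D\log^2 n)$ rounds. Constructiveness follows because the only non-trivial choices in the label assignment are the \textit{Go} bits, which are fixed by the deterministic polynomial-time procedure of Lemma~\ref{DetBroadcastLemmaShort} (applied, for each stage index $k$ and each level $l$, to the bipartite graphs $(A_r,B_r)$ in the order of increasing $r$), while the remaining bits \textit{Join}, $\Lev$, \textit{Stay} are determined directly by the BFS tree and the simulation; the whole assignment runs in $\mathrm{poly}(n)$ time, as in \cite{EllenG20}.

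The main obstacle I anticipate is the bookkeeping around the ``stage'' boundaries and the $\mathrm{mod}\ 3$ structure of the \Levelled \executoraNOS: one has to check that when a level-$l$ node becomes informed partway through a stage, rounding its activation forward to the next multiple of $T = 15\log^2 n$ and restricting to blocks whose index is congruent to $l\bmod 3$ does not more than double the per-level cost, and that the bipartite-graph instances fed to Lemma~\ref{DetBroadcastLemmaShort} at consecutive stages are genuinely disjoint in the nodes they newly inform, so the $(1-1/(15\log n))^{15\log^2 n}$ decay of Lemma~\ref{l:aux:bipartite} really does clear each level. Both points, however, are exactly the invariants already set up in the construction preceding the corollary, so the argument is essentially an assembly of stated facts rather than new work.
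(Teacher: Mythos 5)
Your proposal is correct and follows essentially the same route as the paper: the paper likewise derives the corollary from the statement that a node informed in block $r$ has all its neighbours informed by block $r+30\log^2 n$ (waiting at most $15\log^2 n$ blocks for the next stage boundary plus $15\log^2 n$ blocks to empty the bipartite set $B$ via Lemma~\ref{l:aux:bipartite}), iterated over the $O(D)$ BFS levels with constant-length blocks, and with constructiveness coming from the derandomized choice of the \textit{Go} bits in Lemma~\ref{DetBroadcastLemmaShort}. The only nuance worth noting is that $\DOM_r\cap L_l$ is a \emph{minimal} subset of informed level-$l$ nodes covering the frontier (not all such nodes), but this does not affect your argument since every uninformed level-$(l+1)$ node with an informed neighbour is still dominated by it.
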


\commentt{
\subsubsection{Ellen\&Gilbert Algorithm vs. its Levelled Variant}
Ellen and Gilbert \cite{EllenG20} have not introduced nor analyzed the levelled variant of \executoraNOS. However, we can argue that their analysis of correctness and efficiency of \executora applies to \Levelled \executora as well. Observe that the only difference between the levelled variant of the algorithm and the standard variant is that, for each level $l\ge 0$ and each node $v$ with $\level(v)=l$, $v$ ignores messages received from the nodes from the levels $l$ and $l+1$ during an execution of the \levelled \executora while these messages are not ignored in the standard \executoraNOS. However, in the analysis of \executora, it is sufficient to analyze progress of the broadcast message only on shortest paths from the source to all other nodes. This however means in particular that, in order to determine the fact of reception of the broadcast message by a node $v$ at the level $l$ and round number of this reception, one may ignore the fact that the broadcast message for was delivered earlier than to $v$ for some other nodes from levels $l'\geq l$.

}

\subsection{\fastaNOS: Linear Dependency on $D$}\label{sec:alg:broadcast:our:constr}


In this section, we extend the labeling scheme and modify the \Levelled \executora in order to 
improve its time complexity {to $O(D+\min(D,\log n)\log^2n)$.} The first summand in this complexity (the linear dependency on $D$) is cleary optimal, and the second summand will be improved later. 
%
The resulting algorithm is called {\fastaNOS}.

As before, we build labels gradually starting from zero labels, by simulating the final broadcasting algorithm step by step. 
The final algorithm will work in the framework of the  {\Levelled} \executoraNOS. 
Significantly, we introduce special \emph{shortcut edges}, as described by Gasieniec et al.\ \cite{DBLP:journals/dc/GasieniecPX07}. As further explained, all but $O(\log n)$ edges are such shortcuts on each path from the source of a BFS tree which is a \thrtNOS. Interestingly, these shortcut edges correspond to fast edges of a 2-height respecting BFS spanning tree of the communication graph.

One can refer to 
\cite{DBLP:journals/dc/GasieniecPX07,DBLP:journals/iandc/ChrobakCGK18} for the proof of the following lemma. 
\begin{lemma}\cite{DBLP:journals/iandc/ChrobakCGK18}\label{l:2h:max}
	For a tree of size $|V|=n$ the maximum value of 2-height is $\log n$.
\end{lemma}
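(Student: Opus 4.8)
The plan is to prove the slightly stronger statement that the subtree of $T$ rooted at any node $v$ contains at least $2^{h_2(v)}$ nodes; the lemma then follows at once. I would argue by induction on the size of the subtree of $v$ (equivalently, structural induction on $T$), following the three cases of Definition~\ref{def:2h}.

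For the base case, if $v$ is a leaf then $h_2(v)=0$ and its subtree has $1=2^0$ node. For the inductive step, let $v$ be an internal node with children $u_1,\dots,u_k$ and put $m=\max_{j\in[k]} h_2(u_j)$. If exactly one child, say $u_i$, attains $m$, then $h_2(v)=m$, and by the inductive hypothesis the subtree of $u_i$ already has at least $2^m$ nodes, so the subtree of $v$ has at least $1+2^m>2^{h_2(v)}$ nodes. If at least two children, say $u_i$ and $u_{i'}$, attain $m$, then $h_2(v)=m+1$; the subtrees of $u_i$ and $u_{i'}$ are disjoint, and by the inductive hypothesis each has at least $2^m$ nodes, so the subtree of $v$ has at least $1+2\cdot 2^m>2^{m+1}=2^{h_2(v)}$ nodes. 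This proves the claim.

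To conclude, I would note that $h_2$ is non-decreasing along paths toward the root: Definition~\ref{def:2h} gives $h_2(v)\ge h_2(u_j)$ for every child $u_j$ of $v$. Hence $\max_{v\in V} h_2(v)=h_2(r)$, and applying the claim to $v=r$, whose subtree is all of $T$, yields $2^{h_2(r)}\le n$, i.e. $h_2(r)\le \log n$. (One may also observe that this bound is essentially tight, being approached by perfect binary trees.)

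The only point requiring a little care is the case split in the inductive step — when exactly one child attains the maximum 2-height, $h_2(v)$ does not increase, and one must avoid spuriously "adding" a factor of $2$. But this turns out to be the easy case, since the unique maximizing child already supplies a subtree of size $\ge 2^{h_2(v)}$; the node $v$ itself only adds to the count. I therefore expect no genuine obstacle: the whole argument is a short induction.
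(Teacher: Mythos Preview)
Your argument is correct: the induction showing that the subtree rooted at $v$ contains at least $2^{h_2(v)}$ nodes is exactly the standard proof of this fact, and the conclusion $h_2(r)\le\log n$ follows immediately. The paper does not actually supply its own proof of this lemma---it simply cites \cite{DBLP:journals/dc/GasieniecPX07,DBLP:journals/iandc/ChrobakCGK18}---and the proof found in those references is essentially the one you wrote.
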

{As $h_2(\parent(v)\leq h_2(v)+1$, for each node $v$ which is not the root of the considered tree, we see that the maximum value of $2$-height for a given spanning tree $T$ is not larger than the height of $T$.
Moreover, as the height of a BFS tree of $G=(V,E)$ is $O(D)$, we have the following corollary.
\begin{corollary}\label{cor:h2}
    For a BFS spanning tree of a graph $G=(V,E)$, the maximum value of $2$-height is at most $\min(D,\log n)$.
\end{corollary}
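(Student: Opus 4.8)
The plan is to prove the two bounds $\log n$ and $D$ separately and take their minimum. The bound by $\log n$ is immediate: a BFS spanning tree of $G$ has exactly $|V|=n$ nodes, so Lemma~\ref{l:2h:max}, applied to such a tree, already gives that every node $v$ satisfies $h_2(v)\le \log n$.

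For the bound by $D$, the key intermediate claim I would isolate is that in \emph{any} rooted tree, and for \emph{any} node $v$, the value $h_2(v)$ is at most the height of the subtree rooted at $v$. This goes by induction on that subtree height. The base case is a leaf, where $h_2(v)=0$ and the subtree height is $0$. For the inductive step, let $v$ have children $u_1,\dots,u_k$ and let $M=\max_j h_2(u_j)$, attained at some $u_i$; by the induction hypothesis $h_2(u_i)$ is at most the height of the subtree rooted at $u_i$, which is at most one less than the height of the subtree rooted at $v$. By Definition~\ref{def:2h}, $h_2(v)$ equals either $M$ or $M+1$, hence is at most the height of the subtree rooted at $v$, as required. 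Applying this claim at every node, and using the fact (noted in the preliminaries) that the height of a BFS spanning tree of $G$ is at most the diameter $D$, we obtain $h_2(v)\le D$ for every node $v$ of a BFS spanning tree. Combining with the first bound gives $h_2(v)\le \min(D,\log n)$ for all $v$, so in particular the maximum $2$-height over the whole tree is at most $\min(D,\log n)$.

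The argument is routine; the only point deserving a little care is that one cannot simply iterate the inequality ``$h_2$ drops by at most $1$ when passing from a node to one of its children'' along an arbitrary root-to-leaf path, because that inequality only holds when one descends to a child of \emph{maximum} $2$-height — a non-maximal child may have a much smaller $2$-height. Packaging the induction in terms of subtree heights (equivalently, tracing a path that always follows a child of maximum $2$-height down to a leaf) avoids this subtlety cleanly, and I expect this to be the only place where the write-up must be slightly more careful than the one-line justification preceding the corollary.
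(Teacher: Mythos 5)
Your proof is correct and follows essentially the same route as the paper: the $\log n$ bound is exactly Lemma~\ref{l:2h:max}, and the $D$ bound comes from showing $h_2(v)$ is at most the height of the subtree rooted at $v$, combined with the fact that a BFS tree has height at most $D$. The subtlety you flag is real and worth the care you give it --- the paper's one-line justification asserts $h_2(\parent(v))\leq h_2(v)+1$ for \emph{every} non-root $v$, which is false when $v$ is not a child of maximum $2$-height (e.g., a leaf whose sibling roots a deep subtree), so your induction on subtree height (equivalently, descending always to a maximum-$2$-height child) is the correct, rigorous version of the argument the paper intends.
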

}

Similarly as the \Levelled \executoraNOS, the {\fastaNOS}
works in blocks which consist of a fixed constant number of steps. It uses the labels for the \Levelled \executora and performs its standard steps \textit{Broadcast}, \textit{Feedback} and \textit{Go} in each block.
Moreover, two additional steps \textit{Fast} and \textit{Rescue}, as well as some additional bits of the labels are added in order to accelerate the broadcasting process, partially using some ideas of the centralized broadcasting algorithm from \cite{DBLP:journals/dc/GasieniecPX07}.
To this aim, in the process of assignment of labels combined with the simulation of the distributed algorithm, we start by building a  BFS tree $T$ of the input graph with the source vertex $s$ as the root of $T$, which is a \thrtNOS. 
One can build such a tree for each graph, as argued in Lemma~\ref{lem:2hrt}.
(Let us stress here that a 2-height respecting tree must satisfy stricter requirements than the so-called gathering spanning trees from \cite{DBLP:journals/dc/GasieniecPX07}, and this difference is essential for our labeling scheme and for the distributed algorithm.)
Then, we will follow the framework of the \Levelled \executora extended in such a way that 
\begin{itemize}
\item
Each block is extended by rounds \textit{Fast} and \textit{Rescue} devoted to acceleration of the broadcasting process through non-colliding shortcut/fast edges connecting nodes with their children, so that the $2$-height of a node and its child are equal.
\item
In order to instruct nodes about their actions in these new steps in blocks, we also add two extra bits to their labels, called the \textit{Fast} bit and the \textit{Rescue} bit. As it turns out, facilitating these transmissions without collisions faces significant challenges which we 
overcome by taking advantage of the fact that $T$ is a \gt.
\end{itemize}
%
\commentt{
    As the value of \textit{2-height} can change at most $\log n$ times on each path from the source to a leaf of a \thrt, we get that broadcasting will be done in time $O(D + \log^3 n)$. Below, we provide our construction and the proof of its correctness in more detail.
}

%


First,
for each node $v$ we define the 
\textit{ultimate block number} 
$x(v)$ of $v$ as follows
$$
x(v) = 3\cdot \left(\level(v) + 30 \ceil{\log^2 n} (h_2(r) - h_2(v))\right) + \left((\level(v)-1) \mod 3 \right).
$$
We will now build the rest of the algorithm to ensure that $x(v)$ is an upper bound on the number of 
the block
when $v$ is informed, i.e., when it receives the broadcast message for the first time, and this received message is transmitted from a node on the level $\level(v)-1$.

%
%
To achieve the above goal, we set the bits \textit{Fast} and \textit{Rescue} so that their values instruct nodes about their actions in the \textit{Fast} and \textit{Rescue steps} respectively, facilitating efficient usage of fast edges.
Initially all bits \textit{Fast} and \textit{Rescue} for all nodes are set to $0$ which means that (initially) no nodes are supposed to transmit in the new steps \textit{Fast} and \textit{Rescue} of any block. Additionally, for each  $u$, $v$ such that $u$ is the feedback node of $v$ in some block, apart from its \textit{Stay} and \textit{Go} bits, $u$ will also transmit its values of \textit{Fast} and \textit{Rescue} bits to $v$ in the appropriate Feedback step.

A node $v\in \DOM_r$ such that $\Lev(v)=r\mod 3$ (i.e., informed in a block $r'<r$) transmits in \textit{Fast} (respectively \textit{Rescue}) step of the current block if
the value of received $\textit{Fast}$ (respectively $\textit{Rescue}$) bit transmitted to $v$ (in the \textit{Feedback} step) by its current feedback node is equal to $1$.

Below, we describe assignments of the bits \textit{Fast} and \textit{Rescue}
in more detail.

   
If $h_2(v)\neq h_2(\parent(v))$ or $\parent(v)$ has not received the broadcast message until the round $x(v)>x(p(v))$ (see the definition of $x(v)$), no change of the values of the bits \textit{Fast} or \textit{Rescue} are caused by $v$.
Similarly, if the node $v$ receives the broadcast message before the block $x(v)$ then it does not cause change of bits \textit{Fast}, \textit{Rescue} of any node and therefore does not cause additional transmissions in the steps \textit{Fast} or  \textit{Rescue} of any block.

So let $v$ be a  node 
such that $h_2(v) = h_2(\parent(v))$, i.e., it is connected to its parent by a fast edge.
Additionally assume that the node $v$ has not received the broadcast message from $\parent(v)$
until the beginning of the block $x(v)$ but 
$\parent(v)$ received the broadcast message in the block $x(\parent(v))$ or earlier. 
\commentt{
Note that, under these assumptions,
\begin{equation}\label{eq:mod}
	\begin{tabular}{rcl}
		$x(v) - x(\parent(v))$ &=& 
		$3 + ((\level(\parent(v)) + 1) \mod 3) - (\level(\parent(v)) \mod 3)$ \\
		&=& $\left\{
		\begin{tabular}{cll}
			4 &  \mbox{ when } & $\level(\parent(v))\mod 3\in\{0,1\}$, \\
			1 &  \mbox{ when } & $\level(\parent(v))\mod 3=2$. \\
		\end{tabular}\right.$
	\end{tabular}
\end{equation}
is the value that each node with level equal to $\level(\parent(v))$ can compute.
}

%

Consider the following cases regarding the status of node $p(v)$ at the beginning of block $x(v)$:
\begin{enumerate}
\item[\textbf{Case~1.}] 
    The node $\parent(v)$ was informed before the block
    $x(v)$ and it is in the dominating set $\DOM_{x(v)}$ of 
    the block $x(v)$ of the algorithm.

    Then $\parent(v)$ has some feedback node $w$ first informed in 
    the block
    $x(v)$. We set $Fast$ bit of $w$ to $1$. 

\item[\textbf{Case~2.}]  
    The node $\parent(v)$ was informed before the block
    $x(v)$ but it is not in the dominating set $\DOM_{x(v)}$ of 
    the block $x(v)$ of the algorithm.

 As $v$ is still not informed before the block $x(v)$, there must be a neighbor $u$ of $v$ in the dominating set $\DOM_{x(v)}$ with $\level(u) = \level(\parent(v))$. 
 Then, as $u$ is in $\DOM_{x(v)}$, it has some feedback node $w$ first informed, by $u$, in 
 the block $x(v)$. We set the \textit{Rescue} bit of $w$ to $1$. 
\end{enumerate}

The 
 above  cases complete the 
description of \fastaNOS.
In Table~\ref{tab:local} we summarize the broadcast algorithm from the local perspective of nodes which initially only know their labels. We also express properties of the algorithm following from the assignment of the bits \textit{Fast} and \textit{Rescue} and the actions of nodes caused by these setting in the following observation.
\begin{table}[h]
\begin{center}
\begin{tabular}{|c|c|c|}
  \hline \hline 
  Step & Newly informed node $u$ such & $v\in \DOM_r$ informed earlier\\
  & that $r\mod 3=(\Lev(u)-1)\mod 3$ & such that $\Lev(v)\mod 3=r\mod 3$\\
  \hline
  \hline
  \textit{Broadcast} & silent & sends $B$\\
  \hline
  \textit{Feedback} & 
  if $Stay \neq 0$ or $Go \neq 0$ then $u$ sends 
  & silent \\
   & the bits \textit{Stay, Go, Fast, Rescue }& \\
  \hline
  \textit{Go} & silent & sends $B$ if received \textit{Go}$=1$ \\
  & & in \textit{Feedback} step\\
  \hline
  \textit{Fast} & silent & sends $B$ if received \textit{Fast}$=1$ \\ & & in \textit{Feedback} step\\
  \hline
  \textit{Rescue} & silent & sends $B$ if received \textit{Rescue}$=1$ \\
  & & in \textit{Feedback} step\\
  \hline
  \hline
\end{tabular}
\end{center}
\caption{Behaviour of nodes in the steps of a block $r$, where $B$ is the broadcast message.
}
\label{tab:local}
\end{table}

\commentt{
**** OLD BEGING
\begin{observation}\label{obs:fast:rescue}
\begin{enumerate}
    \item 
    If a node $v$ transmits in the Fast step of a block $b$ then $b=x(v')$ for a child $v'$ of $v$ such that $h_2(v')=h_2(v)$, i.e., $(v,v')$ is a fast edge.

    \item
    If a node $v$ transmits in the Rescue step of a block $b$ then there exists a node $v'=\phi(v)$ 
    such that
    \begin{enumerate}
        \item $\level(v')=\level(v)+1$,
        \item $(v,v') \in E$, 
        \item $b=x(v')$ and,
        \item $v'$ is connected with its parent by a fast edge. 
    \end{enumerate} 
    Moreover, for each node $v'$ there exists at most one $v$ such that $v'=\phi(v)$ satisfying the properties (a)--(d).
\end{enumerate}
\end{observation}
*** OLD END
}
\begin{observation}\label{obs:fast:rescue}
\begin{enumerate}
    \item 
    A node $v$ transmits in the Fast step of a block $b$ if and only if $b=x(v')$ for a child $v'$ of $v$ such that $h_2(v')=h_2(v)$, i.e., $(v,v')$ is a fast edge and $v'$ is not informed before the block $b$.

    \item
    Let $V_b$ be the set of nodes which transmit in the \textit{Rescue} step of a block $b$ and
    let $V'_b$ be the set of such nodes $v'$ which are uninformed before the block $b$, $x(v')=b$ and their parents are not in the dominating set $\DOM_b$.
    Then, there exists a one-to-one assignment $\phi_b: V_b\to V'_b$ which satisfies the following property.\\
    Let $v'=\phi _b(v)$ for $v\in V_b$. Then:
    \begin{enumerate}
        \item $\level(v')=\level(v)+1$,
        \item $(v,v') \in E$, 
        \item $v'$ is connected with its parent $p(v')$ by a fast edge, i.e., $h_2(v')=h_2(\parent(v'))$. 
    \end{enumerate} 
\end{enumerate}
\end{observation}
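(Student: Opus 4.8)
\textbf{Proof approach for Observation~\ref{obs:fast:rescue}.}

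The plan is to verify both items directly from the definitions of the bits \textit{Fast} and \textit{Rescue}, tracing how these bits get set to $1$ during the label-assignment phase and how they trigger transmissions in the \textit{Fast} and \textit{Rescue} steps. For item~1, I would argue the ``only if'' direction first: a node $v$ transmits in the \textit{Fast} step of block $b$ only because its feedback node in block $b$ carried a \textit{Fast} bit equal to $1$ (or, re-reading the setup, because $v$ itself received \textit{Fast}$=1$ from a feedback node in the preceding \textit{Feedback} step). By the description of the assignment rules, the only event that sets a \textit{Fast} bit to $1$ is Case~1: some child $v'$ of $v$ with $h_2(v')=h_2(v)$ (a fast edge) is still uninformed at the beginning of block $x(v')$, while $\parent(v')=v$ was already informed and remains in $\DOM_{x(v')}$. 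In that case the feedback node $w$ of $v$ first informed in block $x(v')=b$ has its \textit{Fast} bit set to $1$, which is exactly what makes $v$ transmit in the \textit{Fast} step of block $b$. This gives $b=x(v')$ for a fast-edge child $v'$ uninformed before block $b$. For the ``if'' direction, I would observe that whenever these hypotheses hold, the label-assignment procedure executes Case~1 and sets precisely this \textit{Fast} bit, so $v$ does transmit; one must also check consistency, i.e.\ that $v$ is indeed in $\DOM_b$ with $\Lev(v)\equiv b\bmod 3$, which follows from $x(v')$ being $\equiv(\level(v')-1)\equiv\level(v)\bmod 3$ after the appropriate reduction modulo $3$ built into the definition of $x(\cdot)$.

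For item~2, the structure is analogous but now I must produce the bijection $\phi_b$. A node $v$ transmits in the \textit{Rescue} step of block $b$ precisely when its feedback node in block $b$ carried \textit{Rescue}$=1$, and the only rule setting a \textit{Rescue} bit is Case~2: there is an uninformed (before block $x(v')$) node $v'$ whose parent is a fast edge and is \emph{not} in $\DOM_{x(v')}$, and $v$ is the neighbor of $v'$ in $\DOM_{x(v')}$ at level $\level(\parent(v'))=\level(v')-1$ whose feedback node $w$ (first informed by $v$ in block $x(v')=b$) gets \textit{Rescue}$=1$. Define $\phi_b(v)$ to be such a $v'$. Properties (a)--(c) are immediate: $\level(v')=\level(v)+1$ because $v$ is at level $\level(v')-1$; $(v,v')\in E$ because $v$ is chosen as a neighbor of $v'$; and $v'$ is joined to its parent by a fast edge by the hypothesis of Case~2 (which only triggers when $h_2(v')=h_2(\parent(v'))$). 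The containment $\phi_b(V_b)\subseteq V'_b$ is then exactly the statement that every such $v'$ is uninformed before block $b$, has $x(v')=b$, and has $\parent(v')\notin\DOM_b$ — all part of the Case~2 hypotheses.

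\textbf{The main obstacle} is proving that $\phi_b$ is well-defined and injective — i.e.\ that each transmitting node $v\in V_b$ is associated with \emph{exactly one} $v'$, and distinct $v$'s map to distinct $v'$'s. Injectivity should follow because each $v'\in V'_b$ has a unique parent $\parent(v')$ and, in Case~2, a \emph{specific} rescue neighbor $u$ at level $\level(v')-1$ is selected (so $v'$ determines $v$ uniquely), hence the assignment $v'\mapsto v$ is a function whose inverse restricted to $V_b$ is $\phi_b$; but I must check that the label-assignment procedure does not pick, for two different children $v'_1,v'_2$ sharing a common rescue neighbor $v$, two different feedback nodes whose \textit{Rescue} bits are both set — this is where the \thrt\ property $(\star)$ enters. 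Indeed, if $v'_1$ and $v'_2$ both had their rescue transmission routed through the same $v$ at the same block $b=x(v'_1)=x(v'_2)$, then $\level(v'_1)=\level(v'_2)$, $h_2(v'_1)=h_2(\parent(v'_1))$, $h_2(v'_2)=h_2(\parent(v'_2))$, and $b=x(v'_i)$ forces $h_2(v'_1)=h_2(v'_2)$ (since $x$ depends on $h_2$ and the levels agree); then $v$ would be a common neighbor of $v'_1,v'_2$ at level $\level(v'_i)-1$, contradicting $(\star)$ unless $v=\parent(v'_1)=\parent(v'_2)$, which is excluded because $\parent(v'_i)\notin\DOM_b$ while $v\in\DOM_b$. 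So I would conclude well-definedness and injectivity of $\phi_b$ from property $(\star)$, completing the argument; the remaining bookkeeping (that $v\in\DOM_b$ with the correct $\Lev$ residue, and that the feedback node indeed exists by minimality of $\DOM_b$) is routine and inherited from the \Levelled\ \executoraNOS.
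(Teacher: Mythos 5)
Your proposal is correct and follows essentially the same route as the paper, which states this observation without a separate proof precisely because it is meant to be read off from the bit-assignment rules: the \textit{Fast} bit is set only via Case~1 and the \textit{Rescue} bit only via Case~2, and unwinding those cases yields items~1 and~2 exactly as you do. One remark: your ``main obstacle'' is not actually an obstacle --- since Case~2 selects exactly one rescuing neighbor $u\in\DOM_{x(v')}$ per node $v'$, two distinct transmitters can never be the chosen rescuer of the same $v'$, so injectivity of $\phi_b$ is immediate without invoking $(\star)$; the \thrt\ property is reserved for the later collision-freeness argument in Lemma~\ref{l:alg:our:cases}, where it is genuinely needed.
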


As the above described  \fasta is an extension of the
{\Levelled} \executora 
by adding 
extra transmissions in each block, 
the correctness of the new algorithm follows directly from the correctness of the  {\Levelled} \executora
assured by the domination mechanism. 
To obtain a bound on the time of the new algorithm, we  prove the following lemma.
\begin{lemma}\label{l:alg:our:cases}
	A node $v$ becomes informed by the end of block $x(v)$, for each $v\in V$.
\end{lemma}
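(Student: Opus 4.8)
The plan is to prove the claim by induction on $\level(v)$, showing that if every node $u$ with $\level(u)<\level(v)$ is informed by the end of block $x(u)$, then $v$ is informed by the end of block $x(v)$. The base case is the source $s$ at level $0$, which is informed in block $0=x(s)$. For the inductive step, fix $v$ and let $p=\parent(v)$. By induction, $p$ is informed by the end of block $x(p)$. If in fact $v$ got informed even earlier (from some node on level $\level(v)-1$), we are done, so assume otherwise; in particular $v$ is still uninformed at the start of block $x(v)$, and note $x(v)>x(p)$ since $x$ is strictly increasing along tree edges (the $30\lceil\log^2 n\rceil(h_2(r)-h_2(v))$ term is nondecreasing as we go down, and the $3\cdot\level$ term strictly increases). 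I would then split into two regimes according to whether the edge $(p,v)$ is fast or slow.

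\textbf{Slow edge case.} If $h_2(v)\neq h_2(p)$, then $h_2(p)=h_2(v)+1$, so $h_2(r)-h_2(p)=h_2(r)-h_2(v)-1$, and a direct computation from the definition of $x(\cdot)$ gives $x(v)-x(p)\ge 3+30\lceil\log^2 n\rceil\cdot(-1)+\dots$; more precisely $x(v)$ exceeds $x(p)$ by at least $30\lceil\log^2 n\rceil$ blocks (up to the small additive level-mod-$3$ correction). Here I would invoke the analysis of the \Levelled \executoraNOS\ together with Corollary~\ref{broadcastCorollary} (equivalently Lemma~\ref{l:aux:bipartite}): once $p$ is informed, within $30\lceil\log^2 n\rceil$ blocks of the appropriate residue class all of $p$'s uninformed neighbors — in particular $v$ — become informed via the ordinary \emph{Broadcast}/\emph{Go} machinery. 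Since $p$ is informed by the end of block $x(p)$ and $x(v)\ge x(p)+30\lceil\log^2 n\rceil$ (matching residue), $v$ is informed by the end of block $x(v)$.

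\textbf{Fast edge case.} If $h_2(v)=h_2(p)$, then $x(v)-x(p)$ is just the small constant from the level terms (either $1$ or $4$, depending on $\level(p)\bmod 3$), so we cannot rely on the slow propagation bound. Instead, since $v$ is uninformed at the start of block $x(v)$ while $p$ was informed by block $x(p)<x(v)$, we are exactly in the situation handled by Case~1 or Case~2 of the construction. In Case~1 ($p\in\DOM_{x(v)}$): the feedback node $w$ of $p$ in block $x(v)$ had its \textit{Fast} bit set to $1$, so $w$ transmits it to $p$ in the \emph{Feedback} step, and $p$ transmits $B$ in the \emph{Fast} step of block $x(v)$. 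In Case~2 ($p\notin\DOM_{x(v)}$): some neighbor $u$ of $v$ with $\level(u)=\level(p)$ lies in $\DOM_{x(v)}$, its feedback node got \textit{Rescue}$=1$, and $u$ transmits $B$ in the \emph{Rescue} step of block $x(v)$. The remaining point is that these transmissions actually reach $v$ without collision: by Observation~\ref{obs:fast:rescue}, in the \emph{Fast} step the transmitters are exactly parents of fast-edge children indexed by their (distinct) $x$-values, and two such transmitters aimed at level-$\ell$ targets $v,v'$ with $x(v)=x(v')$ force $h_2(v)=h_2(v')=h_2(\parent(v))=h_2(\parent(v'))$ at equal levels, which property $(\star)$ of the \thrt\ forbids from sharing a common level-$(\ell-1)$ neighbor — hence no collision at $v$; and the \emph{Rescue} step is handled identically using the injectivity of $\phi_b$ in Observation~\ref{obs:fast:rescue} plus $(\star)$ applied at the parents. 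Therefore $v$ hears $B$ in block $x(v)$.

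\textbf{Main obstacle.} The routine bookkeeping (the modular arithmetic in $x(v)-x(p)$, checking residue classes so the right nodes are active in the right steps) is tedious but mechanical. The genuine difficulty is the collision-freeness argument in the fast/rescue steps: one must argue that \emph{all} nodes that the construction causes to transmit in the \emph{Fast} (resp. \emph{Rescue}) step of a given block $b$ have pairwise non-interfering targets, which is precisely where the strengthened property $(\star)$ of a \thrt\ (as opposed to a mere gathering tree) is used, and where Observation~\ref{obs:fast:rescue} and the one-to-one map $\phi_b$ do the heavy lifting — including ruling out that a stray \emph{Rescue} transmitter collides with a \emph{Fast} transmitter at the same target. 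I would isolate this as the crux of the proof and present it carefully, with the inductive propagation bound in the slow case being comparatively straightforward.
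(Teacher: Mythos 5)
Your proposal is correct and follows essentially the same route as the paper: induction (the paper inducts on $x(v)$ rather than $\level(v)$, which is equivalent here), the slow-edge case handled by the $90\ceil{\log^2 n}$-block gap together with Corollary~\ref{broadcastCorollary}, and the fast-edge case handled via Cases~1--2, Observation~\ref{obs:fast:rescue}, and property $(\star)$ of the \thrtNOS\ to exclude collisions in the \textit{Fast} and \textit{Rescue} steps. Two harmless slips: in the slow case you only have $h_2(\parent(v))\ge h_2(v)+1$ (not equality, since $v$ need not be a maximal child), which is all the argument needs; and \textit{Fast}/\textit{Rescue} cross-collisions need no ruling out since they are distinct rounds of the block.
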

\begin{proof} \label{p:l:2h}
    We prove the lemma by induction on $x(v)$.
    For the base case $x(v)=0$,
    it is sufficient to consider only the source vertex $s$ as $x(s)=0$ and $x(v)>0$ for each $v\neq s$. As $s$ is informed at the beginning of an execution of the algorithm, the base case is 
    satisfied.
	
    For the inductive step, consider an arbitrary value $x>0$ and any node $v$ such that $x(v) = x$. Assume that the lemma 
    holds
    for all nodes $u$ such that $x(u) < x$. 
    As $x(v)>0$, the node $v$ is not the source node $s$. 
    So the parent $\parent(v)$ is defined in this case. 
    As $\level(v) = \level(\parent(v)) + 1$ and $h_2(v) \leq h_2(\parent(v))$, we know that $x(\parent(v)) < x(v) =x$. Therefore, by the inductive hypothesis, $\parent(v)$ was informed until the 
    block
    $x(\parent(v))<x=x(v)$.
	Consider the following cases:
    \begin{itemize}
    \item	$h_2(v) < h_2(\parent(v))$
		
		As \text{2-heights} of $v$ and $\parent(v)$ are different, the values of $x(v)$ and $x(\parent(v))$ differ by more than $90 \ceil{\log^2n}$ which means that 
        $p(v)$ is active in at least $30\log^2n$ blocks
        of the \levelled algorithm
        between the blocks $x(v)$ and $x(p(v))$. 
        Then, by Corollary~\ref{broadcastCorollary}, as $\parent(v)$ gets informed 
        until the block $x(\parent(v))$, in view of the inductive hypothesis, $v$ gets informed by the block 
        $x(\parent(v)) +90 \ceil{\log^2n} = x(v)$.

    \item $h_2(v) = h_2(\parent(v))$.
		
        If  the node $v$ is informed before the block $x(v)$, we are done. If not, we will inspect carefully Cases~1--2 which might appear during an execution of our algorithm, presented above in the description of the algorithm.
		
	For Case~1,
        observe that the node $\parent(v)$ sends the broadcast message in the \textit{Fast} step of 
        the block $x(v)$ -- see Observation~\ref{obs:fast:rescue}.1. 
        Thus, it is sufficient to show that $v$ receives this message transmitted by $\parent(v)$ in the \textit{Fast} step of the block $x(v)$. To get a contradiction, assume that there is another node $w$  transmitting in the \textit{Fast} step of the block
        $x(v)$, such that $w\neq \parent(v)$ is a neighbor of $v$ and therefore its transmission causes collision at $v$ in the \textit{Fast} step of the block $x(v)$ -- see Figure~\ref{fig:alg:analysis:cases}. 
        By Observation~\ref{obs:fast:rescue}.1, there exists $w'$ connected to $w$ such that 
        \begin{itemize}
            \item $w'$ is a child of $w$ connected to $w$ by a fast edge, and thus $h_2(w)=h_2(w')$;
            \item $x(v)=x(w')$ (since $w=\parent(w')$ transmits in the block $x(w')$ which is the same as the transmission block of $p(v)$ equal to $x(v)$) and, as $\level(v)=\level(w')$, the equality $h_2(v)=h_2(w')$ also holds.
        \end{itemize}
        The above relationships imply not only that $\level(\parent(v)) = \level(w)$ but also $h_2(\parent(v)) = h_2(w)$, since they are connected by fast edges with their children $v$, $w'$ such that $h_2(v)=h_2(w')$.
        \begin{figure}
        \centering
        \includegraphics[width=0.75\linewidth]{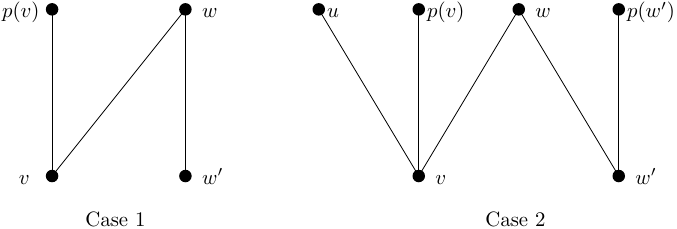}
        \caption{The cases in the analysis of our algorithm -- Lemma~\ref{l:alg:our:cases}.}
        \label{fig:alg:analysis:cases}
        \end{figure}        
        Thus both $w'$ and $v$ have the same \textit{2-height} as their respective parents. 
        This final setting (see Fig.~\ref{fig:alg:analysis:cases} again) implies 
        that existence of the edges $(v,w)$ and $(w',w)$ contradicts the requirement 
        of the definition of a \thrt (see Definition~\ref{def:2h}).

        We can make a similar reasoning for Case~2.
        Here the node $u$, a neighbor of $v$ such that $\level(u)=\level(\parent(v))$ (but $u\neq \parent(v)$ in this case), sends the message in \textit{Rescue} step of 
        the block $x(v)$ by Observation~\ref{obs:fast:rescue}.2. To get a contradiction,  assume that there is another node $w$ transmitting in the \textit{Rescue} step of 
        the block $x(v)$ 
        such that $w$ is also a neighbor of $v$ -- see Fig.~\ref{fig:alg:analysis:cases}.
        This fact implies in turn  that $\level(\parent(v)) = \level(w)$. 
        Moreover, by Observation~\ref{obs:fast:rescue}.2, as $w$ is transmitting in \textit{Rescue} step in the 
        block $x(v)$, 
        $w$ must be a neighbor of some node $w'$ with $x(w')=x(v)$, $\level(w')=\level(v)$ and thus 
        $h_2(w')=h_2(v)$. Additionally, $h_2(\parent(w'))=h_2(w')=h_2(v)$. 
        But these properties imply that simultaneous
        existence of edges $(v,w)$ and $(w',w)$ violates the requirements of the definition of a \thrt (see Definition~\ref{def:2h}).
		
	\end{itemize}
    To summarize, we have proved that, if $\parent(v)$ gets informed by the end of block          $x(\parent(v))$, then $v$ gets informed by the end of block $x(v)$. 
    This concludes the inductive step of the proof of the lemma.	 
\end{proof}

For each node $v$, the value $x(v)$ is bounded by $3\cdot (D + 30 \ceil{\log^2 n} h_2(r)) + 2 \in 
O(D+\log^3n)$. Since blocks are of constant length, {Lemma~\ref{l:alg:our:cases} 
and Corollary~\ref{cor:h2} imply the following theorem.}

\begin{theorem}\label{th:fast:alg}
	The \fasta accomplishes broadcast 
    {in $O(D + \min(D,\log n)\log^2 n)$} rounds, and uses a constructive labeling scheme. 
\end{theorem}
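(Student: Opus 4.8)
The plan is to derive the theorem as a direct consequence of Lemma~\ref{l:alg:our:cases} and Corollary~\ref{cor:h2}, after first observing that correctness of \fasta comes for free. I would note that \fasta is the \Levelled \executora with two additional steps \textit{Fast} and \textit{Rescue} inserted into every block, and that in those steps only \emph{already} informed nodes (re)transmit the broadcast message. Hence no node ever accepts a message from a wrong level, and, compared with a plain run of the \Levelled \executoraNOS, the set of informed nodes at the end of every block can only be larger. So the domination mechanism that makes the \Levelled \executora terminate (Theorem~\ref{th:alg:lev:exec:non}, Corollary~\ref{broadcastCorollary}) still makes \fasta terminate with every node informed; the extra steps can only accelerate.

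For the running time, I would invoke Lemma~\ref{l:alg:our:cases}, which says that every $v\in V$ is informed by the end of block $x(v)$, where
$$ x(v) = 3\left(\level(v) + 30\ceil{\log^2 n}\,(h_2(r)-h_2(v))\right) + \left((\level(v)-1)\bmod 3\right). $$
I would bound the three contributions separately: $\level(v)$ does not exceed the height of the BFS tree $T$, which is $O(D)$; the factor $h_2(r)-h_2(v)$ is at most $h_2(r)$, and by Corollary~\ref{cor:h2} the maximum 2-height of $T$ is at most $\min(D,\log n)$; and $(\level(v)-1)\bmod 3\le 2$. This yields $\max_{v\in V} x(v) = O\!\left(D + \min(D,\log n)\log^2 n\right)$. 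Since each block of \fasta consists of a fixed constant number of steps (\textit{Broadcast}, \textit{Feedback}, \textit{Go}, \textit{Fast}, \textit{Rescue}), the round at which the last node becomes informed is a constant times $\max_v x(v)$, which gives the claimed bound $O(D+\min(D,\log n)\log^2 n)$.

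It then remains to argue that the labeling scheme is constructive, i.e., produced by a simulation running in time $\mathrm{poly}(n)$. I would first build, in polynomial time, a BFS tree $T$ of $G$ rooted at $s$ that is a \thrt, using Lemma~\ref{lem:2hrt}; this fixes $\level(\cdot)$, $h_2(\cdot)$ and hence all target block numbers $x(\cdot)$. The bits \textit{Join}, \textit{Lev} and \textit{Stay} are read off directly from $T$ and the domination structure; the bits \textit{Go} are assigned stage by stage by the polynomial-time derandomized procedure of Lemma~\ref{DetBroadcastLemmaShort}, exactly as in Section~\ref{sec:alg:lev:constr}; and the bits \textit{Fast} and \textit{Rescue} are set while simulating the algorithm block by block, following Case~1 / Case~2 of the description, each of which only inspects the current set $\DOM_{x(v)}$ and the BFS/2-HRT structure and is therefore evaluable in polynomial time. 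All labels are of constant length, so the scheme is constructive.

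I expect the genuinely hard part of the whole development to lie not here but upstream, in Lemma~\ref{l:alg:our:cases}: showing that the transmission reaching an uninformed fast-edge child $v$ in the \textit{Fast} (resp. \textit{Rescue}) step of block $x(v)$ is \emph{collision-free}. The obstacle is a second neighbour $w$ of $v$ transmitting in the same step; Observation~\ref{obs:fast:rescue} then supplies a child $w'$ of $w$ with $x(w')=x(v)$ and $\level(w')=\level(v)$, whence $h_2(w')=h_2(v)$ and $h_2(\parent(w'))=h_2(w')$, and the simultaneous existence of the edges $(v,w)$ and $(w',w)$ — with $v$ and $w'$ on the same level sharing the common lower-level neighbour $w$, all of equal 2-height — contradicts property $(\star)$ of a \thrt (Definition~\ref{def:2hrt}). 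This is exactly where the requirement of a 2-HRT, strictly stronger than that of a gathering tree, is used, and carrying the argument through all sub-cases (parent in or out of the dominating set; newly informed versus earlier-informed rescuer) is the delicate technical core.
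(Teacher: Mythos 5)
Your proposal is correct and follows essentially the same route as the paper: correctness is inherited from the \Levelled \executora since the extra \textit{Fast}/\textit{Rescue} transmissions only add informed senders, and the time bound is obtained by combining Lemma~\ref{l:alg:our:cases} with the bound $h_2(r)\le\min(D,\log n)$ from Corollary~\ref{cor:h2} and the constant block length, with constructiveness coming from Lemma~\ref{lem:2hrt} and the derandomized assignment of the \textit{Go} bits. Your closing remark correctly locates the real technical content upstream in Lemma~\ref{l:alg:our:cases}, which is exactly where the paper places it as well.
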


\subsection{\fastesta -- Optimization of the polylog Additive Summand} \label{sec:fastest:alg} 

In this section,  we describe a modification of the \fasta which we call the \fastestaNOS. It runs in time only $O(D + \log^2 n) $, which is the optimal broadcasting time even for radio networks of known topology. 
A constant length labeling scheme for this optimal broadcasting can be assigned by an appropriate randomized 
algorithm such that we obtain labels guaranteing $O(D+\log^2n)$ broadcast with high probability.
Thus, using the probabilistic method, we get a nonconstructive constant length labeling scheme supporting broadcasting in optimal time.

The \fastestaNOS\ is almost the same as the \fasta, with the changed  value of the \textit{ultimate block number} and a random assignment of the values of bits \textit{Go}. The random assignment of the bits \textit{Go} is made similarly as in Theorem~\ref{th:alg:lev:exec:non} (Theorem~12 in \cite{EllenG20}). Note however that the change of the definition of the ultimate block number has an impact on the labeling of nodes. Thus, for a given communication graph, the labels for the \fasta and the labels for the \fastesta are usually different. 

The \emph{ultimate block number} $z(v)$ of a node $v$ is now defined as follows. $z(s)=1$, for the source node $s$, and 
for each $v\in V\setminus\{s\}$,  $z(v)$ is the minimum $z$ satisfying the following condition:\\
$z\mod 6\ceil{\log n}=y(v) \mod 6 \ceil{\log n}$ and $\parent(v)$ becomes informed before block $z$,
where
$y(v) = 6\cdot \left(\level(v) + (h_2(r) - h_2(v))\right) + \left(level(v)-1\right) \mod 3$.
\begin{observation}\label{obs:ultimate:z}
If $z(u)=z(v)$ and $\level(u)=\level(v)$ then $h_2(u)=h_2(v)$.
\end{observation}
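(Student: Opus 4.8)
The plan is to reduce the claim to a congruence on the $2$-heights and then exploit that $2$-heights occupy a short interval. First I would unfold the definition: for any non-root node $w$, the quantity $z(w)$ is the least integer $z$ with $z\bmod 6\ceil{\log n}=y(w)\bmod 6\ceil{\log n}$ satisfying the additional requirement that $\parent(w)$ be informed before block $z$; in particular $z(w)\equiv y(w)\pmod{6\ceil{\log n}}$. (If $u$ or $v$ is the root then, as $\level(u)=\level(v)$, both are the root and there is nothing to prove.) Hence $z(u)=z(v)$ forces $y(u)\equiv y(v)\pmod{6\ceil{\log n}}$.

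Next I would substitute $y(w)=6\bigl(\level(w)+h_2(r)-h_2(w)\bigr)+\bigl((\level(w)-1)\bmod 3\bigr)$ and set $l=\level(u)=\level(v)$. The summands $6(l+h_2(r))$ and $(l-1)\bmod 3$ are identical for $u$ and for $v$, so $y(u)-y(v)=6\bigl(h_2(v)-h_2(u)\bigr)$, and combining with the previous step gives $6\bigl(h_2(v)-h_2(u)\bigr)\equiv 0\pmod{6\ceil{\log n}}$, i.e. $h_2(u)\equiv h_2(v)\pmod{\ceil{\log n}}$.

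It then remains to pin this congruence down to an equality. By Lemma~\ref{l:2h:max} we have $0\le h_2(u),h_2(v)\le\log n$. In fact a rooted tree of $2$-height $k\ge 1$ has more than $2^k$ nodes (a complete binary tree of height $k$ is the extremal case), so an $n$-node tree with $n$ a power of $2$ satisfies $h_2(r)\le\log n-1$ (the case $h_2(r)=0$ being trivial), and by monotonicity of $h_2$ up the tree every node $w$ has $h_2(w)\le\log n-1$. Therefore $|h_2(u)-h_2(v)|\le\log n-1<\ceil{\log n}$, which together with $h_2(u)\equiv h_2(v)\pmod{\ceil{\log n}}$ yields $h_2(u)=h_2(v)$, as claimed. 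The only step requiring care is this last one: one must be sure that the $2$-heights occurring at a common level span strictly less than the modulus $\ceil{\log n}$, which is exactly why $z(\cdot)$ is defined with coefficient $6$ (to isolate the level and the mod-$3$ index from the residue encoding $h_2$) and modulus $6\ceil{\log n}$.
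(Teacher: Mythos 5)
Your proof is correct and follows essentially the same route as the paper: reduce $z(u)=z(v)$ to a congruence $y(u)\equiv y(v)\pmod{6\ceil{\log n}}$, cancel the common level-dependent terms to get $h_2(u)\equiv h_2(v)$ modulo $\ceil{\log n}$, and conclude equality from the fact that all $2$-heights lie strictly below $\log n$. Your extra justification of the strict bound $h_2(w)\le \log n-1$ (via the node count of a tree of given $2$-height) is a worthwhile refinement, since the paper's Lemma~\ref{l:2h:max} only states the non-strict bound $h_2\le\log n$ while its proof of the observation silently uses strictness.
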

\begin{proof}
    Assume that $z(u)=z(v)$ and $\level(u)=\level(v)$.
    Thus, according to the definitions, $y(u) \mod 6\log n=y(v)\mod 6\log n$.
    This in turn implies that $(h_2(r)-h(u))\mod 6\log n=(h_2(r)-h(v))\mod 6\log n$,
    but this relationship is satisfied only when $h_2(u)=h_2(v)$, 
    {since $h_2(w)<\log n$ for each node $w$.}
\end{proof}

The labeling scheme for the \fastesta and the algorithm itself are defined as in the case of the \fastaNOS, with two differences in the labeling scheme:
\begin{itemize}
\item 
the assignment of the \textit{Fast},  and \textit{Rescue} bits of each node $v$ is determined by the values of $z(v)$ instead of $x(v)$.
More precisely, assume that $v$ has not received the broadcast message until the block $z(v)$
and $v$ is connected with its parent by a fast edge, i.e., $h_2(v)=h_2(\parent(v))$.

Consider the following cases regarding the status of the node $p(v)$ at the beginning of block $z(v)$:
\begin{enumerate}
\item[\textbf{Case~1.}] 
    The node $\parent(v)$  is in the dominating set $\DOM_{z(v)}$.

    Then $\parent(v)$ has some feedback node $w$ informed in 
    the block
    $z(v)$. We set $Fast$ bit of $w$ to $1$. 

\item[\textbf{Case~2.}]  
    The node $\parent(v)$ is not in the dominating set $\DOM_{z(v)}$.

 As $v$ is still not informed before the block $z(v)$, there must be a neighbor $u$ of $v$ in the dominating set $\DOM_{z(v)}$ with $\level(u) = \level(\parent(v))$. 
 Then, as $u$ is in $\DOM_{z(v)}$, it has some feedback node $w$ first informed in 
 the block $z(v)$. We will set \textit{Rescue} bit of $w$ to $1$. 
\end{enumerate}
\item
bits $Go$ are assigned as follows:

Define a feedback node of a block $r$ to be each node $v$ such that $v$ is a feedback node of some
node $v'\in\DOM_r$ in the block $r$.
For each block $r$, choose the value $p_r\in\{1,\ldots,\log n\}$ randomly with uniform distribution. Then set the value of the bit \textit{Go} to $1$ with probability $1/2^{p_r}$ for each feedback node of the block $r$ independently.
\end{itemize}
The behavior of nodes based on their labels is the same as in the \Levelled \fastaNOS.

\begin{lemma}\label{lem:fast:z}
    If $h_2(v)=h_2(\parent(v))$ then $z(p(v))<z(v)\leq z(p(v))+7$ and $v$ receives the broadcast message by the end of block $z(v)$.
\end{lemma}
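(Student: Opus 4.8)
The plan is to prove both assertions simultaneously by induction on $\level(v)$, with inductive hypothesis that every node $u$ with $\level(u)<\level(v)$ is informed by the end of block $z(u)$; the base case $v=s$ is immediate, since $z(s)=1$ and $s$ is informed at the start. So fix $v\neq s$ with $h_2(v)=h_2(\parent(v))$, and note that by the inductive hypothesis $\parent(v)$ is informed by the end of block $z(\parent(v))$.

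First I would dispatch the arithmetic. Since $\level(v)=\level(\parent(v))+1$ and $h_2(v)=h_2(\parent(v))$, the definition of $y$ gives
\[
y(v)-y(\parent(v))=6+\big(\level(\parent(v))\bmod 3-(\level(\parent(v))-1)\bmod 3\big)\in\{4,7\}.
\]
Because $z(v)$ is defined as the least block in the residue class $y(v)\bmod 6\lceil\log n\rceil$ coming after $\parent(v)$ is informed, and $\parent(v)$ is informed by block $z(\parent(v))$ by the inductive hypothesis, the value $z(\parent(v))+\big(y(v)-y(\parent(v))\big)$ is admissible, so $z(v)\le z(\parent(v))+7$; and $z(v)\equiv y(v)\not\equiv y(\parent(v))\equiv z(\parent(v))\pmod{6\lceil\log n\rceil}$ together with a short modular argument (using that blocks are processed in increasing order and $6\lceil\log n\rceil\ge 8$) gives $z(v)>z(\parent(v))$. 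A byproduct I would record here is that, since $6\lceil\log n\rceil$ is divisible by $3$, $z(v)\equiv y(v)\equiv\level(v)-1\pmod 3$; hence, in the level-rotation schedule of the \Levelled \executoraNOS, block $z(v)$ is exactly a block in which nodes of level $\level(v)-1$ may transmit in the \textit{Fast}/\textit{Rescue} steps and nodes of level $\level(v)$ listen in those steps.

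For the delivery statement: if $v$ is already informed before block $z(v)$ we are done, so assume $v$ is uninformed at the start of block $z(v)$. Then, since $h_2(v)=h_2(\parent(v))$, the label-assignment for the \fastestaNOS\ follows Case~1 or Case~2 of its description. In Case~1 ($\parent(v)\in\DOM_{z(v)}$) the feedback node of $\parent(v)$ in block $z(v)$ carries \textit{Fast}$=1$, so $\parent(v)$ retransmits the broadcast message in the \textit{Fast} step of block $z(v)$; in Case~2 ($\parent(v)\notin\DOM_{z(v)}$) a designated neighbor $u\in\DOM_{z(v)}$ of $v$ with $\level(u)=\level(\parent(v))$ retransmits it in the \textit{Rescue} step of that block. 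In both cases the transmitter is a neighbor of $v$ on level $\level(v)-1$, and by the residue remark above $v$ listens in the relevant step; so everything reduces to showing that no other neighbor of $v$ transmits then.

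The crux — and the step I expect to be hardest to make airtight — is collision-freeness, which is exactly where property $(\star)$ of the \thrtNOS\ $T$ enters, with Observation~\ref{obs:ultimate:z} playing the role that injectivity of $x(\cdot)$ on each level played in the proof of Lemma~\ref{l:alg:our:cases}. Suppose some $w$ distinct from the intended transmitter is a neighbor of $v$ transmitting in the same \textit{Fast} (resp.\ \textit{Rescue}) step of block $z(v)$. Applying the $z$-analogue of Observation~\ref{obs:fast:rescue} (the \fastestaNOS\ uses the very same Case~1/Case~2 rule with $z$ in place of $x$), this transmission of $w$ is caused by a node $w'$ with $z(w')=z(v)$, $(w,w')\in E$, $h_2(w')=h_2(\parent(w'))$, and $w$ a neighbor of $w'$ on level $\level(w')-1$; moreover $w'\neq v$ (in the \textit{Fast} case because $w=\parent(w')\neq\parent(v)$, and in the \textit{Rescue} case by injectivity of the assignment $\phi_{z(v)}$ of Observation~\ref{obs:fast:rescue} together with uniqueness of the designated rescue node of $v$). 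Since $w$ transmits in block $z(v)$ it sits on level $\equiv z(v)\equiv\level(v)-1\pmod 3$, and being adjacent to $v$ it is in fact on level $\level(v)-1$, so $\level(w')=\level(w)+1=\level(v)$. Then Observation~\ref{obs:ultimate:z}, applied to $w'$ and $v$ (same level, $z(w')=z(v)$), yields $h_2(w')=h_2(v)$, and therefore $v,w'$ are two distinct nodes on level $\level(v)$ with $h_2(v)=h_2(w')=h_2(\parent(v))=h_2(\parent(w'))$ having the common neighbor $w$ on level $\level(v)-1$ — contradicting $(\star)$. Hence $v$ receives the broadcast message in block $z(v)$, completing the induction.
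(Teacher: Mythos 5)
Your collision-freeness argument --- the heart of the lemma --- is correct and essentially identical to the paper's: the same Case~1 (\textit{Fast}) / Case~2 (\textit{Rescue}) split, the same use of Observation~\ref{obs:ultimate:z} to force $h_2(w')=h_2(v)$, and the same contradiction with property $(\star)$ of the \thrtNOS. The computation of $y(v)-y(\parent(v))$ and your side remark that $z(v)\equiv\level(v)-1\pmod 3$ (so $v$ indeed listens in the relevant steps) are also fine.

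The problem is the induction you wrap around it. Your inductive hypothesis --- that \emph{every} node $u$ with $\level(u)<\level(v)$ is informed by the end of block $z(u)$ --- is stronger than what the lemma asserts and is false in general: the inductive step only establishes the conclusion for nodes joined to their parents by a \emph{fast} edge, whereas a node $u$ with $h_2(u)<h_2(\parent(u))$ is informed via the \textit{Broadcast}/\textit{Go} mechanism, for which there is no deterministic guarantee of the form ``informed by block $z(u)$'' (only a with-high-probability bound, handled separately in Theorem~\ref{th:fastest:alg}). So the induction does not close. The fix is that you do not need this hypothesis for the delivery claim: the definition of $z(v)$ already \emph{stipulates} that $\parent(v)$ becomes informed before block $z(v)$, which is exactly what Cases~1 and~2 require, and this is how the paper argues. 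Where the hypothesis would genuinely matter is the bound $z(v)\le z(\parent(v))+7$, which needs $\parent(v)$ to be informed by (roughly) block $z(\parent(v))$; this holds by induction \emph{along a fast track} but not at the track's first node, and in the application the first edge of each fast track is charged a separate delay of up to $6\lceil\log n\rceil$ blocks. So: drop the level induction, obtain the delivery claim directly from the definition of $z(v)$, and invoke the $+7$ bound only in the fast-track context where $\parent(v)$ is known to be informed by block $z(\parent(v))$.
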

\begin{proof}
    The inequality $z(\parent(v))<z(v)$ follows directly from the definition.
    
    Assume that $h_2(v)=h_2(\parent(v)$. As $\level(v)=\level(\parent(v))+1$,
    $$y(v)-y(p(v))=6+(level(v)\mod 3)-((level(v)-1)\mod 3).$$ 
    Thus $y(p(v))<y(v)\leq y(p(v))+7$. This in turn implies that $z(v)\leq z(\parent(v))+7$.

According to the definition of $z(v)$, $\parent(v)$ receives the broadcast message and gets informed before block $z(v)$. In order to show that $v$ receives the broadcast message by the end of block $z(v)$, consider two cases corresponding to the various assignments of the bits \textit{Fast} and \textit{Rescue}:
\begin{enumerate}
\item[\textbf{Case~1.}] 
    Node $\parent(v)$  is in the dominating set $\DOM_{z(v)}$.

    Then $\parent(v)$ has some feedback node $u$ in the block $z(v)$ first informed in 
    this block and the \textit{Fast} bit of $u$ is equal $1$. The reception of the value of $1$ of the bit \textit{Fast} in the \textit{Feedback} step of the block $z(v)$ informs $\parent(v)$ that is should transmit the broadcast message in the \textit{Fast} step of the current block. Indeed,
    assume that $v$ does not receive this message from $\parent(v)$ sent in the \textit{Fast} step.
    Then, another neighbor $w$ of $v$ transmits as well, causing a collision.
    But the fact that $w$ transmits in the \textit{Fast} step of block $z(v)$ implies that $w$ is the parent of $w'$ such that $\level(w')=\level(v)$, $h_2(w')=h_2(w)$ and $z(v)=z(w')$,
    by the definition of the function $z$.
    Then, by Observation~\ref{obs:ultimate:z}, $h_2(w')=h_2(v)$.
    Therefore the edges $(v,w)$ and $(w',w)$ contradict the fact that our BFS tree $T$ is a \thrtNOS. 

\item[\textbf{Case~2.}]  
    Node $\parent(v)$ is not in the dominating set $\DOM_{z(v)}$. 

 As $v$ is still not informed before the block $z(v)$, there is a neighbor $u$ of $v$ in the dominating set $\DOM_{z(v)}$ with $\level(u) = \level(\parent(v))$. 
 Then, as $u$ is in $\DOM_{z(v)}$, it has some feedback node $u'$ first informed in 
 the block $z(v)$ such that the \textit{Rescue} bit of $u'$ is equal $1$. 
 The node $u'$ sends the value $1$ of its \textit{Rescue} bit to $u$.
 The reception of the value of $1$ of the bit \textit{Rescue} in the \textit{Feedback} step of the block $z(v)$ informs $u$ that is should transmit the broadcast message in the \textit{Rescue} step of the current block. Indeed, assume that $v$ does not receive this message from $u$ sent in the \textit{Rescue} step.
 Then, other neighbor $w$ of $v$ transmits as well, causing a collision.
    But the fact that $w$ transmits in the \textit{Rescue} step of $z(v)$ implies that 
    there is a neighbor $w'$ of $w$ such that $\level(w')=\level(v)$, $z(w')=z(v)$ and therefore, 
    by Observation~\ref{obs:ultimate:z}, $h_2(w')=h_2(v)$.
    The simultaneous existence of the edges $(v,w)$ and $(w',w)$ contradicts the fact that our BFS tree $T$ is a \thrtNOS. 
\end{enumerate}
    
\end{proof}

\begin{theorem}\label{th:fastest:alg}
The  \fastesta    runs in time $O(D + \log^2 n) $ with high probability.
\end{theorem}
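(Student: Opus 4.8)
The plan is to combine the \emph{deterministic} fast-edge guarantee already proved in Lemma~\ref{lem:fast:z} with a Bar-Yehuda–style decay analysis for the slow edges, exactly along the lines of the proof of Theorem~\ref{th:alg:lev:exec:non} (Theorem~12 of \cite{EllenG20}); the new ingredient is that a BFS tree which is a \thrtNOS\ has only few slow edges on every root-to-leaf path. Concretely, I would first record the combinatorial fact that drives everything. By Definition~\ref{def:2h} we have $h_2(\parent(v))\ge h_2(v)$ for every non-root $v$, so $h_2$ is non-increasing along every root-to-leaf path of $T$; since a slow edge strictly decreases $h_2$ and, by Corollary~\ref{cor:h2}, $h_2(r)\le\min(D,\log n)$, every root-to-leaf path of $T$ contains at most $\min(D,\log n)\le\log n$ slow edges. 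A companion remark, obtained by a one-line induction mirroring the recursion defining $z$, is that $z(v)=O(D+\log^2 n)$: along the path from $s$ to $v$ each of the at most $D$ fast edges raises $z$ by at most $7$ (the inequality $z(v)\le z(\parent(v))+7$ of Lemma~\ref{lem:fast:z}), and each of the at most $\log n$ slow edges raises $z$ by at most one full window, i.e.\ by $6\lceil\log n\rceil+O(1)$, directly from the definitions of $z$ and $y$.

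The probabilistic core is then to show that, with high probability, every node is actually informed within $O(D+\log^2 n)$ blocks, and the only place randomness is needed is for crossing slow edges. Fix a node $v$ and the path $P=(s=v_0,v_1,\dots,v_\ell=v)$ in $T$, and follow the broadcast frontier along $P$. Whenever the frontier sits at an informed node $v_i$ whose outgoing edge is \emph{fast}, Lemma~\ref{lem:fast:z} moves it to $v_{i+1}$ within at most $7$ blocks regardless of the random bits; when $\parent(v_{i+1})$ has already left the dominating set, the \textit{Rescue} step (Case~2 of the construction) plays the role of the \textit{Fast} step, and this causes no interfering transmission precisely because $T$ is a \thrtNOS\ — all of this is already contained in Lemma~\ref{lem:fast:z}. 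Whenever the frontier sits at an informed $v_i$ whose outgoing edge is \emph{slow}, the random \textit{Go} bits realize the decay protocol of \cite{Bar-YehudaGI92}: within one decay window (the $6\lceil\log n\rceil$ consecutive blocks, whose residues mod $6\lceil\log n\rceil$ run over all level/$2$-height classes scheduled by $y$) the uniformly random exponents $p_r\in\{1,\dots,\lceil\log n\rceil\}$ give $v_{i+1}$ a constant probability of being informed, exactly as in the analysis underlying Theorem~\ref{th:alg:lev:exec:non}; here one uses the minimality of $\DOM$ to bound the number of competing transmitters and thereby lower-bound this success probability by an absolute constant $c>0$, uniformly over the state of the execution. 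Coupling the per-window success events of consecutive windows with an independent $\mathrm{Bernoulli}(c)$ sequence, the number of decay windows the frontier spends crossing the at most $\log n$ slow edges of $P$ is stochastically dominated by the number of $\mathrm{Bernoulli}(c)$ trials needed to collect $\log n$ successes.

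To finish, I would apply a Chernoff bound and a union bound. Chernoff shows that $O(\log n)$ decay windows, i.e.\ $O(\log^2 n)$ blocks, suffice for the frontier along $P$ to cross all slow edges of $P$, except with probability at most $e^{-\Omega(\log n)}\le n^{-3}$ after fixing the constants; adding the deterministic $O(D)$ cost of the fast edges on $P$, node $v$ is informed within $O(D+\log^2 n)$ blocks with probability at least $1-n^{-3}$, hence, since blocks have constant length, within $O(D+\log^2 n)$ rounds. A union bound over the $n$ nodes gives broadcasting time $O(D+\log^2 n)$ with probability at least $1-1/n$, which is the statement. A constant-length labeling scheme supporting this time then follows by the probabilistic method: fix any outcome of the random \textit{Go} bits on which the bound holds.

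The delicate step is the decay analysis of the second paragraph, namely matching the classical level-by-level decay argument with the scheduling hard-wired into $z(v)$ (the modulus $6\lceil\log n\rceil$ and the shift $(\level(v)-1)\bmod 3$), so that (i) each slow crossing is served by a ``fresh'' decay window carrying a uniformly random transmission probability, (ii) the per-window success probability is bounded below by an absolute constant uniformly over the current dominating set and frontier — this is exactly where minimality of $\DOM$ together with the \thrtNOS\ property of $T$ are invoked to rule out extra collisions — and (iii) the success events of different windows are independent enough for the Chernoff bound. Once these points are settled, the bound of at most $\log n$ slow edges per path together with the deterministic fast-edge estimate of Lemma~\ref{lem:fast:z} assemble into $O(D+\log^2 n)$ in a routine way.
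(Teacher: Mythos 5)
Your proposal is correct and follows essentially the same route as the paper's proof: decompose each root-to-node path into at most $\log n$ fast tracks traversed deterministically via Lemma~\ref{lem:fast:z} and the \thrtNOS\ property, cross the at most $\log n$ slow edges using the random \textit{Go} bits with per-step success probability $\Omega(1/\log n)$ (Lemma~9 of \cite{EllenG20}), account for the $O(\log n)$ synchronization wait at the start of each fast track, and finish with a Chernoff bound and a union bound over nodes. The only cosmetic difference is that you package the slow-edge analysis into decay windows with constant per-window success probability, whereas the paper sums per-block indicators over $O(\log^2 n)$ blocks; the two are equivalent.
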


\begin{proof}
Let $u$ be an arbitrary node.
We will show that $u$ becomes informed in 
$O(\level(v) + \log^2n)$,
rounds with probability at least $1-n^{-2}$.
This estimation combined with the union bound implies the result stated in the theorem.

First, we introduce the auxiliary notion of a \emph{fast track}.
It is a path $(v_1,\ldots, v_k)$ in $T$ such that 
\begin{itemize}
    \item $v_i=\parent(v_{i+1})$ for each $i\in[1,k-1]$, 
    \item $h_2(v_1)=h_2(v_2)=\cdots= h_2(v_k)$ for each $i\in[1,k]$, 
    \item $v_1=s$ or $(\parent(v_1),v_1)$ is a fast edge.
\end{itemize}
That is, all edges of a fast track are fast edges.
Observe that, according to Lemma~\ref{lem:fast:z}, if $(v_1,\ldots,v_k)$ is a fast track and $v_1$ receives the broadcast message in some block $b$ then $z(v_k)=b+O(k)$ and $v_k$ receives the broadcast message by the end of block $z(v_k)$. As the maximum of $h_2(v)$ over all $v\in V$ is {smaller} than $\log n$ and the value of $h_2$ cannot increase on a simple path from the root $s$ to any node $v$ of $T$, a path from $s$ to $v$ can be split into at most $\log n$ fast tracks and at most $\log n$ slow edges. As we have shown, the number of rounds needed for broadcasting a message along the fast tracks is linear with respect to the total length of these tracks which is $O(D)$. Thus, it remains to analyze the number of rounds needed to pass the broadcast message through $h_2(v)-h_2(s)<\log n$ slow edges.

Recall that a feedback node of a block $r$ is each node $v$ such that $v$ is a feedback node of some
node $v'\in\DOM_r$ in the block $r$. The random choice of the bits $Go$ assigned in labels of feedback nodes, described above, implies that, for each frontier node $w$, $w$ receives the broadcast message in the \textit{Go} step of the block $r$ with probability larger than $1/(30\log n)$, as proved in Lemma~9 in \cite{EllenG20}. 

Let $(s=v_1,\ldots, v_p=v)$ be a path in $T$ from the root $s$ to some node $v$.
Let $b_1, b_2,\ldots$ be the sequence of blocks with the property that $b_i$ is the $i$th block of an execution of our algorithm such that, at the beginning of the block $b_i$, the largest index $j$ such that $v_j$ is informed is such that $(v_j,v_{j+1})$ is a slow edge. For a fixed $i$,  let the index $j$ satisfying the properties from the previous sentence be denoted by $r_i$.
Let $X_1, X_2,\ldots$ be a sequence of independent random variables such that $X_i=1$ iff the node $v_{r_i+1}$ receives the broadcast message in the \textit{Go} step of the block $b_i$. Then $\prob(X_i=1)\geq 1/(30\log n)$ by Lemma~9 from \cite{EllenG20}. Moreover, if $\sum_{i=1}^a X_i\ge h_2(s)-h_2(v)$ then the broadcast message is already delivered through all slow edges of the path $v_1,\ldots,v_p$ until block $b_a$. 

Let $\alpha=c\cdot 60\log^2n$, let $c\ge 8$ be a constant and let $X=\sum_{i=1}^\alpha X_i$.
Then $E(X)\ge \alpha\cdot \frac1{30\log n}=2c\log n$.
If $X\ge h_2(s)-h_2(v)$ then the broadcast message is delivered through all slow edges of the path $v_1,\ldots,v_p=v$ in at most $\alpha=O(\log^2n)$ rounds.
Using standard Chernoff inequalities, we get
\begin{align*}
    \prob\left(X<h_2(s)-h_2(v)\right) &< \prob( X< \log n ) \\
    &< \prob\left(X< \frac12 E(X)\right)\\
    &\le e^{-EX/8}\le \left(\frac1{n}\right)^{2c/8}\le \frac1{n^2}    
\end{align*}
for $c\ge 8$.

\commentt{
 Thus, every node $v$ with the same 2-height value as its parent will get its message by the round $z(v)$. Let us look at the difference between the value $z(v)$ of some node $v$ and $z(\parent(v))$ of the parent of $v$ with $h_2(v) = h_2(\parent(v))$. 
 The value of $y(v)$ is larger than $y(p(v))$ by at most $6$ and at least $3$, so $z(v)$ is larger than $z(p(v))$ by at most $6$ as well. \tj{? A OPERACJE MOD NIE MOGA TEGO ZMIENIC ? }    
}
Let $v_i$ be a node in the path $v_1,\ldots,v_p$ beginning a fast track, i.e., such that $(v_{i-1},v_i)$ is a slow edge and $(v_i,v_{i+1})$ is a fast edge. Then, after reception of the broadcast message in (an arbitrary) block $r$, its ultimate round $z(v_i)$ might be by at most $6\log n$ larger than $r$.
Thus, the number of rounds lost because of slowdowns on the borders between slow edges and fast tracks is at most $6\log n(h_2(s)-h_2(v))\in O(\log^2n)$ for each node $v$, with probability at least $1-n^{-2}$.

The number of rounds required for delivery of the broadcast message to an arbitrary node $v$ is the sum of the number of rounds needed to pass the message through the fast tracks, the number of rounds needed to pass the message through slow edges and the number of rounds of slowdowns on the borders between slow edges and fast tracks. As we proved above, this sum is $O(D+\log^2n)$ with probability at least $1-1/n^2$, for each node $v$. Thus, by the union bound, the probability that any node does not receive the broadcast message within $O(D+\log^2n)$ rounds, is at most $1/n$. 
This concludes the proof.
\end{proof}

Note that the only random ingredient is in the labeling scheme assignment. Given a labeling scheme, the  \fastesta is a deterministic broadcasting algorithm. Hence Theorem \ref{th:fastest:alg} implies the following corollary.

\begin{corollary}
There exists a constant length labeling scheme supporting broadcast in time $O(D+\log^2n)$.
\end{corollary}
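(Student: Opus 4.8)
The plan is to derive the corollary from Theorem~\ref{th:fastest:alg} by a direct application of the probabilistic method, observing that all randomness in the \fastesta lives entirely in the construction of the labeling scheme. First I would fix an arbitrary graph $G=(V,E)$ with $n$ nodes and an arbitrary source node $s$, and run the (randomized) labeling procedure described just before Theorem~\ref{th:fastest:alg}: build a BFS tree $T$ of $G$ rooted at $s$ that is a \thrt\ (possible by Lemma~\ref{lem:2hrt}), then assign the bits \textit{Join}, \textit{Stay}, \textit{Fast}, \textit{Rescue} deterministically as dictated by the values $z(v)$, and assign the \textit{Go} bits randomly using the choices $p_r\in\{1,\dots,\log n\}$ and independent $1/2^{p_r}$-biased coins. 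Each node's label is the fixed-size tuple $(\textit{Join},\Lev,\textit{Stay},\textit{Go},\textit{Fast},\textit{Rescue})$, so the length of the scheme is a constant number of bits, independent of $n$, $D$, and $\Delta$.

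Next I would invoke Theorem~\ref{th:fastest:alg}, which states that, with probability at least $1-1/n$ over these random choices, the resulting deterministic broadcasting algorithm \fastesta\ delivers the source message to all nodes within $O(D+\log^2 n)$ rounds. Since $1-1/n>0$ for $n\ge 2$, there exists at least one outcome of the random choices — equivalently, at least one fixing of the \textit{Go} bits — for which the induced labeling scheme has the property that \fastesta\ broadcasts in $O(D+\log^2 n)$ rounds. I would then emphasize, as in the remark preceding the corollary, that once the labels are fixed the algorithm \fastesta\ is a purely deterministic distributed procedure that uses only each node's own label. Hence this fixed labeling scheme, which has constant length, supports broadcasting in time $O(D+\log^2 n)$, as claimed.

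There is essentially no substantive obstacle here; the only points that warrant a sentence of care are (i) confirming that the label length is genuinely a fixed constant — it is, being a tuple of $O(1)$ bits — and (ii) noting that the argument produces, for each fixed $(G,s)$, a scheme tailored to that instance, which is exactly what the notion of a labeling scheme for a given network requires. All the real work — the analysis of fast tracks, the Chernoff bound on the number of slow-edge traversals, and the collision-freeness of the \textit{Fast} and \textit{Rescue} transmissions guaranteed by the \thrt\ property — has already been carried out in the proof of Theorem~\ref{th:fastest:alg}, so the corollary follows immediately.
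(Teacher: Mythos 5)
Your proposal is correct and follows essentially the same route as the paper: the paper also notes that the only randomness lies in the label assignment, so Theorem~\ref{th:fastest:alg} together with the probabilistic method yields a fixed constant-length labeling scheme for which the deterministic \fastesta\ broadcasts in $O(D+\log^2 n)$ rounds. Your added remarks about the constant label length and the per-instance nature of the scheme are consistent with, and slightly more explicit than, the paper's brief derivation.
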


In view of the lower bound from \cite{ABLP}, our broadcasting time $O(D+\log^2n)$ is optimal, even when compared to broadcasting time in radio networks of known topology.

\subsection{Acknowledged broadcasting}

In this section we consider a communication task slightly more demanding than broadcasting. It is called {\em acknowledged broadcasting}.
In acknowledged broadcasting working in time $T$, we require that, not only all nodes know the broadcast message $M$ until $T$ but also each node knows the round number $T$ such that all nodes receive the broadcast message until round $T$. 

We now present a relatively simple modification of all broadcasting algorithms presented in  this paper which transforms them into acknowledged broadcasting algorithms. The proposed modified algorithms preserve time complexity of the original algorithms and extend labels by a constant length. 
Consider any of our broadcasting algorithms and a fixed instance of the broadcasting problem. Let $T$ be a \gt\  of the communication graph $G=(V,E)$ of that instance, with the root $s$ equal to the source node of broadcasting.
Then, choose an arbitrary node $v$ that receives the broadcast message (i.e., becomes informed) in the latest round. 
Observe that no node transmits any message after the block in which $v$ receives the broadcast message. Indeed, as only neighbors of uninformed nodes are active at the beginning of each block, there are no transmitting nodes if every node is already informed.
Let $P$ be the unique simple path in $T$ from $s$ to $v$.
Using two additional bits in the label, we encode which nodes are on the path $P$ and which node is the last node $v$ on the path. 
Directly after the block $r$ in which $v$ becomes informed, it transmits the special message \textit{Stop}. Then, each node located on the path $P$ transmits this message \textit{Stop} directly after reception of this message.
When the source gets the message \textit{Stop} in round $t_1$, it learns that $t_1$ is an upper bound on the number of rounds of the broadcast, and $t_1$ is larger than the actual broadcast time by at most the length of $P$ which is not larger than $D$.
Then the source broadcasts the value of $t_1$, and all nodes can assume that the broadcast algorithm is finished after $2t_1$ rounds.

\begin{corollary}\label{cor:broadcasting:ack}
    ~
    \begin{enumerate}
        \item  Acknowledged broadcasting in time $O(D + \min(D,\log n)\log^2 n)$ is supported by some constructive labeling scheme of constant length.
        \item There exists a labeling scheme of constant length supporting acknowledged broadcasting in time $O(D+\log^2n)$.
    \end{enumerate}
\end{corollary}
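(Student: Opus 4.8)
The plan is to carry out rigorously the reduction sketched immediately before the statement: transform any of our broadcasting algorithms into an acknowledged one by adding a constant number of label bits that encode a single ``return path'', exploiting the fact that the network falls silent once every node is informed. I would obtain part~1 by instantiating this reduction on the \fasta of Theorem~\ref{th:fast:alg}, and part~2 by instantiating it on the optimal-time non-constructive broadcasting scheme of Theorem~\ref{th:fastest:alg}; the two arguments differ only in which underlying scheme is plugged in.

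\emph{The common construction.} For the given instance, fix a BFS tree $T$ of the communication graph which is a \thrtNOS\ rooted at the source $s$; such a tree exists, and in the constructive case is computable in time $\text{poly}(n)$, by Lemma~\ref{lem:2hrt}. Simulate the chosen broadcasting algorithm with its (already fixed, constant-length) labels; let $t_0$ be the round by which the last node is informed and pick a node $v$ informed in round $t_0$. Let $P$ be the unique simple path from $s$ to $v$ in $T$; note that $|P|=\level(v)\le D$ and $|P|\le t_0$. Augment every label with two bits: an \emph{on-path} bit, set exactly for the nodes of $P$, and a \emph{terminal} bit, set exactly for $v$. Since both $v$ and $P$ are determined by the simulation of the already fixed labeling scheme, the augmented scheme stays constructive whenever the underlying one is, and stays of constant length in all cases.

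\emph{The acknowledgement phase.} The key observation is that no node transmits in any block strictly after the block in which the last node is informed: in every one of our algorithms only neighbours of uninformed nodes are active at the start of a block, so once all nodes are informed the network is silent. Hence, right after $v$ becomes informed it initiates a \textit{Stop} message, and every node whose on-path bit is set relays \textit{Stop} in the round immediately following its reception. Because the network is otherwise silent and $P$ is a simple path in the tree $T$, in each such round exactly one node transmits \textit{Stop}, so each relay is received collision-free by the preceding node of $P$; after at most $|P|\le D$ rounds the source receives \textit{Stop} in some round $t_1$ with $t_0\le t_1\le t_0+O(D)$. The source then rebroadcasts the value $t_1$ using the same broadcasting algorithm, which takes at most $t_0\le t_1$ further rounds, so by round $2t_1$ every node knows $t_1$ and can treat round $2t_1$ as the common termination round of the acknowledged broadcast.

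\emph{Bounds and the main obstacle.} Since $t_0=\Omega(D)$, we have $2t_1=O(t_0+D)=O(t_0)$. Plugging in $t_0=O(D+\min(D,\log n)\log^2 n)$ from Theorem~\ref{th:fast:alg} yields part~1, together with a polynomial-time label construction (Theorem~\ref{th:fast:alg} combined with Lemma~\ref{lem:2hrt}); plugging in $t_0=O(D+\log^2 n)$ from Theorem~\ref{th:fastest:alg} yields part~2, the scheme being obtained by the probabilistic method. In both cases the labels are those of the underlying broadcasting scheme plus two bits, hence still of constant length, and the messages remain of polynomial size (the value $t_1$ has $O(\log n)$ bits). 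I expect the only delicate point to be the collision-freeness of the \textit{Stop} relay, which reduces to the silence observation above together with the fact that each interior node of $P$ has a unique neighbour on the source side within $P$; the rest is the routine bookkeeping that the two extra phases add only $O(D)+O(t_0)=O(t_0)$ rounds.
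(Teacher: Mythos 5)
Your proposal is correct and follows essentially the same route as the paper's own argument: two extra label bits marking the source-to-$v$ path $P$ in the \thrtNOS\ and the terminal node $v$, a \textit{Stop} relay along $P$ exploiting the silence of the network after the last node is informed, and a final rebroadcast of $t_1$ so that all nodes agree on the termination round $2t_1$. The additional care you take with collision-freeness of the relay (which indeed reduces to the BFS-level structure of $P$ and the one-transmitter-per-round silence observation) only makes explicit what the paper leaves implicit.
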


\section{Gossiping}\label{sec:k:gathering}

In this section we consider the task of gossiping. As previously announced, we first focus on the auxiliary task of gathering, in which
messages of all nodes have to be gathered in a designated node, called the {\em sink}.
Section~\ref{subsec:kgather:lower} gives a lower bound 
$\Omega(\log \Delta)$ on the length of a labeling scheme sufficient to accomplish this task.
(As gossiping is at least as hard as gathering, this is also a lower bound on the length of a labeling scheme sufficient to accomplish gossiping).
In Section~\ref{subsec:gathering:Deltalog}, we present a gathering algorithm working in time $O(D+\Delta\log n+\log^2n)$. 
The algorithm uses a labeling scheme of asymptotically optimal length $O(\log \Delta)$.
Finally, in Section~\ref{subsec:gather:sum}, we
apply the gathering algorithm combined with our broadcasting algorithm to the gossiping problem.

\subsection{Lower bound on the length of a labeling scheme}\label{subsec:kgather:lower}
In this section, we observe that $\Omega(\log\Delta )$ is a lower bound on the length of a labeling scheme sufficient to accomplish gathering.

Consider the graph with the set of nodes $V=\{v_1,v_2,\ldots, v_{D+\Delta}\}$,
and with the set of edges
$\{(v_i,v_{i+1})\,|\, i\in[1,D]\}\cup \{(v_{D}, v_j)\,|\, j\in[D+1,D+\Delta]\}$ (see Figure~\ref{fig:kgathering-lower}).
This graph has diameter $D$ and maximum degree $\Delta$.

\begin{figure}[h]
        \centering
        \includegraphics[width=0.7\linewidth]{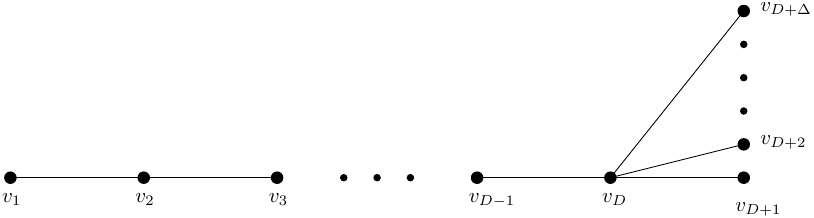}
        \caption{Illustration for the lower bounds on $k$-gathering.}
        \label{fig:kgathering-lower}
        \end{figure}    

Assume that 
the sink is $v_1$. As the unique path to $v_1$ from each node $v_i$, for $i>D$, goes through $v_D$, all messages from these nodes have to reach $v_D$ before reaching the sink.


All nodes $v_{D+1},\ldots,v_{D+\Delta}$ must have distinct labels in order to deliver their messages to $v_D$ (and then to $v_1$),
otherwise there is a collision at $v_D$. These distinct labels require a labeling scheme of length $\Omega(\log \Delta)$.
Hence we have:

\begin{proposition}\label{th:kgathering:lower}
    The gathering task requires a labeling scheme of length
     $\Omega(\log \Delta)$.
\end{proposition}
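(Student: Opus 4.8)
The plan is to exhibit a simple network -- a path of length $D$ with a "star" of $\Delta$ extra leaves attached to one endpoint -- and to argue via an indistinguishability/counting argument that the leaves of the star must all receive pairwise distinct labels, forcing labels of length at least $\log \Delta$.

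First I would fix the graph $G$ on vertices $v_1,\dots,v_{D+\Delta}$ with the path edges $(v_i,v_{i+1})$ for $i\in[1,D]$ together with the edges $(v_D,v_j)$ for $j\in[D+1,D+\Delta]$, and designate $v_1$ as the sink. I would observe that $G$ has diameter $\Theta(D)$ and maximum degree $\Delta$ (the degree of $v_D$ is $\Delta+1$, which is $\Theta(\Delta)$, so it suffices for the asymptotic claim; alternatively one attaches $\Delta-1$ leaves). The only vertex adjacent to $v_1$ is $v_2$, and every message originating at a leaf $v_j$ with $j>D$ must traverse $v_D$ (it is a cut vertex separating the leaves from the sink), so in particular $v_D$ must at some point successfully receive the message of each leaf.

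The core of the argument: suppose for contradiction that some labeling scheme of length less than $\log\Delta$ supports a correct gathering algorithm on this network. Then two distinct leaves $v_j, v_{j'}$ ($j,j' > D$) receive the same label. Since both run the same deterministic algorithm with the same label as parameter, and both have the identical local view -- each has a single neighbor $v_D$, and the sequence of messages each receives from $v_D$ in any round is the same for both (they are symmetric in $G$, and the algorithm run at $v_D$ cannot tell them apart either, by an easy induction on rounds using the automorphism of $G$ swapping $v_j \leftrightarrow v_{j'}$) -- the two leaves transmit in exactly the same set of rounds. Whenever $v_j$ transmits, so does $v_{j'}$, and since both are neighbors of $v_D$ this is a collision at $v_D$ (no collision detection). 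Hence $v_D$ never hears either of them, so their messages never reach the sink, contradicting correctness. I would formalize the induction: define the state of the system up to round $t$ and show it is invariant under the graph automorphism $\sigma$ that transposes $v_j$ and $v_{j'}$ and fixes all other vertices; since the labels are $\sigma$-invariant by assumption and the algorithm is deterministic, the execution is $\sigma$-invariant, so $v_j$ and $v_{j'}$ behave identically in every round.

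The main obstacle -- really the only point needing care -- is making the indistinguishability argument fully rigorous in the presence of collisions and the lack of collision detection: one must check that "$v_D$ hears something in round $t$" is itself a $\sigma$-invariant event and that the message $v_D$ hears (when it hears one) is $\sigma$-invariant, so that the induction on $t$ closes. This is routine because $\sigma$ is a genuine graph automorphism fixing the sink and fixing $v_D$, and the label assignment is $\sigma$-invariant, but it should be spelled out. Given this, the conclusion is immediate: $\Delta$ leaves need $\Delta$ distinct labels, so the scheme has length $\Omega(\log\Delta)$, which is what Proposition~\ref{th:kgathering:lower} asserts.
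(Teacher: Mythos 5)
Your proposal is correct and follows essentially the same route as the paper: the identical graph (path of length $D$ with $\Delta$ leaves hanging off $v_D$, sink at $v_1$), and the same key observation that two leaves with equal labels must always transmit together and hence always collide at $v_D$. You merely spell out, via the automorphism swapping the two leaves, the indistinguishability induction that the paper leaves implicit.
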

\commentt{
Let us note here that the above lower bound on the size of labels could be also obtained by a reduction of the $k$-broadcasting problem from \cite{DBLP:conf/wg/KriskoM21} to a composition a $k$-gathering and standard broadcasting and the fact that broadcasting can be accomplished with labels of constant size. The lower bound on $k$-broadcasting is provided in \cite{DBLP:conf/wg/KriskoM21}.
}

\subsection{Gathering in time $O(D + \Delta \log n+\log^2n )$ with 
optimal length of labels}\label{subsec:gathering:Deltalog}


 
In this section we describe an algorithm for gathering, using a labeling scheme of optimal length $O(\log \Delta)$ and running in time $O\left(  D + \Delta \log n+\log^2n \right)$. 

Our algorithm makes use of some ideas of the centralized gathering algorithm from \cite{DBLP:journals/dc/GasieniecPX07}. However, as in the case of broadcasting,
adjusting a centralized algorithm to the regime of distributed algorithms with short labels
requires some significant changes in the original centralized algorithm. 
In particular, we use the notion of a \gt\ which strengthens properties of a \emph{gathering-broadcasting spanning tree} from \cite{DBLP:journals/dc/GasieniecPX07}.

We will use a \gt\ as a backbone for transmissions of messages,  aiming at gathering all messages in the sink node. In particular, we will use the following observation.
\begin{observation}\label{obs:thrt}
	If nodes 
    $u \neq v$
    are such that $\level(u)=\level(v)$ and $h_2(u)=h_2(v)=h_2(\parent(u))=h_2(\parent(v))$ in some \gt\  $T$, then $u$ and $v$ can simultaneously send messages to their parents in $T$ without a collision.
\end{observation}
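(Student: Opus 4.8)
The plan is to show that, under the stated hypotheses, the only neighbor of $\parent(u)$ that transmits is $u$, and symmetrically the only neighbor of $\parent(v)$ that transmits is $v$, so both parents receive their children's messages collision-free. First I would unwind the definition of a collision: a collision at $\parent(u)$ in the round when $u$ and $v$ both transmit can only occur if $\parent(u)$ has some transmitting neighbor other than $u$. Since (in the gathering setting) the only nodes transmitting in this round are $u$ and $v$, the only possible source of a collision at $\parent(u)$ is the node $v$ itself. Hence a collision at $\parent(u)$ would require $v$ to be a neighbor of $\parent(u)$; symmetrically, a collision at $\parent(v)$ would require $u$ to be a neighbor of $\parent(v)$.

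The key step is to rule out these two situations using property~($\star$) of Definition~\ref{def:2hrt}. Suppose, for contradiction, that $v$ is a neighbor of $\parent(u)$. We have $\level(u)=\level(v)$ and, by hypothesis, $h_2(u)=h_2(v)=h_2(\parent(u))=h_2(\parent(v))$. Set $w=\parent(u)$. Then $w$ is a common neighbor of $u$ and $v$ with $\level(w)=\level(u)-1=\level(v)-1$ (the first equality because $w=\parent(u)$ in a BFS tree, the second because $\level(u)=\level(v)$). This is exactly a forbidden configuration under ($\star$): two nodes $u,v$ on the same level, with $h_2(u)=h_2(v)=h_2(\parent(u))=h_2(\parent(v))$, sharing a common neighbor one level up. Contradiction. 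The case where $u$ is a neighbor of $\parent(v)$ is symmetric, taking $w=\parent(v)$.

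One small thing I would be careful about is the case $\parent(u)=\parent(v)$: if $u$ and $v$ shared a parent, both would transmit to the same node and collide. But the hypothesis $h_2(u)=h_2(v)=h_2(\parent(u))$ already excludes this, since if $u$ and $v$ were siblings with equal (and maximal-among-siblings) $2$-height, the definition of $2$-height (Definition~\ref{def:2h}, third bullet) would force $h_2(\parent(u))=h_2(u)+1>h_2(u)$, contradicting $h_2(\parent(u))=h_2(u)$. So $\parent(u)\neq\parent(v)$, and the argument of the previous paragraph applies with $w=\parent(u)$ a genuine common neighbor distinct from both parents when needed. I expect the main (though still modest) obstacle to be making fully explicit why the only candidate colliders are $u$ and $v$ themselves — i.e., pinning down that in the round under consideration no other node transmits — which is immediate once we fix that the observation is about $u$ and $v$ sending "simultaneously" in isolation, as phrased in the statement.
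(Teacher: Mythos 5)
Your proof is correct and follows the same (implicit) reasoning as the paper, which states this observation without proof as an immediate consequence of property $(\star)$ in Definition~\ref{def:2hrt}: a collision at $\parent(u)$ (resp.\ $\parent(v)$) would make that parent a common neighbor of $u$ and $v$ on level $\level(u)-1$, which $(\star)$ forbids under the stated $2$-height hypotheses. Your extra remark ruling out $\parent(u)=\parent(v)$ is also fine, though it is already subsumed by the same application of $(\star)$, since a shared parent is itself a common neighbor one level up.
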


We now describe a centralized algorithm GatherCentr for gathering in at most $3D+6(\Delta+1) \log n$ rounds. Then we provide a labeling scheme of length $O(\log \Delta)$ and the distributed algorithm GatherDistr using it, that simulates the centralized algorithm.
The time complexity of algorithm GatherDistr is $O(D+\Delta\log n+ \log^3n)$ for the constructive variant of labeling and $O(D+\Delta\log n+\log^2n)$ in the case of the labeling scheme which is not obtained constructively.
%
It should be stressed that, while GatherCentr follows some ideas of the centralized gathering algorithm from \cite{DBLP:journals/dc/GasieniecPX07}, its analysis is different, as it makes use of the notion of a \gt. 
%

Let $T$ be a BFS which is also a \thrt rooted at the sink vertex $s$.
Recall that such a tree can be constructed in polynomial time, by Lemma~\ref{lem:2hrt}.
Then, we split the set of nodes $V$ into sets 
$F$ and $S$, where $v \in F$ if $h_2(v)=h_2(\parent(v))$, and $v \in S$  otherwise. 
This partition might be also seen as a partition into the set of  nodes $F$ 
connected with their parents through \emph{fast} edges and the set $S$ which contains the nodes 
connected with their parents through \emph{slow edges}. (As in the broadcasting algorithms, an edge $(u,p(u))$ of $T$ is a \emph{fast edge} if $h_2(u)=h_2(p(u))$. Otherwise, an edge is a \emph{slow edge}). Note that, by Observation~\ref{obs:thrt}, all nodes in $F$ from a given level and with the same value of their 2-heights can simultaneously transmit messages to their parents without any collision. 
%

Lemma 6 from \cite{DBLP:journals/iandc/GanczorzJLP23}
states that there is an efficient way to assign the value $s(v) \in [0, \Delta-1]$ to each node $v$ of any BFS tree $T$ so that 
\begin{itemize}
    \item $s(u)\neq s(w)$ for different children $u,w$ of a node of $v$ in $T$,
    \item there is no edge between $u$ and $\parent(w)$ for any two nodes $u\neq w$ on the same level of $T$ such that $s(u)=s(w)$. 
\end{itemize}

\noindent\textbf{Centralized gathering}\\
Algorithm GatherCentr works as follows. First, we group time steps in blocks of length $3$. A node $v$ with $\level(v) \mod 3 = i$ can transmit only in the $i$th step of a block for $i\in[0,2]$, provided that we count the steps of a block starting from zero. 
We will define the  transmitting block $t(v)$ for each node $v$ as follows using the value $s(v)$ of the node $v$:
\begin{enumerate}
	\item if $v \in F$, then $t(v)= (D - \level(v)) + h_2(v) \cdot (\Delta+1)$
	\item if $v \in S$, then $t(v)= (D - \level(v)) + h_2(v) \cdot (\Delta+1) + s(v) + 1$
\end{enumerate}
Each node $v$ listens until the end of block $t(v)-1$ and then it transmits all gathered messages along with its own message during the block $t(v)$.

In the following lemma we prove by induction on $t(v)$ that each node receives all messages from its subtree of $T$ before the block $t(v)$.
However, for a better understanding of the idea behind the algorithm, we first provide some intuitions hidden in the formal proof. 
Let $P$ be a subpath of a path going from a leaf to the root $s$ of $T$, such that all edges connecting nodes of $P$ are fast. Let $(v_1,\ldots,v_p)$ be the sequence of nodes of $P$ starting from the node on the largest level. 
As all the edges on $P$ are fast, we have $h_2(v_1)=h_2(v_2)=\cdots=h_2(v_p)$, 
and the nodes $v_1,\ldots,v_{p-1}$ belong to $F$. Thus, $t(v_{i+1})=t(v_i)+1$, for $i<p$, and the message transmitted by $v_i$ in block $t(v_i)$ is received by the end of $t(v_{i+1})$. 
By Observation~\ref{obs:thrt}, a message from $v_i$ to $v_{i+1}$ transmitted in block $t(v_i)$ is received without a collision in the block $t(v_i)$, since each other transmitter $v'$ from the level $\level(v_i)$ in that block
is such that $h_2(v')=h_2(v_i)$. 
Apart from Observation~\ref{obs:thrt}, collision-free transmissions through fast edges are guaranteed thanks to the fact that, for each level $l$, transmissions from that level through fast edges do not interfere with transmissions through slow edges. Indeed, $t(v)-(D-l)$ is divisible by $\Delta+1$ iff $v$ is connected with its parent by a fast edge.
Thus, transmission of messages through such fast path $P$ of length $p$ takes $p$ blocks. That is, the time of this transmission is proportional to the length of $P$. Moreover, each path from a leaf to the sink $s$ of $T$ can be split into at most $\log n$ fast paths separated by at most $\log n$ slow edges, due to Lemma~\ref{l:2h:max}.  
%
Additionally, observe that $t(v)< t(u)\le t(v)+\Delta+1$ for an edge connecting $u\in S$ with $v\in F$ or $u\in F$ with $v\in S$ and $\level(u)=\level(v)+1$.
Thus, each edge between a node $v\in F$ and a node $w\in S$ on a path from a leaf to the sink gives a slowdown of at most $\Delta$ blocks. As there are at most $\log n$ slow edges on such a path, this slowdown is at most $\Delta \log n$.
\begin{lemma}\label{lem:kgather:large:k:induction}
    Each
    node $v$ gets all messages from its subtree before 
    the block $t(v)$.
\end{lemma}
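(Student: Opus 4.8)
The plan is to prove Lemma~\ref{lem:kgather:large:k:induction} by induction on the value $t(v)$, mirroring the structure used in the proof of Lemma~\ref{l:alg:our:cases}. The base case consists of the nodes $v$ with the smallest value of $t(v)$: these are leaves at level $D$ (the maximum level), for which $t(v)=(D-\level(v))+h_2(v)\cdot(\Delta+1)=0$, since a leaf has $h_2(v)=0$; such a node has no subtree other than itself, so the claim holds vacuously. For the inductive step, fix a node $v$ and assume the statement holds for all nodes $u$ with $t(u)<t(v)$. I first need to check that every child $w$ of $v$ indeed satisfies $t(w)<t(v)$: since $\level(w)=\level(v)+1$, the term $D-\level(w)$ decreases by one, and since $h_2(w)\le h_2(v)$ with $h_2(w)=h_2(v)$ exactly when $w\in F$, a short case analysis (using $s(w)\in[0,\Delta-1]$) shows $t(v)<t(w)\le t(v)+\Delta+1$ when the edge is slow and $t(w)=t(v)+1$ when it is fast — in either case $t(w)<t(v)$ fails in the wrong direction, so I actually want $t(w)<t(v)$, which follows because $D-\level(w)+h_2(w)(\Delta+1)$, etc.; I will set this inequality up carefully so that the induction is well-founded.

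The core of the argument is then: by the inductive hypothesis, each child $w$ of $v$ has received all messages of its subtree by the end of block $t(w)-1$, hence transmits the union of those messages (plus its own) during block $t(w)\le t(v)-1$; I must show $v$ receives each such transmission without a collision. So suppose $w$ transmits in block $t(w)$ and some other node $w''$, also a neighbor of $v$, transmits in the same block; I derive a contradiction with the defining property $(\star)$ of the \thrtNOS. The key arithmetic fact is the one flagged in the intuition paragraph: $t(x)-(D-\level(x))$ is divisible by $\Delta+1$ if and only if $x\in F$ (i.e.\ the edge $(x,\parent(x))$ is fast), because $s(x)+1\in[1,\Delta]$ is strictly between two consecutive multiples of $\Delta+1$. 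Combined with $\level(w)=\level(w'')$ (forced by both being neighbors of $v$, hence both at level $\level(v)+1$ in a BFS tree) and $t(w)=t(w'')$, this yields $h_2(w)=h_2(w'')$ and moreover $w,w''$ are either both in $F$ or both in $S$.

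If both $w,w''\in F$: then $h_2(\parent(w))=h_2(w)=h_2(w'')=h_2(\parent(w''))$, so $w$ and $w''$ are two nodes on the same level with equal $2$-heights equal to the $2$-heights of their parents, and $v$ is a common neighbor of $w$ and $w''$ on level $\level(w)-1$ — this is exactly a violation of $(\star)$, contradicting that $T$ is a \thrtNOS. If both $w,w''\in S$: then $t(w)=t(w'')$ together with equal level and equal $h_2$ forces $s(w)=s(w'')$, and then the second property of the labeling $s(\cdot)$ from \cite{DBLP:journals/iandc/GanczorzJLP23} (no edge between $u$ and $\parent(w)$ for distinct same-level nodes with $s(u)=s(w)$) is contradicted, since $v=\parent$-or-neighbor of $w''$ adjacent to $w$; I will phrase this as: the edge $(w,v)$ with $v$ a neighbor of $w''$ on level $\level(w'')-1$ contradicts that second property. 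Finally, I must also rule out collisions from transmitters that are not children of $v$ but merely neighbors — but the above case analysis already covers an arbitrary neighbor $w''$ transmitting in block $t(w)$, so it applies. I expect the main obstacle to be the careful bookkeeping around the two sub-cases (fast/fast versus slow/slow) and verifying that the block-counting (steps grouped in threes, node $v$ transmitting only in step $\level(v)\bmod 3$) genuinely prevents any residual interference between nodes of non-congruent levels; this is where I would spend the most care, invoking Observation~\ref{obs:thrt} for the fast case and the $s(\cdot)$-labeling for the slow case, and noting that the divisibility-by-$(\Delta+1)$ fact separates fast-edge traffic from slow-edge traffic on each level.
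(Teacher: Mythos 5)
Your proposal is correct and follows essentially the same route as the paper's proof: induction on $t(v)$, with the collision analysis at $v$ split according to whether the two same-level transmitters are both on fast edges (contradicting property $(\star)$ of the 2-height respecting tree via their common neighbor $v$), both on slow edges (forcing $s(w)=s(w'')$ and contradicting the labeling of Lemma 6 of \cite{DBLP:journals/iandc/GanczorzJLP23}), or mixed (impossible because $t(x)-(D-\level(x))$ is divisible by $\Delta+1$ exactly when the edge to the parent is fast). The only points to tighten are the direction of the inequality $t(w)<t(v)$ for children $w$ of $v$ (which you flagged yourself) and the claim that a colliding neighbor $w''$ of $v$ must lie on level $\level(v)+1$ --- this follows from the mod-$3$ step structure of the blocks (neighbors of $v$ lie on levels $\level(v)-1,\level(v),\level(v)+1$, only one of which is congruent to $\level(w)$ mod $3$), not merely from $w''$ being a neighbor of $v$.
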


\begin{proof}
We will prove the lemma by induction on $t(v)$, for $v\in V$. In the base case $t(v)=1$ we have only such nodes $v$ that $D=\level(v)$. That is $t(v)=1$ only for leaves
of $T$. The only message in the subtree of such a node is its own message. 

For the inductive step, assume that the lemma is satisfied for each node $u$ such that $t(u)<X$ for some $X>1$. Let 
$v$ be a node with $t(v) = X$.
First, observe that $t(u)$ is smaller than $t(v)$ for each child $u$ of $v$.  By the inductive hypothesis, $u$ got the messages from its subtree before $t(u)$, for each child $u$ of $v$ in $T$. 
Thus, it remains to show that children of $v$ successfully transmit their knowledge to $v$ without collisions before $t(v)$. 

As $T$ is a BFS tree and all nodes $w$ with $\level(w) \mod 3 = i$ transmit in the $i$th step of a block for $i\in[0,2]$, there are no collisions between messages transmitted by nodes on different levels. Let $u$ be a child of $v$ and assume, for a contradiction, that the transmission of $u$ in the block $t(u)$ collides at $v$ with a message of some node $u'$ transmitted in the same round. Then  $\level(u)=\level(u')$, due to the partition of each block in three rounds corresponding to transmissions of nodes with various values of levels mod $3$.
As 
$s(u),s(u') <  \Delta$,
$\level(u)=\level(u')$ and the collision of $u$ with $u'$ is only possible when $t(u)=t(u')$, it is sufficient to consider the following cases: 
\begin{itemize}
    \item $h_2(u)\neq h_2(u')$.
    
    Then, according to the definition of the transmitting round of a node,  $t(u)\neq t(u')$ by the fact that $s(u), s(u')<\Delta$ and thus $u$ and $u'$ cannot collide -- we get a contradiction.

    \item $u,u'\in S$ and $h_2(u)=h_2(u')$.

    If $s(u)\neq s(u')$ then $t(u)\neq t(u')$ which contradicts the assumption $t(u)=t(u')$.

    Thus, it remains to consider the case that $s(u)=s(u')$.
    As $u$ and $u'$ have a common neighbor $v$ on the smaller level $l-1$, the equality $s(u)=s(u')$ contradicts the properties of the assignment of values $s(w)$ for $w\in V$ assured by Lemma 6 from \cite{DBLP:journals/iandc/GanczorzJLP23}.

    \item $u\in F$ or $u'\in F$, $h_2(u)=h_2(u')$.

    If both $u\in F$ and $u'\in F$ then the fact that $u$ and $u'$ have a common neighbor $v$ on the smaller level $l-1$ contradicts Observation~\ref{obs:thrt}.
    If only one of $u,v$ belongs to $F$, assume w.l.o.g.\ that $u\in F$ and $v\in S$. Then $t(u)\neq t(u')$ by the definition, since $t(u)-(D-\level(u))$ is divisible by $\Delta +1$ and $t(v)$ is not divisible by $\Delta+1$. Thus transmissions of $u$ and $u'$ do not collide.
\end{itemize}

%
The above inspection of all possible cases concludes the proof of the lemma.
\end{proof}
Lemma~\ref{lem:kgather:large:k:induction} combined with the definition of $t(v)$ implies the following corollary.
\begin{corollary}\label{cor:GatherOffLargek}
    Algorithm GatherCentr finishes gathering in at most $3D+6\Delta \ceil{\log  n}$ rounds.
\end{corollary}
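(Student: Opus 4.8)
The plan is to read off the running time directly from Lemma~\ref{lem:kgather:large:k:induction} together with the explicit formula defining the transmitting block $t(v)$, and to bound $\max_{v\in V}t(v)$. First I would apply Lemma~\ref{lem:kgather:large:k:induction} to the sink $s$, which is the root of $T$: the subtree of $s$ is all of $V$, so the lemma says that $s$ has received every message in the network before block $t(s)$. Hence gathering is complete by the end of block $B:=\max_{v\in V}t(v)$ (indeed already before block $t(s)\le B$). Since each block consists of exactly $3$ rounds, GatherCentr terminates within $3B$ rounds, and it remains only to estimate $B$.

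Next I would bound $t(v)$ uniformly over $v$. For every node we have $D-\level(v)\le D$; by Lemma~\ref{l:2h:max} the $2$-height satisfies $h_2(v)\le \lceil\log n\rceil$; and $s(v)\le\Delta-1$ by the properties of the assignment from Lemma~6 of \cite{DBLP:journals/iandc/GanczorzJLP23}. Substituting these into the two cases of the definition of $t(v)$, the worst case is $v\in S$, which gives
\[
t(v)\ \le\ D+\lceil\log n\rceil(\Delta+1)+\Delta\ \le\ D+2\Delta\lceil\log n\rceil,
\]
where the last inequality amounts to $(\Delta-1)(\lceil\log n\rceil-1)\ge 1$ and so holds for all $\Delta\ge 2$ and $n\ge 4$; the remaining cases ($\Delta=1$, i.e.\ $G$ is a path, and $n\le 3$) are trivial. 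Therefore $B\le D+2\Delta\lceil\log n\rceil$, and the number of rounds is at most $3B\le 3D+6\Delta\lceil\log n\rceil$, which is exactly the claimed bound.

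I do not expect a genuine obstacle here, since the statement is an immediate corollary of Lemma~\ref{lem:kgather:large:k:induction}. The only two points that warrant a little care are: (i) the root $s$ is not literally assigned to $F$ or $S$, so one should phrase the termination conclusion via Lemma~\ref{lem:kgather:large:k:induction} applied to $s$ (or via ``every node has transmitted by block $B$'') rather than via a formula for $t(s)$; and (ii) the arithmetic with the ceilings, so that the constants coming out of $\lceil\log n\rceil(\Delta+1)+\Delta$ collapse into the stated $6\Delta\lceil\log n\rceil$. Both are routine bookkeeping.
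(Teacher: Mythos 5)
Your proof is correct and follows exactly the route the paper intends (the paper gives no explicit proof, merely stating that the corollary follows from Lemma~\ref{lem:kgather:large:k:induction} combined with the definition of $t(v)$): bound $\max_v t(v)$ by $D+\lceil\log n\rceil(\Delta+1)+\Delta\le D+2\Delta\lceil\log n\rceil$ using $h_2(v)\le\log n$ and $s(v)\le\Delta-1$, and multiply by the block length $3$. Your side remarks — that the sink itself is not assigned to $F$ or $S$ so termination should be argued via its children's transmitting blocks, and the $(\Delta-1)(\lceil\log n\rceil-1)\ge 1$ check — are sensible bookkeeping that the paper glosses over.
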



\vspace*{1em}
\noindent\textbf{Labeling scheme and the distributed gathering algorithm.}\\
\noindent We now show how to learn all information needed for a node to simulate the centralized gathering algorithm described above, using a labeling scheme of length $O(\log \Delta)$. First observe that the value $\Delta$, $s(v)$ and the bit indicating whether a node belongs to $F$ or to $S$ can be encoded in $O(\log \Delta)$ bits. 
However, in order to determine the block $t(v)$ of transmission of a message by $v$, the values of $D$ and $\level(v)$ are needed.
In order to assure that a node $v$ for each $v\in V$ can learn the values of $D$ and $\level(v)$, a simulation of the centralized algorithm GatherCentr will be preceded by:
\begin{itemize}
    \item An execution of the \textit{Size Learning} algorithm from \cite{DBLP:conf/wdag/GanczorzJLP21} in order for all nodes to learn the value of $D$.
    This is possible since the \textit{Size Learning} algorithm is in fact an algorithm for learning any message of size $O(\log n)$ in $O(\log^2n)$ rounds with labels of length at most $O(\log \Delta)$.

    To make it possible, the labels for the \textit{Size Learning} algorithm will be a part of the labels in our gathering algorithm.
    
    \item An execution of the appropriately modified acknowlwedged broadcasting algorithm (see Corollary~\ref{cor:broadcasting:ack}) with the source node equal to the sink node of the considered instance of gathering, in order to assure that each node $v\in V$ learns the value of $\level(v)$. 

    During an execution of broadcasting, the source will transmit the message that its level is equal to $0$ together with the broadcast message. Then, each node which knows its level $l$ will send $l$ together with the broadcast message. Given the fact that labels of our broadcasting algorithm contain the values of their levels mod $3$, all nodes can learn their levels upon reception of the broadcast message by them, since each node $v$ can filter out messages from nodes on the levels $\level(v)$ and $\level(v)+1$ and thus inherit its level from the level of a nodes at the preceding level that successfully delivers the broadcast message to $v$.
\end{itemize}
Moreover, we use one extra bit of labels to mark the leaves of the tree which also provides information to each leaf $v$ that $h_2(v)=0$.  
Finally, each non-leaf node $v\in V$ should somehow learn the value of $h_2(v)$ during an execution of the gathering algorithm early enough, i.e., before the block $t(v)$. 
The key obstacle for achieving it is the fact that, as our goal is to keep the length $O(\log \Delta)$ of labels while the execution time of the broadcasting algorithm and {\em Size Learning} algorithm depend also on $D$ and $n$, it is challenging to synchronize the nodes so that they start consecutive phases of these auxiliary protocols at the same time, and even more importantly, start an execution of the actual algorithm GatherDistr simultaneously. 

To this end, we start with an execution of the {\em Size Learning} algorithm whose goal is to provide the value of $D$ to all nodes. Using one additional bit in the labels, we mark exactly one node $v$ which transmits a message during the execution of {\em Size Learning} in the latest round. After its last transmission, $v$ starts an execution of the acknowledged broadcast algorithm with the broadcast message containing the number of the round $\tau$ in which the gathering algorithm should start.

%
%
%
\commentt{Unfortunately, nodes can not deduce when the \textit{Size Learning} algorithm ends and no acknowledged variant of \textit{Size Learning} preserving the same size of labels and time complexity is known. However, in the course of an execution of the algorithm, each node is aware at each round whether it has already learnt the correct value of the ``size'', i.e., the value to be shared among nodes. Therefore, we will interweave Size Learning Algorithm and acknowledged broadcast algorithm in alternating steps. More precisely, the source node $s$ will wait until it learns the value of $D$ from an execution of Size Learning Algorithm in round $t_0$ and the acknowledged broadcast algorithm will start in round $t_0$ with the broadcast message equal to the value of $D$. All broadcasting rounds will be shifted by $t_0$. 
Let $t_1$ be the number of rounds of the execution of the acknowledged broadcasting algorithm, known to all nodes at the end of that execution.
As the result, all nodes learn the values of $D$ and their levels until the round $t_0+t_1$ and they can start a distributed execution of the $k$-gathering algorithm GatherOffLargek. 
}
Note however that we have not yet described the way in which the nodes learn their values of the function $h_2$, and the values of $h_2$ are needed to determine the block number $t(v)$ of transmissions of each $v\in V$.



Therefore now we will describe how to extend the labels in order to assure that each $v\in V$ can learn
$h_2(v)$ before block $t(v)$, provided it
 knows the values of $\Delta, D, \level(v),s(v)$ and knows whether it belongs to $S$ or $F$. 
In order to facilitate learning of $h_2(v)$, for each node $v$, before block $t(v)$,
we add (binary representations of) numbers $s'(v)$ and $b(v)$ to the label of each node $v$, defined as follows.
If $v$ is a leaf then $s'(v)=b(v)=-1$.
Otherwise, let $u$ be a child of $v$ with the largest $2-$height among the children of $v$. 
Then, $s'(v)$ is equal to the value of $s(u)$, 
and $b(v)\in\{0,1\}$ is such that $h_2(v) = h_2(u) + b(v)$. 

After these preparations,
all nodes start algorithm GatherDistr in round $\tau$ by
running the centralized algorithm GatherCentr with the following modifications.
\begin{enumerate}
	\item Let $M_v$ be the set of all messages received by $v$ from its subtree, including the message of $v$ itself. Instead of transmitting $M_v$, $v$ transmits the tuple $(M_v, h_2(v), s(v), \level(v))$ in the appropriate round of block $t(v)$.
	\item If $s'(v) = -1$ then $v$ sets the value of its own $2-$height to $0$.
	\item If $v$ receives the message $(M, h, s, l)$ such that $s=s'(v)$ and $\level(v) = l-1$, it sets $h_2(v) = h + b(v)$.
\end{enumerate}

\begin{lemma}\label{lem:gather:largek:distr}
	Each node $v$ correctly determines its $2-$height before its transmitting block $t(v)$.
\end{lemma}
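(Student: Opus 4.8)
The plan is to prove the claim by strong induction on the transmitting block $t(v)$, using Lemma~\ref{lem:kgather:large:k:induction}. The induction hypothesis I would carry is that every node $u$ with $t(u) < t(v)$ has correctly set $h_2(u)$ before block $t(u)$, and therefore, in block $t(u)$, transmits the tuple $(M_u, h_2(u), s(u), \level(u))$ whose second coordinate is the true value of $h_2(u)$. This induction is well founded because, as established inside the proof of Lemma~\ref{lem:kgather:large:k:induction}, $t(u) < t(v)$ for every child $u$ of $v$, so any descending chain of the tree eventually reaches a leaf.

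If $v$ is a leaf, then $s'(v) = -1$ by definition, so the second modification of GatherDistr makes $v$ set $h_2(v) = 0$ at the outset, with no message needed; this is correct since $h_2(v) = 0$ for every leaf. A leaf never subsequently applies the third modification, since that rule requires a received colour equal to $s'(v) = -1$, which is impossible because genuine colours lie in $[0,\Delta-1]$.

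If $v$ is not a leaf, let $u^{*}$ be the child of $v$ of largest $2$-height, i.e.\ the child used to define $s'(v) = s(u^{*})$ and $b(v)$, so that $h_2(v) = h_2(u^{*}) + b(v)$. Since $t(u^{*}) < t(v)$, the induction hypothesis gives that $u^{*}$ transmits, in block $t(u^{*})$, a tuple whose $h$-coordinate equals $h_2(u^{*})$; by Lemma~\ref{lem:kgather:large:k:induction} this transmission reaches $v$ without a collision, in a block strictly earlier than $t(v)$. Because $s(u^{*}) = s'(v)$ and $\level(v) = \level(u^{*}) - 1$, node $v$ then applies the third modification and sets $h_2(v) = h_2(u^{*}) + b(v)$, which is the correct value, and it does so before block $t(v)$.

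The point that needs care — and the main obstacle — is to rule out that $v$ is led to a wrong value by a spurious message, i.e.\ to show that the only message ever triggering the third rule at $v$ is the one from $u^{*}$ above. Suppose $v$ receives $(M,h,s,l)$ with $s = s'(v)$ and $l = \level(v)+1$; the sender is then a neighbour of $v$, at level $\level(v)+1$, whose colour is $s(u^{*})$. Among the children of $v$, only $u^{*}$ has colour $s(u^{*})$, by distinctness of $s$ on siblings. If the sender were instead a non-child neighbour $w$ of $v$ at level $\level(v)+1$, then applying the second property of the labelling $s$ from Lemma~6 of \cite{DBLP:journals/iandc/GanczorzJLP23} to the same-level pair $w \neq u^{*}$ with $s(w) = s(u^{*})$ would forbid an edge between $w$ and $\parent(u^{*}) = v$, contradicting that $w$ is a neighbour of $v$. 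Hence the sender is $u^{*}$, which transmits exactly once, in block $t(u^{*})$, and carries the correct $h = h_2(u^{*})$. Consequently $v$'s value $h_2(v) = h_2(u^{*}) + b(v)$ is set correctly and never overwritten incorrectly, which completes the inductive step; everything else reduces to the definitions of $s'(v)$, $b(v)$ and the $2$-height, together with the collision-freeness and the inequality $t(u^{*}) < t(v)$ supplied by Lemma~\ref{lem:kgather:large:k:induction}.
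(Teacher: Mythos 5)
Your proposal is correct and follows essentially the same route as the paper: induction on $t(v)$, with leaves handled directly via $s'(v)=-1$, and the inductive step relying on the collision-free, earlier transmission of the distinguished child $u^{*}$ together with the property of the colouring $s$ guaranteeing that $u^{*}$ is the unique neighbour of $v$ on level $\level(v)+1$ with colour $s'(v)$. Your explicit ruling-out of spurious senders is exactly the paper's opening observation, just spelled out in more detail.
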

\begin{proof}
First, observe that no node $v$ deduces an incorrect value of $h_2(v)$, 
provided that its children determined their values of $h_2$ correctly. 
 Indeed, as by the definition of $s(v)$, there is at most one neighbor $u$ of $v$ such that $s(u) = s'(v)$ and $\level(u)=\level(v) -1$, $v$ correctly determines its $2$-height, provided it receives the correct value of $2$-heights of its children.

    It remains to show that each node $v$ learns the value of $h_2(v)$ before the block $t(v)$.
    We will prove this fact 
    by induction on $t(v)$. For the base case $t(v)=0$, observe that $t(v)=0$ only if $v$ is a leaf and $v\in F$. As each node can determine this information from its label (in particular, $v$ is a leaf iff $s'(v)=-1$), the lemma holds for each node $v$ such that $t(v)=0$. 
    
    For the inductive step, assume that the lemma holds for each $v$ such that $t(v)<t$ for some $t>0$. Assume that $t(v)=t$.
    By the inductive hypothesis and by the correctness of the centralized algorithm, all children of $v$ transmit collision-free their messages to $v$ before the block $t(v)$. In particular, the only node $u$ such that $s(u)=s'(v)$ successfully transmits before block $t(v)$. So $v$ can determine $h_2(v)$ from the message received from $u$ before block $t(v)$.
\end{proof}
Using Lemma~\ref{lem:gather:largek:distr}, the correctness and complexity of algorithm GatherCentr and the complexity of our acknowledged broadcasting algorithms we get the following theorem.

\begin{theorem}\label{th:gather:largek}\begin{enumerate}
    \item 
    Algorithm GatherDistr is a distributed algorithm for gathering, working in {time $O(D + \Delta\log n+\min(D,\log n)\log^2n)$,} using a  constructive labeling scheme of length $O(\log\Delta)$.
    \item 
    There exists a labeling scheme of length $O(\log\Delta)$ such that algorithm GatherDistr using it  accomplishes 
    gathering in time $O(D + \Delta\log n+\log^2n)$.    
\end{enumerate}
\end{theorem}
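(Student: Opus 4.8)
The plan is to glue together the three ingredients already developed: the \emph{Size Learning} algorithm of \cite{DBLP:conf/wdag/GanczorzJLP21} for disseminating $D$, the acknowledged broadcasting algorithm of Corollary~\ref{cor:broadcasting:ack} (used both to propagate levels and to synchronise the start of the main phase), and the centralized algorithm GatherCentr together with its distributed simulation GatherDistr. First I would argue that the preprocessing phase --- Size Learning followed by acknowledged broadcasting from the sink --- takes $O(D+\min(D,\log n)\log^2 n)$ rounds in the constructive case and $O(D+\log^2 n)$ rounds in the non-constructive case, and that at its end every node $v$ knows $D$, $\level(v)$, whether it is a leaf, the common round number $\tau$ at which the main phase begins, and (from its label) the values $\Delta$, $s(v)$, $s'(v)$, $b(v)$ and the $F/S$ bit. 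The only subtle point here is synchronisation: Size Learning has no acknowledged variant of the same label length, so I would (as already described in the text) use one label bit to single out the node performing the last transmission of Size Learning, and have that node launch the acknowledged broadcast carrying $\tau$; the acknowledgement then guarantees all nodes agree on when GatherDistr starts, and the level information is piggybacked on the broadcast message exactly as described before the statement, exploiting that broadcasting labels already store $\level(v)\bmod 3$.

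Next I would invoke Lemma~\ref{lem:gather:largek:distr} to conclude that during GatherDistr every node $v$ correctly learns $h_2(v)$ strictly before its transmitting block $t(v)$, so the block assignment used by GatherDistr coincides with the one analysed for GatherCentr (the extra fields $h_2(v), s(v), \level(v)$ appended to $M_v$ add only $O(\log n)$ bits to each transmitted message, which is allowed in the model). Then correctness and the round bound follow from Lemma~\ref{lem:kgather:large:k:induction} and Corollary~\ref{cor:GatherOffLargek}: all messages reach the sink within $3D+6\Delta\lceil\log n\rceil=O(D+\Delta\log n)$ rounds counted from $\tau$. Adding the preprocessing cost gives total time $O(D+\Delta\log n+\min(D,\log n)\log^2 n)$ for part (1) and $O(D+\Delta\log n+\log^2 n)$ for part (2), the latter using the non-constructive acknowledged broadcasting scheme of Corollary~\ref{cor:broadcasting:ack}; the polylogarithmic summand is kept in general form since $D$ need not be $O(\log n)$.

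For the label length I would simply tally the components: the Size Learning labels are $O(\log\Delta)$ by the cited result; the acknowledged broadcasting labels are of constant length by Corollary~\ref{cor:broadcasting:ack} (plus $O(1)$ bits for $\level(v)\bmod 3$ and for the path/last-node marking used by acknowledgement); and the extra data carried for GatherDistr --- $\Delta$, $s(v)\in[0,\Delta-1]$, $s'(v)\in\{-1\}\cup[0,\Delta-1]$, $b(v)\in\{-1,0,1\}$, the leaf bit and the $F/S$ bit --- all fit in $O(\log\Delta)$ bits. Hence the overall scheme has length $O(\log\Delta)$, which matches the lower bound of Proposition~\ref{th:kgathering:lower}, establishing optimality of the length.

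I would regard the main obstacle to be the synchronisation/bootstrapping argument rather than the combinatorics of the collision-free schedule: schedule correctness is entirely encapsulated in Lemma~\ref{lem:kgather:large:k:induction} and Lemma~\ref{lem:gather:largek:distr}, whereas making all nodes agree on $\tau$ --- and ensuring the level information rides along the broadcast without inflating label length beyond $O(\log\Delta)$ --- is what forces the careful use of the ``mark the last transmitter'' trick together with the acknowledged broadcasting primitive. Once those are in place, the theorem is a bookkeeping assembly of the preceding lemmas and corollaries.
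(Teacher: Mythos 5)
Your proposal is correct and follows essentially the same route as the paper: the theorem is obtained by assembling the preprocessing (Size Learning plus acknowledged broadcasting, with the ``last transmitter'' synchronisation trick), the schedule correctness from Lemma~\ref{lem:kgather:large:k:induction} and Corollary~\ref{cor:GatherOffLargek}, the on-the-fly computation of $h_2$ from Lemma~\ref{lem:gather:largek:distr}, and a tally of the $O(\log\Delta)$-bit label components. No gaps.
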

%



\subsection{Application to gossiping}\label{subsec:gather:sum}

\commentt{
Yet another bound on the time for $k$-gathering directly follows from 
Lemma~7 in \cite{DBLP:conf/wdag/GanczorzJLP21}.
\begin{lemma}\cite{DBLP:conf/wdag/GanczorzJLP21}
	There is an algorithm for gathering working in time $O(D \Delta)$ with labels of size $O(\log \Delta)$
\end{lemma}
As the optimal size of labels for $k$-gathering is $\Omega\left(\min(\log k, \log \Delta)\right)$,
we obtain the following results.

\begin{corollary}
\begin{enumerate}
    \item 
    There exists a distributed algorithm for $k$-gathering in time $$O(D + \Delta\log n + \log^3 n)$$ in ad-hoc Radio Network Model with \textbf{efficient} labeling scheme of optimal length $O(\log\Delta)$.
    \item 
    There exists a distributed algorithm for $k$-gathering in time $$O(D + \Delta\log n + \log^2 n)$$ in ad-hoc Radio Network Model with labeling scheme of optimal length $O(\log\Delta)$.
    \end{enumerate}
\end{corollary}
}

Observe that the gossiping problem can be solved by an execution of a gathering algorithm for an arbitrarily chosen sink node $s$, followed by an execution of a broadcasting algorithm with the source node $s$. The only difficulty with such a composition is to synchronize both executions, so that they do not interfere and the latter one starts without significant delay. Note however that,
\begin{itemize}
    \item each node transmits a message exactly once during an execution of our gathering algorithm;
    \item one can store the degree of the sink node $s$ in the label of $s$, using extra $O(\log \Delta)$ bits.
\end{itemize}
Given the above observations it is clear that the sink node $s$ can be aware of the round in which all messages are already delivered to $s$. After this round, $s$ can start an execution of a broadcasting algorithm with the broadcast message consisting of the messages of all nodes. Consequently, our main result concerning gossiping is implied by Theorems~\ref{th:gather:largek}, \ref{th:fast:alg} and \ref{th:fastest:alg}

\begin{theorem}\label{C:gossip}

\begin{enumerate}
    
\item 
There exists a distributed algorithm for gossiping in {time $O(D + \Delta\log n + \min(D,\log n)\log^2 n)$,} using a constructive labeling scheme of optimal length $O(\log\Delta)$.

 \item 
There exists a distributed algorithm for gossiping in time $O(D + \Delta\log n + \log^2 n)$, using some labeling scheme of optimal length $O(\log\Delta)$.
    \end{enumerate}
\end{theorem}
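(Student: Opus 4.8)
The plan is to realize gossiping as the composition of two phases: first gather all messages at an arbitrarily chosen node $s$, then broadcast the aggregated message from $s$. The algorithmic content is entirely contained in the already-established gathering and broadcasting results; the only real work is to glue the two phases together so that they execute on disjoint sets of rounds, with every node knowing when each phase begins.

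First I would fix an arbitrary node $s$, marked by one bit of its label, to serve both as the sink for gathering and as the source for broadcasting, and run algorithm GatherDistr from Theorem~\ref{th:gather:largek}. Recall that GatherDistr is itself preceded by an execution of the \emph{Size Learning} algorithm (which disseminates an $O(\log n)$-bit message) followed by an acknowledged broadcast from $s$. I would piggy-back the value of $n$ (equivalently $\ceil{\log n}$) onto the message disseminated by \emph{Size Learning}, which fits within its $O(\log n)$-bit budget, so that at the end of the preamble every node knows $D$, $n$ and its own level; the value of $\Delta$ is already stored in every node's label. Moreover, by the acknowledgement mechanism built into GatherDistr, the message sent by $s$ after the \emph{Size Learning} phase carries the common round number $\tau$ at which the centralized part GatherCentr starts, so every node learns $\tau$. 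Since every node now knows $D$, $n$ and $\Delta$, every node can compute the same bound $R = 3D + 6\Delta\ceil{\log n}$ on the duration of GatherCentr (Corollary~\ref{cor:GatherOffLargek}). Hence all nodes agree that the gathering phase ends no later than round $\tau + R$, and by that round $s$ holds the multiset of all $n$ messages, which we combine into a single (long) message; this is permitted since a transmitted message may be of arbitrary size.

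Next, starting at round $\tau + R + 1$, I would have $s$ initiate the broadcasting algorithm with this aggregated message: for part~1 the constructive \fasta of Theorem~\ref{th:fast:alg}, and for part~2 the non-constructive \fastesta of Theorem~\ref{th:fastest:alg}. The labels needed for this broadcast are of constant length and have the same structure as those of the acknowledged broadcast already used inside GatherDistr with the same source $s$, so they can be shared. Since all nodes know $\tau$ and $R$, all of them know that the broadcast phase begins at round $\tau + R + 1$; in particular no node is still executing a gathering step once broadcasting starts, so the two phases do not interfere, and no acknowledgement is needed for this final broadcast because the gossiping task only requires that every node receive every message, not that it detect termination. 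Summing the bounds, the gossiping time equals the gathering time plus the broadcasting time, namely $O(D + \Delta\log n + \min(D,\log n)\log^2 n)$ for part~1 and $O(D + \Delta\log n + \log^2 n)$ for part~2. The label of each node is the concatenation of its GatherDistr label, of length $O(\log\Delta)$, with a constant number of extra bits, hence still of length $O(\log\Delta)$; this is optimal by Proposition~\ref{th:kgathering:lower}, since gossiping is at least as hard as gathering.

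The main obstacle is exactly the synchronization argument of the previous two paragraphs. Because the running time of the \emph{Size Learning} and acknowledged-broadcast preambles depends on $D$ and $n$ — quantities too large to store in labels of length $O(\log\Delta)$ — nodes cannot a priori agree on the schedule. The construction therefore hinges on the fact that the acknowledgement step of GatherDistr lets every node learn the single round $\tau$ from which the deterministic, schedule-driven computation proceeds, after which $D$, $n$ (shared) and $\Delta$ (in every label) suffice for every node to compute the rest of the timeline with no further communication, so that the gathering and broadcasting phases can be placed on cleanly separated round intervals.
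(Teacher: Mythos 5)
Your proposal is correct and uses the same decomposition as the paper: gather all messages at an arbitrary node $s$ using GatherDistr, then broadcast the aggregated message from $s$, with the label being the $O(\log\Delta)$-bit gathering label plus $O(1)$ extra bits, and optimality following from the gathering lower bound. The one place where you genuinely diverge is the synchronization of the two phases, which the paper itself identifies as the only difficulty. The paper's glue is purely local at $s$: it stores the degree of $s$ in the label of $s$ (an extra $O(\log\Delta)$ bits) and uses the fact that every node transmits exactly once during the gathering algorithm, so $s$ can detect the exact round in which it has received everything and immediately initiate the source-driven broadcast; no other node needs to know the phase boundary in advance. Your glue is global: you piggy-back $n$ onto the Size Learning message, use the round $\tau$ disseminated by the acknowledged broadcast inside GatherDistr, and have every node compute the common worst-case deadline $\tau+3D+6\Delta\lceil\log n\rceil$ from Corollary~\ref{cor:GatherOffLargek}. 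Both arguments are sound and give the same asymptotic bounds. Your version buys global agreement on the round at which the broadcast phase (and hence its block numbering) begins, which is arguably a cleaner interface to the block-structured broadcast algorithms, at the price of always waiting out the worst-case gathering time; the paper's version lets $s$ start broadcasting as soon as gathering actually completes and keeps all the synchronization information in a single node's label.
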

It should be stressed that the second time bound matches the time of the fastest known {\em centralized} gossiping algorithm. \cite{DBLP:journals/dc/GasieniecPX07}.
\commentt{
By combining Corollary \ref{C:gossip} with the results regarding the $k-$gathering we get the following corollaries.
\begin{corollary}
	There exists an algorithm for $k-$gathering in time $O(\min(n, D+k, D + \Delta\log n + \log^2 n)$ in Radio Network Model with optimal labels' length $O(\min(\log\Delta, \log k))$.
\end{corollary}
\begin{corollary}
	There exists an algorithm for $k-$broadcast in time $O(\min(n +\log^2 n, D+k +\log^2 n, D + \Delta\log n + \log^2 n)$ in Radio Network Model with optimal labels'  length $O(\min(\log\Delta, \log k))$.
\end{corollary}
}

\section{Conclusion and Open Problems}\label{sec:summary}
We presented distributed algorithms for the tasks of broadcasting and gossiping, which use labeling schemes of optimal length. In the case of broadcasting, this optimal length of a labeling scheme is constant and the time is optimal, even when compared to algorithms knowing the topology of the graph. For gossiping, our distributed algorithm uses a labeling scheme of optimal length $O(\log \Delta)$, and runs in the best known time for gossiping, even among algorithms knowing the topology of the graph.

Our results yield two interesting problems concerning the above communication tasks. The first problem concerns broadcasting. Is it possible to provide a constructive labeling scheme of constant length that supports broadcasting in time $O(D+\log^2n)$? (Our solution uses a non-constructive labeling scheme to get this optimal broadcasting time).

The second problem concerns gossiping. What is the time of the fastest gossiping algorithm using a labeling scheme of optimal length 
$O(\log \Delta)$, and does there exist a gossiping algorithm running in this time and using a constructive labeling scheme of optimal length? Note that, if our gossiping time could be improved, this would imply improving the best known gossiping time for centralized algorithms, i.e., those knowing the topology of the graph.

\bibliographystyle{abbrv}
\bibliography{references}

\end{document}
